\documentclass[12pt]{article}
\usepackage{amsmath,scalerel}
\usepackage{amsthm}
\usepackage{mathrsfs} 
\usepackage{amssymb}
\usepackage{graphicx}
\usepackage[shortlabels]{enumitem}
\usepackage{natbib}
\usepackage{url} 
\usepackage{accents}
\usepackage{booktabs}

\usepackage{tikz}
\usetikzlibrary{knots}
\usetikzlibrary{arrows.meta}
\usetikzlibrary{positioning}
\usetikzlibrary{shapes.geometric}
\usetikzlibrary{decorations.pathreplacing}
\usetikzlibrary{calc}
\usetikzlibrary{backgrounds}

\usepackage[many]{tcolorbox}

\usepackage{algorithm}
\usepackage{enumitem}

\usepackage[colorlinks=true,linkcolor=blue,citecolor=blue]{hyperref}

\newcommand{\blind}{1}

\addtolength{\oddsidemargin}{-.5in}%
\addtolength{\evensidemargin}{-1in}%
\addtolength{\textwidth}{1in}%
\addtolength{\textheight}{1.7in}%
\addtolength{\topmargin}{-1in}%

\newtheorem{definition}{Definition}[section]
\newtheorem{example}[definition]{Example}
\newtheorem{theorem}[definition]{Theorem}
\newtheorem{corollary}[definition]{Corollary}
\newtheorem{proposition}[definition]{Proposition}

\newcommand{\rmd}{\mathrm{d}}
\newcommand{\bseq}[2]{[#1\,{:}\,#2]}
\newcommand{\bseqo}[2]{[#1\,{:}\,#2)}
\newcommand{\fk}{\mathcal{M}}
\newcommand{\fkclass}{\mathscr{M}}
\newcommand{\knclass}{\mathscr{K}}

\newcommand{\kn}{\mathcal{K}}

\newcommand{\nn}[1]{\mathbb{N}_{#1}}
\newcommand{\simiid}{\overset{\text{iid}}{\sim}}

\begin{document}

\def\spacingset#1{\renewcommand{\baselinestretch}%
{#1}\small\normalsize} \spacingset{1}


\date{October 29, 2025}

\if1\blind
{
  \title{\bf Knots and variance ordering of sequential Monte Carlo algorithms}
  \author{Joshua J Bon\thanks{JJB was supported by the European Union under
the GA 101071601, through the 2023-2029 ERC Synergy grant OCEAN. AL was supported by the Engineering and Physical Sciences Research Council (EP/Y028783/1).}\hspace{.2cm}\\
    School of Mathematical Science, Adelaide University\\
    and \\
    Anthony Lee \\
     School of Mathematics, University of Bristol}
  \maketitle
} \fi

\if0\blind
{
  \bigskip
  \bigskip
  \bigskip
  \begin{center}
    {\LARGE\bf Knots and variance ordering of sequential Monte Carlo algorithms}
\end{center}
  \medskip
} \fi

\bigskip
\begin{abstract}
Sequential Monte Carlo algorithms, or particle filters, are widely used for approximating intractable integrals, particularly those arising in Bayesian inference and state-space models. We introduce a new variance reduction technique, the knot operator, which improves the efficiency of particle filters by incorporating potential function information into part, or all, of a transition kernel. The knot operator induces a partial ordering of Feynman--Kac models that implies an order on the asymptotic variance of particle filters, offering a new approach to algorithm design. We discuss connections to existing strategies for designing efficient particle filters, including model marginalisation. Our theory generalises such techniques and provides quantitative asymptotic variance ordering results.
We revisit the fully-adapted (auxiliary) particle filter using our theory of knots to show how a small modification guarantees an asymptotic variance ordering for all relevant test functions.  
\end{abstract}

\noindent%
{\it Keywords:} Particle filters, Feynman--Kac models, variance ordering, Rao--Blackwellisation
\vfill

\newpage 
\tableofcontents

\newpage
\spacingset{1.2} 
\section{Introduction}
\label{sec:intro}
Sequential Monte Carlo (SMC) algorithms, or particle filters, are malleable tools for estimating intractable integrals. These algorithms generate particle approximations for a sequence of probability measures on a path space, typically specified as a discrete-time Feynman--Kac model for which the normalising constant is intractable. Particle filters construct such approximations using Monte Carlo simulation, importance weighting, and resampling to propagate particles through this sequence of probability measures. 

SMC algorithms are used in diverse areas including signal processing \citep{gustafsson2002particle, doucet2005monte}, object tracking \citep{mihaylova2014overview,wang2017survey}, robotics \citep{thrun2002particle, stachniss2014particle}, econometrics \citep{lopes2011particle, creal2012survey, kantas2015particle}, weather forecasting \citep{fearnhead2018particle,van2019particle}, epidemiology \citep{endo2019introduction,temfack2024review}, and industrial prognostics \citep{jouin2016particle}. SMC algorithms are also used extensively in Bayesian posterior sampling \citep[i.e. SMC samplers,][]{del2006sequential,dai2022invitation} and in other difficult statistical tasks, such as rare event estimation \citep{cerou2012sequential}. Recently too, SMC algorithms have been used in areas of machine learning such as reinforcement learning \citep{lazaric2007reinforcement,lioutas2023critic,macfarlane2024spo} and denoising diffusion models \citep{cardoso2024monte,phillips2024particle} for example.

The extensive use of SMC algorithms across sciences and their ubiquity in computational statistics can be explained by the generality of their specification. SMC can be used in many different statistical problems and there are often several types of particle filters for a specific case. As SMC is fundamentally related to importance sampling, it is typical that several components of the algorithm can be altered, and accommodated for, by weighting without affecting the target probability measure. SMC samplers have further degrees of freedom as the path of distributions targeted can also be selected. Given this malleability, the design of efficient SMC algorithms remains an active area of research.

In the canonical SMC algorithm, the bootstrap particle filter \citep{gordon1993novel}, information is incorporated into the particle system according to the time of observation. Methods such as the auxiliary particle filter \citep{pitt1999filtering,johansen2008note}, look-ahead particle filters \citep{lin2013lookahead}, and model twisting methods \citep{guarniero2017iterated,heng2020controlled}, define new particle filters that incorporate varying degrees of future information into the current time step. Idealised versions of these algorithms reduce or eliminate variance in the particle system, producing more accurate particle approximations at the cost of increased computation. Typically, these idealised filters are not computationally tractable or are prohibitively expensive to calculate, so practical methods frequently employ approximations to an ideal filter.

In the simplest case, locally (i.e.\ conditional) optimal proposal distributions are used to define new particle filters where Monte Carlo simulation is adapted to the current potential information \citep{pitt1999filtering,doucet2000sequential}. The locally optimal proposal ensures the variance of the importance weights, conditional on a current particle state, is zero. The so-called fully-adapted (auxiliary) particle filter may reduce the overall variance in the particle system and has demonstrated good empirical performance. However, \citet{johansen2008note} note that such strategies do not guarantee an asymptotic variance reduction for a given test function.

When it is not possible to implement locally optimal proposals exactly, it may still be possible to find a proposal which reduces the variance of the importance weights.  Rao--Blackwellisation adapts a subset of the state space to current potential information, and has freedom in the choice of subset \citep{chen2000mixture,doucet2002particle,andrieu2002particle,schon2005marginalized}. Furthermore, heuristic approximations to these optimal proposals, or indeed Rao--Blackwellisation strategies, are also possible and often have good empirical performance \citep{doucet2000sequential}.

Extensions of adaptation using potential information beyond the current time have been explored in look-ahead methods \citep{lin2013lookahead} and typically employ approximations as the exact schemes are intractable. Recently, methods for iterative approximations to the optimal change of measure for the entire path space have seen interest. These rely on twisted Feynman--Kac models which generalise locally optimal proposals and look-ahead methods \citep{guarniero2017iterated,heng2020controlled}. In theory, applying a particle filter to an optimally twisted Feynman--Kac model results in a zero variance estimate of the model's normalising constant. Whilst in practice, iteratively estimating these models has been shown to dramatically reduce the variance for various test functions.

SMC samplers can also benefit from the aforementioned adaptation strategies but also have other degrees of freedom that are more prevalent in the literature on variance reduction \citep{del2006sequential}. Moreover, it is often more difficult to implement any exact adaptation strategies in SMC samplers as the weights take a more complex form, though \citet{dau2022waste} show this is possible in certain settings. Despite this, twisted Feynman--Kac models have been used successfully in SMC samplers \citep{heng2020controlled}. 

A major limitation in the study of optimal proposals, exact adaptation, and Rao--Blackwellisation is their theoretical underpinning. These methods are typically justified by appealing to minimisation of the conditional variance of the importance weights, or reducing variance of the joint weight over the entire path space. They do not give a theoretical guarantee on variance reductions for particle approximations of a test function, arising from a particle filter with resampling. Approximations and heuristics motivated by these methods are also subject to this unsatisfactory understanding. 
Further, methods using twisted Feynman--Kac models are only optimal for the normalising constant estimate of the model in the idealised version. Achieving this in practice is not guaranteed, though empirical performance can be strong at the cost of additional computation.

This paper contributes knots for Feynman--Kac models, a technique to reduce the asymptotic variance for particle approximations. We generalise and unify `full' adaptation and Rao--Blackwellisation, improving the theoretical underpinning for these methods, whilst providing a highly flexible strategy for designing new algorithms with guaranteed asymptotic variance reduction. Further, we resolve the discrepancy between the theoretical analysis of fully-adapted auxiliary particle filters and their demonstrated empirical performance. We demonstrate that a small change to particle filters with `full' adaptation can guarantee an asymptotic variance reduction, resolving the counterexample in \citet{johansen2008note}. 

Our techniques lead naturally to a partial ordering on general Feynman--Kac models whose order implies superiority in terms of the asymptotic variance of particle approximations --- a first for variance ordering of SMC algorithms by their underlying Feynman--Kac model. Further, we determine optimal knots and optimal sequences of knots to assist with SMC algorithm design.

The paper is structured as follows. We first review the background of SMC algorithms, including Feynman--Kac models and the asymptotic variance of particle approximations in Section~\ref{sec:background}. Knots are introduced in Section~\ref{sec:tyingknots} where we discuss their properties as well as invariances and equivalences of models before and after the application of a knot. Section~\ref{sec:var} contains our main variance reduction results for knots, whereas Section~\ref{sec:adaptedknots} discusses the optimality of knots. Section~\ref{sec:tyingterminalknots} provides variance and optimality results for terminal knots which require special treatment. Lastly, Section~\ref{sec:apps} contains examples including `full' adaptation and Rao--Blackwellisation as special cases of knots, and an illustrative example that is a hybrid between these two cases. Proofs of results are deferred to the appendix.

\section{Background}\label{sec:background}

Feynman--Kac models are path measures that can represent the evolution of particles generated by sequential Monte Carlo algorithms. A Feynman--Kac model can be constructed by weighting a Markov process. We consider algorithms for discrete-time models and hence restrict focus to a process specified by an initial distribution, a sequence of non-homogeneous Markov kernels, and potential functions for weights. Before describing these discrete-time Feynman--Kac models, we first introduce our notation.

\subsection{Notation}

Let $y_{i:j}$ be the vector $(y_i, y_{i+1},\ldots, y_j)$ when $i<j$ and $y_{k:k} = y_{k}$. For integers $n_0 < n_1$, we denote the set of integers as $\bseq{n_0}{n_1} = \{n_0,n_0+1,\ldots,n_1\}$ and write the set of natural numbers as $\nn0 = \{0,1,\ldots\}$ and $\nn1 = \{1,2,\ldots\}$. The function mapping any input to unit value is denoted by $1(\cdot)$ and the indicator function for a set $S$ is $1_S$. If $f$ and $g$ are functions, then $f\cdot g$ defines the map $x \mapsto f(x)g(x)$ whilst $f\otimes g = (x,y) \mapsto f(x)g(y)$. We denote the zero set for a function $f:\mathsf{X} \rightarrow \mathbb{R}$ as $S_0(f) = \{x \in \mathsf{X}:f(x)= 0\}$ .

Let $(\mathsf{X},\mathcal{X})$ be a measurable space. If $\mu$ is a measure on $(\mathsf{X},\mathcal{X})$ and the function ${\varphi:\mathsf{X} \rightarrow \mathbb{R}}$ then let $\mu(\varphi) = \int \varphi(x) \mu(\rmd x)$ and if $S \in \mathcal{X}$ then let $\mu(S) = \mu(1_S)$. We use $\mathcal{L}(\mu)$ to denote the class of functions that are integrable with respect to a measure $\mu$. 

In addition to $(\mathsf{X},\mathcal{X})$, let $(\mathsf{Y},\mathcal{Y})$ be a measurable space. When referring to a kernel we consider non-negative kernels, say $K:(\mathsf{X},\mathcal{Y}) \rightarrow [0,\infty)$. We define $K(\varphi)(\cdot) = \int K(\cdot,\rmd y) \varphi(y)$ for a function $\varphi: \mathsf{Y} \rightarrow \mathbb{R}$, $\mu K(\cdot) = \int \mu(\rmd x) K(x,\cdot)$ for a measure $\mu$ on $(\mathsf{X},\mathcal{X})$, and the tensor product as $(\mu \otimes K)(\rmd [x,y]) = \mu(\rmd x) K(x,\rmd y)$. The composition of two non-negative kernels, say~$L:(\mathsf{W},\mathcal{X}) \rightarrow [0,\infty)$ and $K$, is defined as $LK(w, \cdot) = \int L(w, \rmd x)K(x, \cdot)$ and is a right-associative operator, whilst the tensor product is $(L \otimes K)(w,\rmd [x,y]) = L(w,\rmd x) K(x,\rmd y)$. A Markov kernel is a non-negative kernel~$K$ such that $K(x,\mathsf{Y}) = 1$ for all $x\in\mathsf{X}$. We denote the identity kernel by~$\mathrm{Id}$ and the degenerate probability measure at $x$ as $\delta_x$.

We make use of twisted Markov kernels and will use the following superscript notation when describing these. If $K:(\mathsf{X},\mathcal{Y}) \rightarrow [0,1]$ is a Markov kernel and $H:\mathsf{Y} \rightarrow [0,\infty)$ is integrable w.r.t.\ $K(x,\cdot)$ for all $x\in\mathsf{X}$, let $K^H:(\mathsf{X},\mathcal{Y}) \rightarrow [0,1]$ be defined such that
\begin{equation*}
    K^H(x,\rmd y) = 
    \begin{cases}
        \frac{K(x,\rmd y)H(y)}{K(H)(x)} &~\text{if}~ K(H)(x)>0,\\
        K(x,\rmd y)  &~\text{if}~ K(H)(x)=0.
    \end{cases}
\end{equation*}
The choice to use $K$ when $K(H)(x)=0$ is somewhat arbitrary, but useful for formalising our mathematical arguments. Particle filter approximations do not depend on this choice. Similarly, if $\mu$ is a probability measure on $(\mathsf{Y},\mathcal{Y})$ then $\mu^H$ will be defined as $\mu^H(\rmd y) = \frac{\mu(\rmd y)H(y)}{\mu(H)}$ if~$\mu(H)>0$ and $\mu^H(\rmd y) = \mu(\rmd y)$ if~$\mu(H)=0$. If~$K$ is the identity kernel or degenerate probability measure, we take $K^H=K$ by convention.

The categorical distribution is denoted by $\mathcal{C}(q_{1},q_{2}\ldots,q_{m})$ defined with positive weights $(q_{1},q_{2}\ldots,q_{m})$ on support $\bseq{1}{m}$ with probabilities~$p_i = \left(\sum_{j=1}^{m}q_{j}\right)^{-1}q_{i}$ for $i \in \bseq{1}{m}$.

\subsection{Discrete-time Feynman--Kac models}\label{sec:fkbackground}

To define a Discrete-time Feynman--Kac model, we require a notion of compatible kernels, we refer to as composability. Composability is also used when we define knots in Section~\ref{sec:tyknot}.
\begin{definition}[Composability]
Let $\mathsf{Y}_p$ be a space and $(\mathsf{X}_p,\mathcal{X}_p)$ be a measurable space for $p\in \{1,2\}$. Let $M_1:(\mathsf{Y}_1, \mathcal{X}_1) \rightarrow [0,1]$ be a non-negative kernel (or measure, $M_1: \mathcal{X}_1 \rightarrow [0,1]$) and $M_2:(\mathsf{Y}_2, \mathcal{X}_2) \rightarrow [0,1]$ be a non-negative kernel. If $\mathsf{X}_1 \subseteq \mathsf{Y}_2$ then $M_1M_2$ is well-defined and we say that $M_1$ and $M_2$ are composable. 
\end{definition}

Consider measurable spaces $(\mathsf{X}_p,\mathcal{X}_p)$ for $p \in \bseq{0}{n}$, then let $M_0:\mathcal{X}_0\rightarrow [0,1]$ be a probability measure, $M_p: (\mathsf{X}_{p-1}, \mathcal{X}_p) \rightarrow [0,1]$ for $p \in \bseq{1}{n}$ be Markov kernels, and consider potential functions $G_p:\mathsf{X}_p\rightarrow [0,\infty)$ that are integrable with respect to $M_p(x,\cdot)$ for all $x\in\mathsf{X}_{p-1}$ and $p \in \bseq{0}{n}$.

\begin{definition}[Discrete-time Feynman--Kac model]\label{def:fkmodel}
A predictive Feynman--Kac model with horizon $n \in \nn{0}$ consists of an initial distribution $M_0$, Markov kernels $M_p$ for $p 
\in \bseq{1}{n}$, and potential functions $G_p$ for $p 
\in \bseq{0}{n-1}$ such that $M_{p-1}$ and $M_{p}$ are composable for $p \in \bseq{1}{n}$. In addition to the above, an updated Feynman--Kac model includes a potential $G_n$ at the terminal time.
\end{definition}
We will refer to a generic updated Feynman--Kac model with calligraphic notation $\fk$ or specifically the collection $(M_{0:n},G_{0:n})$. This specification includes both predictive and updated Feynman--Kac models, by taking $G_n = 1$ for the former. We will use $\fkclass_{n}$ to denote the class of discrete-time Feynman--Kac models with horizon $n$.

The initial measure, kernels, and potentials define a sequence of predictive measures starting with $\gamma_{0}=M_0$, and evolving by
\begin{equation}\label{eq:marginalrec}
    \gamma_{p+1} = \int \gamma_{p}(\rmd x_{p}) G_{p}(x_{p})M_{p+1}(x_{p}, \cdot)
\end{equation}
for $p \in \bseq{0}{n-1}$. The terminal measure can be thought of as the expectation over a path space, that is
\begin{equation}\label{eq:pathspaceexp}
    \gamma_n(\varphi) = \mathbb{E}\left[\varphi(X_n)\prod_{p=1}^{n-1} G_p(X_p) \right]
\end{equation}
with respect to a non-homogeneous Markov chain, specified by $X_0\sim M_0$ and $X_p\sim M_p(X_{p-1},\cdot)$ for $p \in \bseq{1}{n}$.

In comparison, the time-$p$ updated measures use potential information at time $p$ and are defined by $\hat\gamma_p(\rmd x_p) = \gamma_p(\rmd x_p) G_p(x_p)$ for $p \in \bseq{0}{n}$. The predictive and updated measures have normalised counterparts
\begin{equation*}
    \eta_p(\rmd x_p) = \frac{\gamma_p(\rmd x_p)}{\gamma_p(1)}, \quad \hat\eta_p(\rmd x_p) = \frac{\hat\gamma_p(\rmd x_p)}{\hat\gamma_p(1)},
\end{equation*}
which are probability measures for $p \in \bseq{0}{n}$. The path space representation of the updated terminal measure can be expressed by considering $\hat\gamma_n(\varphi) = \gamma_n(G_n\cdot \varphi)$ using \eqref{eq:pathspaceexp}.  

Lastly, an important quantity for the asymptotic variance calculation are the $Q_{p,n}$ kernels are defined as follows. Consider kernels $Q_{p}(x_{p-1}, \rmd x_p) = G_{p-1}(x_{p-1})M_p(x_{p-1}, \rmd x_p)$ for $p \in \bseq{1}{n}$ then let $Q_{n,n} = \mathrm{Id}$, $Q_{n-1,n} = Q_{n}$, and continue with $Q_{p,n} = Q_{p+1}\cdots Q_n = Q_{p+1}Q_{p+1,n}$ for $p \in \bseq{0}{n-2}$. In contrast to the expectation presented in \eqref{eq:pathspaceexp}, the $Q_{p,n}$ kernels are conditional expectations on the same path space, that is for $p \in \bseq{0}{n-1}$
\begin{equation*}
   Q_{p,n}(\varphi)(x_p) = \mathbb{E}\left[ \varphi(X_n) \prod_{t=p}^n G_t(X_t) \;\bigg{\vert}\; X_p=x_p \right].
\end{equation*}
In other words, at time $p$, the $Q_{p,n}$ complete the model in the sense that $\gamma_pQ_{p,n} = \gamma_n$.

\subsection{SMC and particle filters}

Sequential Monte Carlo algorithms, and in particular particle filters, approximate Feynman--Kac models by iteratively generating collections of points, denoted by $\zeta_{p}^{i}$ for~$i \in \bseq{1}{N}$, to approximate the sequence of probability measures~$\eta_{p}$ for $p \in \bseq{0}{n}$. We consider the bootstrap particle filter \citep{gordon1993novel} to approximate the terminal measure $\eta_n$ or its updated counterpart, which we simply refer to as a particle filter and describe in Algorithm~\ref{alg:bpf}. Different particle filters can be achieved by varying the underlying Feynman--Kac model whilst preserving the targets of the particle approximations of interest.
\begin{algorithm}
\caption{A Particle Filter}
\label{alg:bpf}
\textbf{Input:} Feynman--Kac model $\fk = (M_{0:n},G_{0:n})$ \begin{enumerate}[topsep=0pt,itemsep=-1ex,partopsep=0ex,parsep=0ex]
    \item Sample initial $\zeta^{i}_{0} \simiid M_{0}$ for $i \in \bseq{1}{N}$
    \item For each time $p \in [1:n]$
    \begin{enumerate}[label=\alph*.,topsep=0pt,itemsep=-1ex,partopsep=0ex,parsep=1ex]
        \item Sample ancestors $A^{i}_{p-1} \sim \mathcal{C}\left(G_{p-1}(\zeta^{1}_{p-1}), \ldots, G_{p-1}(\zeta^{N}_{p-1}) \right)$ for $i \in \bseq{1}{N}$
        \item Sample prediction $\zeta^{i}_{p} \sim M_{p}(\zeta^{A^{i}_{p-1}}_{p-1},\cdot)$ for $i \in \bseq{1}{N}$
    \end{enumerate}
\end{enumerate}
\textbf{Output:} Terminal particles $\zeta^{1:N}_n$
\end{algorithm}
After running a particle filter the approximate terminal predictive measures are
\begin{equation*}
    \eta_{n}^{N} = \frac{1}{N}\sum_{i=1}^{N}\delta_{\zeta^{i}_n}, \quad \gamma_{n}^{N} = \left\{\prod_{t=0}^{n-1} \eta_t^{N}(G_t)\right\}\eta_{n}^{N}.
\end{equation*}
Similarly, the updated terminal measures are
\begin{align*}
    \hat\eta_{n}^{N} = \sum_{i=1}^{N}W_{n}^{i}\delta_{\zeta^{i}_p}, \quad \hat\gamma_{n}^{N} = \left\{\prod_{t=0}^{n} \eta^{N}_t(G_t)\right\}\hat\eta_{n}^{N},
\end{align*}
where $W_{n}^{i} =  \frac{G_n(\zeta^{i}_n)}{\sum_{j=1}^N G_n(\zeta^{j}_n)}$.

\subsection{Asymptotic variance of particle approximations}\label{sec:asyvar}

The canonical asymptotic variance map~${\sigma^2: \mathcal{L}(\eta_n) \rightarrow [0,\infty]}$ is defined as
\begin{equation}
    \label{eq:asyvarpred}
    \sigma^2(\varphi) = \sum_{p=0}^n v_{p,n}(\varphi),\quad v_{p,n}(\varphi) = \frac{\gamma_p(1)\gamma_p(Q_{p,n}(\varphi)^2)}{\gamma_n(1)^2} - \eta_n(\varphi)^2,
\end{equation}
for a particle filter following Algorithm~\ref{alg:bpf}.
Under various conditions, particle approximations relate to this variance by way of Central Limit Theorems \citep[CLT,][]{del2004feynman}. 
CLTs establish that the terminal predictive measure,  ${\gamma_n^N(\varphi) \rightarrow \gamma_n(\varphi)}$ in probability as $N\rightarrow \infty$, and
\begin{equation*}
\sqrt{N}\left(\frac{\gamma_n^N(\varphi) - \gamma_n(\varphi)}{\gamma_n(1)}\right) \rightarrow \mathcal{N}\left(0, \sigma^2(\varphi) \right),
\end{equation*}
in distribution. Whilst for the terminal predictive probability measure, ${\eta_n^N(\varphi) \rightarrow \eta_n(\varphi)}$
in probability as $N\rightarrow \infty$, and
$\sqrt{N}\left(\eta_n^N(\varphi) - \eta_n(\varphi)\right) \rightarrow \mathcal{N}\left(0, \sigma^2(\varphi - \eta_n(\varphi)) \right)$ in distribution. It is also possible to establish convergence in moments \citep{lee2018variance}. In this case, ${\gamma_n^N(\varphi) \rightarrow \gamma_n(\varphi)}$ and ${\eta_n^N(\varphi) \rightarrow \eta_n(\varphi)}$ almost surely, whilst 
\begin{equation*}
    N \,\mathrm{Var}\left\{\gamma_n^N(\varphi)/\gamma_n(1)\right\} \rightarrow \sigma^2(\varphi), \quad
    N \,\mathbb{E}\left[\left\{\eta_n^N(\varphi)-\eta_n(\varphi)\right\}^2 \right]\rightarrow \sigma^2(\varphi - \eta_n(\varphi)).
\end{equation*}
When $\sigma^2(\varphi)$ is finite, CLTs and convergence statements motivate the use of the asymptotic variance map to characterise the theoretical performance of a particle filter. 

We analyse the asymptotic variance map directly, assuming that $\sigma^2(\varphi)$ is finite, without additional assumptions. As such, for the predictive measure of a given Feynman--Kac model~$\fk$, we will consider functions $\varphi \in \mathcal{F}(\fk)$ such that
$\mathcal{F}(\fk) = \{\varphi \in \mathcal{L}(\eta_n): \sigma^2(\varphi)<\infty\}$.
To interpret our results, we will require a CLT to hold for the model $\fk$ in question. For general Feynman--Kac models, for example, bounded potential functions and a bounded test function would suffice, but this need not be the case.

When applicable, analogous CLTs also hold for the updated marginal (probability) measures by using $\hat\sigma^2(\varphi)$ and $\hat\sigma^2(\varphi - \hat\eta_n(\varphi))$ in place of $\sigma^2(\varphi)$ and $\sigma^2(\varphi - \eta_n(\varphi))$ respectively, where 
\begin{equation}
    \label{eq:asyvarupdate}
    \hat\sigma^2(\varphi) = \sum_{p=0}^n \hat v_{p,n}(\varphi),\quad \hat v_{p,n}(\varphi) = \frac{v_{p,n}(G_n \cdot \varphi)}{\eta_n(G_n)^2}.
\end{equation}
For updated measures, our analysis then considers the class of test functions
$\hat{\mathcal{F}}(\fk) = \{\varphi \in \mathcal{L}(\hat\eta_n): \hat\sigma^2(\varphi)<\infty\}$. In our discussions, we will refer to functions in the classes~$\mathcal{F}(\fk)$ and~$\hat{\mathcal{F}}(\fk)$ as relevant test functions.

\section{Tying knots in Feynman--Kac models}\label{sec:tyingknots}

In order to present our procedure for reducing the variance of SMC algorithms, we  must define how the underlying Feynman--Kac model is modified. To this end we define a \textit{knot}, encoding the details of the modification, and the \textit{knot operator} which describes how a knot is applied to a Feynman--Kac model.

A knot is specified by a time, $t$, and composable Markov kernels, $R$ and $K$, which can be used to modify suitable Feynman--Kac models whilst preserving the terminal measure. In one view, a $t$-knot modifies a Feynman--Kac model by partially adapting the Markov kernel $M_t$ to potential information at time $t$, though repeated applications of knots allow adaptation to potential information beyond just the next time.

We will introduce knots formally in Section~\ref{sec:tyknot} and a convenient notion for their simultaneous application, \textit{knotsets}, in Section~\ref{sec:tyknotset}.
Knots at time~$n$ are considered later, in Section~\ref{sec:tyingterminalknots}, as \textit{terminal knot}s require special treatment and have a smaller scope compared to regular knots (time $t<n$).

\subsection{Knots}\label{sec:tyknot}

We begin with a formal definition of a knot and what it means for a knot to be compatible with a Feynman--Kac model.
\begin{definition}[Knot] \label{def:knot}
A knot is a triple $\kn=(t,R,K)$, consisting of a time $t \in \nn{0}$, and an ordered pair of composable Markov kernels $R$ and $K$. Note that when $t=0$, $R$ is a probability distribution.
\end{definition}
For compactness we will often refer to knots abstractly as $\kn$, meaning $\kn = (t,R,K)$ for some $t$, $R$ and $K$, and when emphasis on the time component of the knot is required, we will refer to $\kn$ a $t$-knot.
\begin{definition}[Knot compatibility] \label{def:kncompat}
For $t<n$, a knot $\kn=(t,R,K)$ is compatible with a Feynman--Kac model $\fk=(M_{0:n},G_{0:n})$ if $M_t = RK$.
\end{definition}

To describe how knots act on Feynman--Kac models, we first consider the domain of the requisite operator. Recall the set of all Feynman--Kac models with horizon $n$ as $\fkclass_{n}$. If we let $\knclass$ be the set of all possible knots, we can define the set of all compatible knots and Feynman--Kac models as 
\begin{equation}\label{eq:knotmodelcompat}
        \mathscr{D}_{n} = \{(\kn,\fk) \in \knclass \times \fkclass_{n}: \kn \text{ and } \fk \text{ are compatible}\}
\end{equation}
for $n \in \nn{1}$. We define the knot operator as a right-associative operator acting on elements of this set.
\begin{definition}[Knot operator] \label{def:knotop}
    The knot operator maps compatible knot-model pairs to the space of Feynman--Kac models for horizon $n \in \nn{1}$ and is denoted by ${\ast:\mathscr{D}_{n} \rightarrow \fkclass_{n}}$. We use the infix notation $\kn \ast \fk$ for convenience. For a knot $\kn = (t,R,K)$ and model $\fk = (M_{0:n},G_{0:n})$, the knot-model $\kn \ast \fk = (M_{0:n}^\ast,G_{0:n}^\ast)$ where
    \begin{equation*}
        M_{t}^\ast = R, \quad G_t^\ast = K(G_t), \quad M_{t+1}^\ast = K^{G_t} M_{t+1}.
    \end{equation*}
    The remaining Markov kernels and potential functions are identical to the original model, that is $M_{p}^\ast = M_{p}$ for $p \notin \{t,t+1\}$ and $G_{p}^\ast = G_{p}$ for $p \neq t$.
\end{definition}
The knot operator preserves the terminal predictive and updated measures of the Feynman--Kac model, the predictive and updated path measures (see Proposition~\ref{prop:invar}). Besides preserving key measures, the knot operator preserves the horizon of the Feynman--Kac model it is applied to, which is crucial for our comparisons of the asymptotic variance of particle estimates with and without knots. Figure~\ref{fig:knot} illustrates the transformation of a Feynman--Kac model to a new model by a knot.

\begin{figure}[ht]
    \centering
    \tcbox[left=0.5cm, right=2.25cm]{
        \begin{tikzpicture}
\tikzset{
  .../.tip={[sep=0pt 1]
    Round Cap[]. Circle[length=0pt 1] Circle[length=0pt 1] Circle[length=0pt 1, sep=0pt]}
    }F
\tikzset{node distance = 1.5cm and 1.5cm}
    
    \draw (0,0) node (Gl1) {$G_t$};
    \node (Kl1) [left=of Gl1] {$M_t$};
    \node (M1l1) [right=of Gl1] {$M_{t+1}$};

    \node (Kl2) [below=of Kl1] {$R_{t}$};
    \node (Gghost) [below=of Gl1] {\phantom{$G_t$}};
    \node (Gl2) [left=of Gghost, yshift=7mm, xshift=13mm] {$G_{t}$};
    \node (Ul2) [right=of Gghost, yshift=7mm, xshift=-13mm] {$K_{t}$};
    \node (M1l2) [below=of M1l1] {$M_{t+1}$};

    \node (Kl3) [below=of Kl2] {$R_{t}$};
    \node (Gl3) [right=of Kl3] {$G_{t}^\prime$};
    \node (M1l3) [right=of Gl3] {$K_{t}^{G_t} M_{t+1}$};

    \draw[] (Kl1.south) -- (M1l1.south);
    \draw[transform canvas={xshift=-10mm}, dashed] (Kl1.south) -- +(0:-10mm);
    \draw[] (Kl1.south) -- +(0:-10mm);
    \draw[transform canvas={xshift=10mm}, dashed] (M1l1.south) -- +(0:10mm);
    \draw[] (M1l1.south) -- +(0:10mm);

    \draw[transform canvas={xshift=-10mm}, dashed] (Kl2.south) -- +(0:-10mm);
    \draw[] (Kl2.south) -- +(0:-10mm);
    \draw[transform canvas={xshift=10mm}, dashed] (M1l2.south) -- +(0:10mm);
    \draw[] (M1l2.south) -- +(0:10mm);

    \begin{knot} [clip width=4, background color=black!5!white]
      \strand (Kl2.south) to [out=right, in=right, looseness=1.5] ([xshift=0mm]$(Gl2)!0.5!(Ul2)$);
      \strand ([xshift=0mm]$(Gl2)!0.5!(Ul2)$) to [out=left, in=left, looseness=1.5] (M1l2.south);
    \end{knot}

    \draw[] (Kl3.south) -- (M1l3.south);
    \draw[transform canvas={xshift=-10mm}, dashed] (Kl3.south) -- +(0:-10mm);
    \draw[] (Kl3.south) -- +(0:-10mm);
    \draw[transform canvas={xshift=10mm}, dashed] (M1l3.south) -- +(0:10mm);
    \draw[] (M1l3.south) -- +(0:10mm);

    \node (FK) [left=of Kl1,yshift=-3mm,xshift=-4mm] {$\fk$};
    \node (FKghost) [below=of FK] {\phantom{$\fk$}};
    \node (KFK) [below=of FKghost,yshift=-2mm] {$\kn\ast\fk$};

    \draw[-Latex, line width=0.5mm] (FK.south) -- (KFK.north);

    \node (d0) [above=of Gl1, yshift=-12mm] {$~$};
    \node (d1) [below=of Gl3, yshift=10mm] {$~$};

\end{tikzpicture}
    }
    \caption{Effect of a knot $\kn = (t,R_t,K_t)$ at time $t \in \bseq{0}{n-1}$ on model $\fk$ with $M_t=R_t K_t$. Note that $G^\prime_t = K_t(G_t)$.}
    \label{fig:knot}
\end{figure}
To motivate our consideration of knots, we state a simplified  variance reduction theorem.
\begin{theorem}[Variance reduction from a knot] \label{th:knot}
    Consider models $\fk$ and ${\fk^\ast = \kn \ast \fk}$ for a knot $\kn$. Let~$\fk$ and $\fk^\ast$ have terminal measures $\gamma_n$ and $\gamma_n^\ast$ with asymptotic variance maps $\sigma^2$ and~$\sigma_\ast^2$ respectively. If~$\varphi \in \mathcal{F}(\fk)$ then the terminal measures are equivalent, $\gamma_n(\varphi) = \gamma^\ast_n(\varphi)$, whilst the variances satisfy ${\sigma_\ast^2(\varphi) \leq \sigma^2(\varphi)}$.
\end{theorem}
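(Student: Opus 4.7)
The plan is to exploit the highly local nature of the knot, which only alters the model at times $t$ and $t+1$ and is engineered so that $\gamma_t^\ast K = \gamma_t$. Working from the decomposition $\sigma^2(\varphi) = \sum_{p=0}^n v_{p,n}(\varphi)$, I would show that every term but the $p=t$ one is unchanged, and that Jensen's inequality takes care of the time-$t$ term. First I would invoke the yet-to-be-proved Proposition~\ref{prop:invar} (terminal invariance under the knot operator) to get $\gamma_n^\ast = \gamma_n$, hence $\eta_n^\ast = \eta_n$, so that the $\eta_n(\varphi)^2$ pieces of $v_{p,n}$ and $v_{p,n}^\ast$ coincide.

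Second, I would track how $\gamma_p^\ast$ and $Q_{p,n}^\ast$ transform. The key algebraic identity, which I would establish once and reuse, is
\begin{equation*}
K(G_t)(x)\, K^{G_t}(x,\rmd y) \;=\; K(x,\rmd y)\,G_t(y),
\end{equation*}
valid on the set $\{K(G_t)>0\}$ and trivially (both sides vanish) on its complement, so that $K(G_t)\cdot K^{G_t}(\varphi) = K(G_t\cdot \varphi)$ for any $\varphi$. For $p \leq t-1$ the initial measure, kernels, and potentials are unchanged, so $\gamma_p^\ast = \gamma_p$; combining the identity above with $Q_{t,n} = Q_{t+1} Q_{t+1,n}$ gives $Q_{t,n}^\ast(\varphi) = K(Q_{t,n}(\varphi))$, and then $Q_{t-1,n}^\ast = G_{t-1}\cdot R K Q_{t,n} = G_{t-1}\cdot M_t Q_{t,n} = Q_{t-1,n}$, so by backward induction $Q_{p,n}^\ast = Q_{p,n}$ for all $p < t$. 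For $p \geq t+1$, the same identity yields $\gamma_{t+1}^\ast = \gamma_t^\ast K(G_t) K^{G_t} M_{t+1} = (\gamma_t^\ast K) G_t M_{t+1} = \gamma_{t+1}$, and since nothing after time $t+1$ is touched, $\gamma_p^\ast = \gamma_p$ and $Q_{p,n}^\ast = Q_{p,n}$. Thus $v_{p,n}^\ast(\varphi) = v_{p,n}(\varphi)$ for every $p \neq t$.

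Third, the only nontrivial comparison is at $p=t$. Writing $\psi = Q_{t,n}(\varphi)$, we have $Q_{t,n}^\ast(\varphi) = K\psi$ and $\gamma_t^\ast(1) = (\gamma_t^\ast K)(1) = \gamma_t(1)$ since $K$ is Markov. Conditional Jensen gives $(K\psi)(x)^2 \leq K(\psi^2)(x)$ pointwise, whence
\begin{equation*}
\gamma_t^\ast(1)\,\gamma_t^\ast\bigl((K\psi)^2\bigr) \;\leq\; \gamma_t(1)\,(\gamma_t^\ast K)(\psi^2) \;=\; \gamma_t(1)\,\gamma_t(\psi^2).
\end{equation*}
Dividing by $\gamma_n(1)^2 = \gamma_n^\ast(1)^2$ and subtracting the common $\eta_n(\varphi)^2$ yields $v_{t,n}^\ast(\varphi) \leq v_{t,n}(\varphi)$; summing over $p$ gives $\sigma_\ast^2(\varphi) \leq \sigma^2(\varphi)$ and, in particular, preserves finiteness so $\varphi \in \mathcal{F}(\fk^\ast)$.

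The main obstacle is bookkeeping rather than mathematics: one must carefully verify the $K(G_t)\cdot K^{G_t} = K(G_t\cdot \,\cdot\,)$ identity on the exceptional set $\{K(G_t)=0\}$, where the convention $K^{G_t}=K$ is used, and then systematically propagate the consequences through $\gamma_p^\ast$ and $Q_{p,n}^\ast$ on both sides of time $t$. Edge cases (such as $t=0$, where $R$ is a probability measure rather than a kernel, and the possibility $t+1 = n$ interacting with a terminal potential) should be treated separately but follow the same pattern. Once these routine verifications are in place, the variance reduction itself is a one-line application of Jensen's inequality.
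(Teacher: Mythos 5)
Your proposal is correct and follows essentially the same route as the paper: the paper proves the general knotset version (Theorem~\ref{th:var}) via the untwisting identity $K(G_t)\cdot K^{G_t}(\varphi)=K(G_t\cdot\varphi)$ (its Proposition~\ref{prop:untwist}), the relations $Q_{p,n}^\ast=K_pQ_{p,n}$ and $\gamma_p^\ast K_p=\gamma_p$ (Propositions~\ref{prop:Qpn} and~\ref{prop:predequiv}), and a single application of Jensen's inequality, exactly as you do for the single-knot case. Your careful handling of the exceptional set $\{K(G_t)=0\}$ and the observation that all $v_{p,n}$ terms with $p\neq t$ are unchanged match the paper's argument term for term.
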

It is simple to show that Theorem~\ref{th:knot} implies the same asymptotic variance inequality for the marginal updated measures as well as their normalised counterparts. As such, a model with a knot has terminal time particle approximations with better variance properties than the original model. We defer our proof to Theorem~\ref{th:var} which considers the general case with multiple knots.

The simplest possible knot is the trivial knot, in the sense that applying a trivial knot to a Feynman--Kac model does not change the model. The trivial knot is described in Example~\ref{ex:trivialknot}.
\begin{example}[Trivial knot] \label{ex:trivialknot}
    Consider a model $\fk = (M_{0:n},G_{0:n})$ for $n \in \nn1$ and knot~$\kn$ at time $t \in \bseq{0}{n-1}$. If $\kn=(t,M_t,\mathrm{Id})$ then it is trivial in the sense that $\kn \ast \fk = \fk$. 
\end{example}
Trivial knots do not change how the information at time $t$ (the potential $G_t$) is incorporated into the Feynman--Kac model and do not change the asymptotic variance. On the other hand, we can define an adapted knot which fully adapts $M_t$ to the information at time $t$. In fact, any knot can be thought of living on a spectrum between a trivial knot and an adapted knot. We discuss the optimality of adapted knots in Section~\ref{sec:adaptedknots}.
\begin{example}[Adapted knot] \label{ex:adaptedknot}
Consider a model $\fk = (M_{0:n},G_{0:n})$ for $n \in \nn1$ and knot~$\kn$ at time $t \in \bseq{1}{n-1}$. If $\kn=(t,\mathrm{Id},M_t)$ we say it is an adapted knot for $\fk$. The model~$\kn \ast \fk$ has new kernels and potential function, $M_t^\ast = \mathrm{Id}$, $M_{t+1}^\ast = M_t^{G_t} M_{t+1}$, and~${G_{t}^\ast = M_t(G_{t})}$. For $t=0$, an adapted knot has the form $\kn=(0,\delta_0,K_0)$ where the kernel~$K_0$ satisfies $K_0(0,\cdot) = M_0$, whilst $M_0^\ast = \delta_0$, $M_{1}^\ast = M_0^{G_0} M_{1}$, and~${G_{0}^\ast = M_0(G_{0})}$.
\end{example}
At time $t$, an adapted knot results in a kernel of the form $M_{t+1}^\ast = M_t^{G_t} M_{t+1}$ where $M_t^{G_t}$ is now adapted to the information in $G_t$. One might argue that a more natural representation of such an adaptation would use $M_t^{G_t}$ as the $t$th kernel of the new model, not as a component of the $(t+1)$th kernel. However, our definition of knots is precisely what allows for an ordering of the asymptotic variance terms.

\subsection{Knotsets}\label{sec:tyknotset}

A knot is the elementary operator we consider, in the sense that it is the minimal modification of a Feynman--Kac model for which we can prove a variance reduction. However, it is natural to consider a set of knots acting on many time points of a model. Further, knots can be applied sequentially, but it is convenient to consider a set of knots that can be applied simultaneously. As such, we will now define a generalisation of knots and their associated operator, the \textit{knotset} and \textit{knotset operator}.

\begin{definition}[Knotsets and compatibility] \label{def:knotset}
A knotset $\kn=(R_{0:n-1},K_{0:n-1})$ is specified by $n$ knots of the form $\kn_p=(p,R_p,K_p)$ for $p \in \bseq{0}{n-1}$. Such a knotset is compatible with $\fk \in \fkclass_n$ if each $(p,R_p,K_p)$-knot is compatible with $\fk$ for $p\in\bseq{0}{n-1}$.
\end{definition}
 
\begin{definition}[Knotset operator] \label{def:knotsetop}
Let $\kn=(R_{0:n-1},K_{0:n-1})$ be a knotset compatible with~$\fk \in \fkclass_n$. The knotset operation is defined as $\kn \ast \fk = \kn_{0} \ast \kn_{1} \ast \cdots \ast \kn_{n-1} \ast \fk$ where $\kn_p=(p,R_p,K_p)$ for $p \in \bseq{0}{n-1}$.
\end{definition}
The knotset operator is defined to apply $n$ knots, with unique times, in descending order so that the compatibility condition for each knot does not change after each successive knot application. This design also allows us to frame knotsets as a simultaneously application of $n$ knots to a model, which is presented next. 
\begin{proposition}[Knot-model] \label{prop:knotsetop}
    If $\kn = (R_{0:n-1},K_{0:n-1})$ is a knotset compatible with model $\fk = (M_{0:n},G_{0:n})$, then $\kn \ast \fk = (M_{0:n}^\ast,G_{0:n}^\ast)$ satisfies
    \begin{alignat*}{3}
        M_0^\ast &= R_0, &\quad G_0^\ast &= K_0(G_0), & & \\
        M_{p}^\ast &= K_{p-1}^{G_{p-1}}R_p, &\quad G_p^\ast &= K_p(G_p), & &\quad p \in \bseq{1}{n-1} \\
        M_{n}^\ast &= K_{n-1}^{G_{n-1}} M_{n}, &\quad G_n^\ast &= G_n. &
    \end{alignat*}
\end{proposition}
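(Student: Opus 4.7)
The plan is to proceed by backward induction on the index of the knot being applied, tracking how the model evolves under the sequential composition $\kn_0 \ast \kn_1 \ast \cdots \ast \kn_{n-1} \ast \fk$ specified in Definition~\ref{def:knotsetop}. For $k \in \bseq{0}{n-1}$, write $\fk^{(k)} = \kn_k \ast \cdots \ast \kn_{n-1} \ast \fk$ with components $(M_{0:n}^{(k)}, G_{0:n}^{(k)})$, so that $\fk^{(0)} = \kn \ast \fk$. The inductive hypothesis I would carry is that $\fk^{(k)}$ agrees with the statement of the proposition at positions $p \in \bseq{k+1}{n}$, has $M_k^{(k)} = R_k$ and $G_k^{(k)} = K_k(G_k)$ at position $k$, and has $M_p^{(k)} = M_p$, $G_p^{(k)} = G_p$ for $p < k$. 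Setting $k = 0$ in this hypothesis recovers the claim.

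For the base case $k = n-1$, I would apply Definition~\ref{def:knotop} directly to $\fk$ with the single knot $\kn_{n-1}$. The compatibility identity $M_{n-1} = R_{n-1} K_{n-1}$ is inherited from Definition~\ref{def:knotset}, and the definition of the knot operator then gives $M_{n-1}^{(n-1)} = R_{n-1}$, $G_{n-1}^{(n-1)} = K_{n-1}(G_{n-1})$, and $M_n^{(n-1)} = K_{n-1}^{G_{n-1}} M_n$, with all other components unchanged; the range $\bseq{k+1}{n-1}$ is empty, so that clause holds vacuously.

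For the inductive step from $k+1$ down to $k$, the crucial observation is that each knot $\kn_j$ with $j \ge k+1$ only modifies positions $j$ and $j+1$, so positions strictly below $k+1$ remain untouched; in particular $M_k^{(k+1)} = M_k$. The compatibility condition required to apply $\kn_k$ to $\fk^{(k+1)}$ therefore reduces to the original identity $M_k = R_k K_k$, which holds by knotset compatibility with $\fk$. Applying Definition~\ref{def:knotop} with $\kn_k$ then sets $M_k^{(k)} = R_k$ and $G_k^{(k)} = K_k(G_k)$, and transforms position $k+1$ via $M_{k+1}^{(k)} = K_k^{G_k} M_{k+1}^{(k+1)} = K_k^{G_k} R_{k+1}$, using the inductive hypothesis at $k+1$ in the final equality. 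All other positions inherit their formulas from $\fk^{(k+1)}$, giving the hypothesis at level $k$.

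The main obstacle is really bookkeeping: one must verify that each subsequent knot's compatibility is preserved and that the $R_{k+1}$ produced by the earlier application of $\kn_{k+1}$ is still the current kernel at position $k+1$ when $\kn_k$ is applied. This is precisely why Definition~\ref{def:knotsetop} specifies descending order, since an ascending order would modify $M_{k+1}$ before $\kn_{k+1}$ could use its original factorisation. A minor edge case is that when $k = 0$ the object $R_0$ is a probability measure rather than a Markov kernel (per Definition~\ref{def:knot}), but Definition~\ref{def:knotop} handles the $t = 0$ case uniformly, so no separate argument is needed.
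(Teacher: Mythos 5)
Your proof is correct and matches the paper's approach: the paper simply asserts that the result is "trivial due to the descending time-ordering of knot applications" in Definition~\ref{def:knotsetop}, and your backward induction is exactly the formalisation of that observation, correctly identifying that each $\kn_j$ touches only positions $j$ and $j+1$ so that compatibility of $\kn_k$ reduces to the original factorisation $M_k = R_k K_k$ and the previously installed $R_{k+1}$ is what gets twisted into $K_k^{G_k} R_{k+1}$.
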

We refer to $\fk^\ast = \kn \ast \fk$ for a knotset $\kn$ as a knot-model and provide the form of~$\fk^\ast$ in Proposition~\ref{prop:knotsetop}. The proof is trivial due to the descending time-ordering of knot applications specified in Definition~\ref{def:knotsetop}. The knotset operator also inherits right-associativity from knots.

We illustrate two examples, trivial and adapted knotsets, that extend Example~\ref{ex:trivialknot}~and~\ref{ex:adaptedknot} respectively. The trivial knotset consists of $n$ trivial knots, as such it does not change the Feynman--Kac model nor alter the asymptotic variance.
\begin{example}[Trivial knotset] \label{ex:trivialknotset}
    Consider a knotset $\kn=(M_{0:n-1},K_{0:n-1})$ where $K_p = \mathrm{Id}$ for $p\in\bseq{0}{n-1}$ applied to model $\fk = (M_{0:n},G_{0:n})$. The resulting model is $\kn \ast \fk = \fk$.
\end{example}
We can also use $n$ adapted knots to form an adapted knotset, which we describe in  Example~\ref{ex:adaptedknot}.
\begin{example}[Adapted knotset] \label{ex:adaptedknotset}
    Consider a knotset $\kn=(R_{0:n-1},K_{0:n-1})$ such that each $(p,R_p,K_p)$-knot is an adapted knot for $\fk = (M_{0:n},G_{0:n})$. The adapted model is $\kn \ast \fk = (M_{0:n}^\ast,G_{0:n}^\ast)$ where $M_0^\ast = \delta_0$, $M_1^\ast(0,\cdot) = M_{0}^{G_{0}}$, $M_{p}^\ast = M_{p-1}^{G_{p-1}}$ for $p\in\bseq{2}{n-1}$, and $M_{n}^\ast = M_{n-1}^{G_{n-1}} M_n$.
    Whilst the potentials satisfy $G_p^\ast = M_p(G_p)$ for $\bseq{0}{n-1}$, and $G_n^\ast = G_n$.
\end{example}
Adapted knotsets are related to fully-adapted auxiliary particle filters \citep{pitt1999filtering,johansen2008note} but differ subtly. We discuss this class of knots and its relation to existing particle filters in Section~\ref{sec:apfperfect}. 

The model in Example~\ref{ex:adaptedknotset} has redundancy since $M_{0}^\ast = \delta_0$ and $M_1^\ast(0,\cdot) = M_{0}^{G_{0}}$ and $G_0^\ast = M_0(G_0)$ is a constant. This is an artifact of the knot operator which preserves the time horizon of the model and is essential for comparing the asymptotic variance terms. Using such the adapted knot-model in practice, one would ignore the initial transitions and begin the particle filter at time $p=1$. If required, the constant potential function at time $p=0$ can be accounted for including it in the potential function at time $p=1$.

Though knotsets can change the Feynman--Kac model they are applied to, some quantities remain unchanged, whilst others can be expressed in terms of the original model. We demonstrate these invariances and equivalences now. In order to distinguish quantities related to the original model, $\fk$, and to the relevant knot-model,~$\fk^\ast$, we  embellish the latter with the same superscript. For example, the terminal measure of $\fk$ will be denoted by $\gamma_n$, whilst we will use $\gamma_n^\ast$ for $\fk^\ast$.
\begin{proposition}[Knot-model predictive measures]\label{prop:predequiv}
    Let $\kn$ be a $(R_{0:n-1},K_{0:n-1})$-knotset and consider knot-model $\fk^\ast = \kn\ast \fk$. For measurable $\varphi$, the knot-model $\fk^\ast$ will have predictive marginal measures such that 
    \begin{enumerate}
        \item $\gamma^\ast_0(\varphi) = R_0(\varphi)$ and $\gamma^\ast_{p}(\varphi) = \hat\gamma_{p-1}R_p(\varphi)$ for $p \in \bseq{1}{n-1}$.
        \item $\gamma^\ast_{p}K_p(\varphi) = \gamma_{p}(\varphi)$ for $p \in \bseq{0}{n-1}$.
    \end{enumerate} 
\end{proposition}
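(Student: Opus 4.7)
The plan is a short induction on $p$ that establishes (1) and (2) simultaneously, reducing everything to the compatibility identity $R_p K_p = M_p$ from Definition~\ref{def:kncompat} and the explicit form of $\fk^\ast = (M_{0:n}^\ast, G_{0:n}^\ast)$ given in Proposition~\ref{prop:knotsetop}. The base case is immediate: $\gamma_0^\ast = M_0^\ast = R_0$, which is (1) at $p=0$, and then $\gamma_0^\ast K_0 = R_0 K_0 = M_0 = \gamma_0$, which is (2) at $p=0$.

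For the inductive step, I would suppose (2) holds at some $p \in \bseqo{0}{n-1}$ and derive (1) at $p+1$ followed by (2) at $p+1$. The recursion $\gamma_{p+1}^\ast = \hat\gamma_p^\ast M_{p+1}^\ast$, combined with $G_p^\ast = K_p(G_p)$ and $M_{p+1}^\ast = K_p^{G_p} R_{p+1}$, yields
\begin{equation*}
    \gamma_{p+1}^\ast(\varphi) = \int \gamma_p^\ast(\rmd x)\, K_p(G_p)(x)\, (K_p^{G_p} R_{p+1})(x,\varphi).
\end{equation*}
The crux of the argument is the pointwise twisting identity
\begin{equation*}
    K_p(G_p)(x)\, K_p^{G_p}(x,\rmd y) = K_p(x,\rmd y)\, G_p(y),
\end{equation*}
which follows directly from the definition of $K_p^{G_p}$ when $K_p(G_p)(x) > 0$ and holds trivially when $K_p(G_p)(x) = 0$, since in that case $G_p$ vanishes $K_p(x,\cdot)$-almost everywhere and both sides are the zero measure. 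Substituting this identity and applying Fubini collapses the expression to $(\gamma_p^\ast K_p)(G_p \cdot R_{p+1}(\varphi))$, which by the inductive hypothesis equals $\hat\gamma_p R_{p+1}(\varphi)$, establishing (1) at $p+1$. Then composing with $K_{p+1}$ on the right and using compatibility $R_{p+1} K_{p+1} = M_{p+1}$ gives $\gamma_{p+1}^\ast K_{p+1} = \hat\gamma_p M_{p+1} = \gamma_{p+1}$, which is (2) at $p+1$.

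The main (and really only) obstacle is verifying the twisting identity at points $x$ where $K_p(G_p)(x) = 0$, at which the convention $K_p^{G_p} = K_p$ is invoked; once that edge case is dispatched, the remainder is pure bookkeeping against the kernels listed in Proposition~\ref{prop:knotsetop}. No further technical machinery is required, and the induction terminates at $p = n-1$, exactly matching the index ranges stated in (1) and (2).
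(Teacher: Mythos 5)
Your proof is correct and follows essentially the same route as the paper: the base case $\gamma_0^\ast = R_0$, the recursion $\gamma_{p+1}^\ast = \hat\gamma_p^\ast M_{p+1}^\ast$, the untwisting identity $K_p(G_p)\cdot K_p^{G_p}(\varphi) = K_p(G_p\cdot\varphi)$ (the paper's Proposition on kernel untwisting, including the same treatment of the $K_p(G_p)(x)=0$ case via non-negativity of $G_p$), and compatibility $R_pK_p = M_p$ to close the loop. The only difference is presentational — you make the interleaving of Parts 1 and 2 in the induction explicit, whereas the paper proves Part 1 first with the use of Part 2 left implicit in the step $\gamma_{p-1}^\ast K_{p-1} = \gamma_{p-1}$.
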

Aside from establishing the connection between a model and its counterpart with knots, Part 1 of Proposition~\ref{prop:predequiv} is later used to compare the asymptotic variance of particle approximations from the use of each Feynman--Kac model in an SMC algorithm. Part 2 indicates that if any marginal measures in the original model are of interest we can approximate these with one additional step, even when using the model with knots. 
\begin{proposition}[Knot-model invariants]\label{prop:invar}
Let $\kn$ be a $(R_{0:n-1},K_{0:n-1})$-knotset and consider knot-model $\fk^\ast = \kn\ast \fk$. For measurable $\varphi$, the knot-model $\fk^\ast$ will have the following invariants:
\begin{enumerate}
    \item Terminal marginal measures, $\gamma^\ast_{n}(\varphi) = \gamma_{n}(\varphi)$ and $\hat\gamma^\ast_{n}(\varphi) = \hat\gamma_{n}(\varphi)$.
    \item Terminal probability measures, $\eta^\ast_{n}(\varphi) = \eta_{n}(\varphi)$ and $\hat\eta^\ast_{n}(\varphi) = \hat\eta_{n}(\varphi)$.
    \item Normalising constants, $\gamma^\ast_{p}(1) = \gamma_{p}(1)$ and $\hat{\gamma}^\ast_p(1) =  \hat\gamma_p(1)$ for all $p \in \bseq{0}{n}$.
\end{enumerate}
\end{proposition}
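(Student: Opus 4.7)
The plan is to derive all three invariants as consequences of Proposition~\ref{prop:predequiv} Part~2, which states that $\gamma^\ast_p K_p = \gamma_p$ for $p \in \bseq{0}{n-1}$. Intuitively, the knotset operator shifts the action of the $K_p$ kernels one step later, so applying $K_p$ to a knot-model measure recovers the corresponding original measure. All that remains is to propagate this identity one last step through the final kernel to reach the terminal time.

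To prove Part~1, I would start from the standard recursion $\gamma_n^\ast = \hat\gamma^\ast_{n-1} M_n^\ast$ and substitute the knot-model's final-step quantities $G^\ast_{n-1} = K_{n-1}(G_{n-1})$ and $M_n^\ast = K_{n-1}^{G_{n-1}} M_n$ from Proposition~\ref{prop:knotsetop}. The key step is the pointwise kernel identity
\begin{equation*}
G^\ast_{n-1}(x)\, M_n^\ast(\varphi)(x) = K_{n-1}\bigl(G_{n-1} \cdot M_n(\varphi)\bigr)(x),
\end{equation*}
which follows from straightforward cancellation of the normaliser in the definition of the twisted kernel $K_{n-1}^{G_{n-1}}$. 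Integrating against $\gamma^\ast_{n-1}$ and applying Proposition~\ref{prop:predequiv} Part~2 to the integrand $G_{n-1} \cdot M_n(\varphi)$ then gives $\gamma^\ast_n(\varphi) = \hat\gamma_{n-1} M_n(\varphi) = \gamma_n(\varphi)$. The updated version $\hat\gamma_n^\ast(\varphi) = \hat\gamma_n(\varphi)$ follows immediately since $G_n^\ast = G_n$.

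Parts~2 and 3 then fall out cheaply. For Part~3 at $p<n$, specialising Proposition~\ref{prop:predequiv} Part~2 to $\varphi = 1$ yields $\gamma^\ast_p(1) = \gamma_p(1)$ (using that $K_p$ is Markov), and to $\varphi = G_p$ yields $\hat\gamma^\ast_p(1) = \gamma^\ast_p K_p(G_p) = \hat\gamma_p(1)$, while the $p=n$ cases are precisely Part~1 with $\varphi = 1$. Part~2 is immediate by forming the ratios $\eta_n^\ast = \gamma_n^\ast/\gamma_n^\ast(1)$ and $\hat\eta_n^\ast = \hat\gamma_n^\ast/\hat\gamma_n^\ast(1)$ and invoking Parts~1 and~3.

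The main technical obstacle is verifying the pointwise kernel identity in the degenerate case $K_{n-1}(G_{n-1})(x)=0$, where by convention $K_{n-1}^{G_{n-1}}(x,\cdot)$ defaults to $K_{n-1}(x,\cdot)$ rather than following the quotient formula. Both sides of the identity must be shown to vanish separately: the left because $G^\ast_{n-1}(x)=0$, and the right because $G_{n-1}=0$ holds $K_{n-1}(x,\cdot)$-almost surely. This is the one place where the convention embedded in the definition of the twisted kernel $K^H$ must be invoked with care, but the case analysis is short and preserves the identity.
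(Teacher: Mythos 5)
Your proof is correct and follows essentially the same route as the paper: the terminal identity via the time-$n$ recursion, the untwisting cancellation $K_{n-1}(G_{n-1})\cdot K_{n-1}^{G_{n-1}}M_n(\varphi)=K_{n-1}(G_{n-1}\cdot M_n(\varphi))$ with the same careful treatment of the $K_{n-1}(G_{n-1})(x)=0$ case, and Proposition~\ref{prop:predequiv} to transfer back to the original model. The only (minor, equally valid) deviation is in Part~3, where you obtain $\hat\gamma^\ast_p(1)=\gamma^\ast_pK_p(G_p)=\hat\gamma_p(1)$ directly by taking $\varphi=G_p$ in Proposition~\ref{prop:predequiv} Part~2, whereas the paper routes through $\gamma^\ast_{p+1}(1)=\hat\gamma_pR_{p+1}(1)$ and handles $p=n-1$ and $p=n$ separately.
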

Proposition~\ref{prop:invar} establishes that the terminal measure is unchanged by knots, hence a model and its counterpart with knots can be used to estimate the same quantities. We will also make use of the invariants when making asymptotic variance comparisons.

In subsequent sections we will use the terms knots and knotset synonymously. We note that a knotset is a strict generalisation of a $(t,R,K)$-knot, which can be seen by taking the underlying knots $\kn_p$ to be trivial for all $p \neq t$ in Proposition~\ref{prop:knotsetop}. As such, results for knotsets apply directly to knots. More practically, we can also use a knotset to describe only $m \in \bseq{0}{n}$ knots by letting~${n-m}$ knots be trivial. This is useful if no suitable knot can be defined for one or more time points.

In Section~\ref{sec:var} we show that terminal particle approximations using $\kn \ast \fk$ have lower asymptotic variance than their counterparts using $\fk$. By the invariance property in Proposition~\ref{prop:invar} this indicates better particle approximations exist for the same quantities of interest when knots can be implemented.

\section{Variance reduction and ordering from knots}\label{sec:var}

Having established the equivalence of all terminal marginal measures in Proposition~\ref{prop:invar}, we now consider the asymptotic variances of particle approximations to these quantities. Our main result is given in Theorem~\ref{th:var} where we state that applying knots to a Feynman--Kac model reduces the variance of particle approximations for all relevant functions. We will denote the predictive and updated asymptotic variance of the knot-model $\fk^\ast$ by~$\sigma^2_{\ast}$ and $\hat{\sigma}^2_\ast$ respectively.

\begin{theorem}[Variance reduction with knots]\label{th:var}
Consider models $\fk \in \fkclass_n$ and ${\fk^\ast = \kn \ast \fk}$ for knotset $\kn = (R_{0:n-1},K_{0:n-1})$.  If~$\varphi \in \mathcal{F}(\fk)$ then ${\sigma^2_\ast(\varphi) \leq \sigma^2(\varphi)}$ and the reduction in the variance is 
\begin{equation*}
    \sigma^2(\varphi) - \sigma^2_\ast(\varphi) = \sum_{p=0}^{n-1}\frac{\gamma_p(1)^2}{\gamma_n(1)^2} \nu_p \left\{\mathrm{Var}_{K_p}\left[ Q_{p,n}(\varphi) \right]\right\},
\end{equation*}
where $\nu_p = \hat\eta_{p-1} R_p$ for $p \in \bseq{1}{n-1}$ and $\nu_0 = R_0$. Moreover, the variance ordering is strict if there exists a time $p \in \bseq{0}{n-1}$ such that $\nu_p \left\{\mathrm{Var}_{K_p}\left[ Q_{p,n}(\varphi) \right]\right\} > 0$.
\end{theorem}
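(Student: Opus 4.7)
My plan is to reduce the difference $\sigma^2(\varphi) - \sigma^2_\ast(\varphi)$ to a sum of per-time contributions, each of which will be a conditional variance under $K_p$. By Proposition~\ref{prop:invar} we have $\gamma^\ast_p(1) = \gamma_p(1)$ and $\eta^\ast_n(\varphi) = \eta_n(\varphi)$, so using \eqref{eq:asyvarpred} the squared-mean contributions cancel termwise, yielding
\[
\sigma^2(\varphi) - \sigma^2_\ast(\varphi) = \sum_{p=0}^{n} \frac{\gamma_p(1)}{\gamma_n(1)^2}\bigl[\gamma_p(Q_{p,n}(\varphi)^2) - \gamma^\ast_p(Q^\ast_{p,n}(\varphi)^2)\bigr].
\]
The $p=n$ term vanishes because $Q^\ast_{n,n} = Q_{n,n} = \mathrm{Id}$ and $\gamma^\ast_n = \gamma_n$, so only the indices $p \in \bseq{0}{n-1}$ contribute.

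Next I would prove, by backward induction on $p$, the key identity $Q^\ast_{p,n}(\varphi) = K_p Q_{p,n}(\varphi)$ for every $p \in \bseq{0}{n-1}$. For the base case $p = n-1$, the twisted-kernel definition together with $G^\ast_{n-1} = K_{n-1}(G_{n-1})$ and $M^\ast_n = K_{n-1}^{G_{n-1}} M_n$ collapses the product to $K_{n-1}[G_{n-1} \cdot M_n(\varphi)] = K_{n-1} Q_n(\varphi)$; the case in which $K_{n-1}(G_{n-1})$ vanishes at a point is handled by the fallback convention, since both sides are then zero. For the inductive step I use $Q^\ast_{p,n} = Q^\ast_{p+1} Q^\ast_{p+1,n}$, substitute the hypothesis into $Q^\ast_{p+1,n}$, and exploit the composability $R_{p+1} K_{p+1} = M_{p+1}$ to collapse $R_{p+1}(K_{p+1} Q_{p+1,n}(\varphi)) = M_{p+1}Q_{p+1,n}(\varphi)$; the remaining outer $K_p(G_p \cdot\, \cdot\,)$ structure then reassembles to $K_p Q_{p,n}(\varphi)$.

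With the identity in hand, I combine $\gamma^\ast_p K_p = \gamma_p$ from Proposition~\ref{prop:predequiv} with the elementary equality $K_p(f^2) - (K_p f)^2 = \mathrm{Var}_{K_p}[f]$ applied to $f = Q_{p,n}(\varphi)$, so each per-$p$ bracket becomes $\gamma^\ast_p\{\mathrm{Var}_{K_p}[Q_{p,n}(\varphi)]\}$. Part~1 of Proposition~\ref{prop:predequiv} then yields $\gamma^\ast_p = \gamma_p(1)\,\nu_p$ for all $p \in \bseq{0}{n-1}$, using $\hat\gamma_{p-1}(1) = \gamma_p(1)$ and $\gamma_0(1) = 1$; substituting produces the claimed identity, and the strict-ordering statement is immediate from the non-negativity of each summand. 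The main obstacle is the backward induction: it requires careful bookkeeping of the intermediate state spaces implicit in the factorisation $M_p = R_p K_p$, and the fallback case of the twisted kernel $K_p^{G_p}$ must be handled separately because its definition branches when $K_p(G_p) = 0$; once past that, the remainder is linearity, the Jensen-type variance identity, and the invariants already furnished by Propositions~\ref{prop:invar} and~\ref{prop:predequiv}.
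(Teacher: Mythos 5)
Your proposal is correct and follows essentially the same route as the paper: it rests on the identity $Q^\ast_{p,n}(\varphi) = K_p Q_{p,n}(\varphi)$ (the paper's Proposition~\ref{prop:Qpn}, proved by the same backward induction via Proposition~\ref{prop:untwist}), the measure relations of Propositions~\ref{prop:predequiv} and~\ref{prop:invar}, and the conditional-variance identity under $K_p$. The only cosmetic difference is that you obtain the inequality as a by-product of the exact variance-difference formula, whereas the paper first applies Jensen's inequality and then quantifies the gap.
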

From Theorem~\ref{th:var} we can see that, loosely speaking, the variance is strictly reduced if $Q_{p,n}(\varphi)$ is not constant relative to $K_{p}$. As expected, degenerate $K_p$ do not reduce the variance as we previously stated for the trivial knotset with $K_p = \mathrm{Id}$. 

Note that the variance reduction excludes a contribution from time $n$ due to the absence of a knot at the terminal time. We can also define a terminal time $(n,R,K)$-knot analogously to the knots discussed thus far. However, such a terminal knot will only guarantee a variance reduction of the normalising constant estimate, $\hat\gamma_n(1)$. We discuss terminal knots and how to achieve variance reductions for specific test functions in Section~\ref{sec:tyingterminalknots}.

Theorem~\ref{th:var} is our main result, applying directly to  predictive measures. The variance reduction result is extended to the remaining terminal measures by Corollary~\ref{coro:var2}.
\begin{corollary}[Knot variance reduction with knots]\label{coro:var2}
    Under the conditions of Theorem~\ref{th:var}, the following asymptotic variance inequalities hold.  
    \begin{enumerate}
        \item Predictive terminal probability measure: ${\sigma^2_\ast(\varphi - \eta_{n}^\ast(\varphi)) \leq \sigma^2(\varphi - \eta_{n}(\varphi))}$ if $\varphi \in \mathcal{F}(\fk)$.
        \item Updated terminal measure: ${\hat{\sigma}_\ast^2(\varphi) \leq \hat{\sigma}^2(\varphi)}$ if $\varphi \in \hat{\mathcal{F}}(\fk)$.
        \item Updated terminal probability measure: ${\hat{\sigma}_\ast^2(\varphi-\hat{\eta}^\ast_n(\varphi)) \leq \hat{\sigma}^2(\varphi-\hat{\eta}_n(\varphi))}$ if $\varphi \in \hat{\mathcal{F}}(\fk)$.
    \end{enumerate}
    The inequalities are strict under the same conditions as Theorem~\ref{th:var}.
\end{corollary}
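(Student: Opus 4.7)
The three parts all reduce to Theorem~\ref{th:var} once we combine it with the invariances in Proposition~\ref{prop:invar} and the identity $\hat\sigma^2(\varphi) = \sigma^2(G_n \cdot \varphi)/\eta_n(G_n)^2$ that follows directly from \eqref{eq:asyvarupdate}. My plan is to dispatch each part in turn.

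For Part~1, Proposition~\ref{prop:invar} gives $\eta_n^\ast(\varphi) = \eta_n(\varphi)$, so the two centred integrands coincide: $\varphi - \eta_n^\ast(\varphi) = \varphi - \eta_n(\varphi) =: \tilde\varphi$. I would then apply Theorem~\ref{th:var} directly to $\tilde\varphi$, after a brief check that $\tilde\varphi \in \mathcal{F}(\fk)$. The latter amounts to showing that $\gamma_p(Q_{p,n}(\tilde\varphi)^2)$ is finite for each $p$, which follows by expanding $Q_{p,n}(\tilde\varphi) = Q_{p,n}(\varphi) - \eta_n(\varphi) Q_{p,n}(1)$ and using Cauchy--Schwarz together with the fact that $\gamma_p Q_{p,n}(1) = \gamma_n(1) < \infty$.

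For Part~2, I would write $\hat\sigma^2(\varphi) = \sigma^2(G_n \cdot \varphi)/\eta_n(G_n)^2$ using \eqref{eq:asyvarupdate}. Because the knot-model has $G_n^\ast = G_n$ by Proposition~\ref{prop:knotsetop} and $\eta_n^\ast(G_n^\ast) = \eta_n(G_n)$ by Proposition~\ref{prop:invar}, the same identity applied to $\fk^\ast$ yields $\hat\sigma_\ast^2(\varphi) = \sigma_\ast^2(G_n \cdot \varphi)/\eta_n(G_n)^2$. Since $\varphi \in \hat{\mathcal{F}}(\fk)$ is equivalent to $G_n \cdot \varphi \in \mathcal{F}(\fk)$, I apply Theorem~\ref{th:var} to $G_n \cdot \varphi$ and divide by $\eta_n(G_n)^2$ to conclude. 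Part~3 is then immediate from Part~2 together with Proposition~\ref{prop:invar}: $\hat\eta_n^\ast(\varphi) = \hat\eta_n(\varphi)$, so the centred integrand again coincides, and applying the Part~2 inequality to $\varphi - \hat\eta_n(\varphi)$ yields the result.

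The strictness statements reduce to Theorem~\ref{th:var}'s strictness criterion applied to the relevant test function --- $\varphi - \eta_n(\varphi)$ for Part~1, $G_n \cdot \varphi$ for Part~2, and $G_n \cdot (\varphi - \hat\eta_n(\varphi))$ for Part~3 --- so no new machinery is required. I do not anticipate a genuine obstacle here: the only technical nuisance is verifying that the constant-shifted and $G_n$-weighted test functions remain inside $\mathcal{F}(\fk)$, which is handled uniformly by the Cauchy--Schwarz argument sketched in Part~1.
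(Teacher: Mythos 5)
Your proposal follows essentially the same route as the paper: Part~1 and Part~3 use the invariance $\eta_n^\ast(\varphi)=\eta_n(\varphi)$ (resp.\ $\hat\eta_n^\ast(\varphi)=\hat\eta_n(\varphi)$) to identify the centred test functions and then invoke Theorem~\ref{th:var}, while Part~2 uses the identity $\hat\sigma^2(\varphi)=\sigma^2(G_n\cdot\varphi)/\eta_n(G_n)^2$ together with $G_n^\ast=G_n$ and $\eta_n^\ast=\eta_n$. The only difference is that you sketch a verification that $\varphi-\eta_n(\varphi)\in\mathcal{F}(\fk)$ (the paper simply asserts it); note that your Cauchy--Schwarz step would actually need $\gamma_p(Q_{p,n}(1)^2)<\infty$ rather than just $\gamma_pQ_{p,n}(1)=\gamma_n(1)<\infty$, but this is a minor point the paper also leaves implicit.
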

The differences in the asymptotic variances stated in Corollary~\ref{coro:var2} are straightforward to derive using the quantitative result in Theorem~\ref{th:var} so we suppress them here. 

Theorem~\ref{th:var} pertains to variance reduction from the application of one knotset to a Feynman--Kac model, however we can also consider multiple knotsets via iterative application. In doing so, we can establish a partial ordering of Feynman--Kac models induced by knots.
\begin{definition}[A partial ordering of Feynman--Kac with knots]\label{def:partialorder}
    Consider two Feynman--Kac models, $\fk,\fk^\ast\in\fkclass_n$. We say that ${\fk^\ast \preccurlyeq \fk}$ if there exists a sequence of knotsets $\kn_1, \kn_2,\ldots, \kn_m$ such that $\fk^\ast = \kn_m\ast\cdots\ast \kn_1\ast\fk$ for some $m \in \nn{1}$. 
\end{definition}
From the above partial ordering we can state a general variance ordering results for sequential Monte Carlo algorithms. Note that each $\kn_s$ in Definition~\ref{def:partialorder} is required to be compatible with the knot-model resulting from $\kn_{s-1}\ast \cdots \ast \kn_{1}\ast\fk$.
\begin{theorem}[Variance ordering with knots]\label{th:varorder}
      Suppose ${\fk^\ast \preccurlyeq \fk}$ then $\gamma_n^\ast(\varphi) = \gamma_n(\varphi)$, $\hat{\gamma}_n^\ast(\varphi) = \hat{\gamma}_n(\varphi)$, $\eta_n^\ast(\varphi) = \eta_n(\varphi)$, $\hat{\eta}_n^\ast(\varphi) = \hat{\eta}_n(\varphi)$, and the following variance ordering results hold.
      \begin{enumerate}
          \item If $\varphi \in \mathcal{F}(\fk)$ then $\sigma_\ast^2(\varphi) \leq \sigma^2(\varphi)$ and $\sigma_\ast^2(\varphi-\eta_n^\ast(\varphi)) \leq \sigma^2(\varphi-\eta_n(\varphi))$.
          \item If $\varphi \in \hat{\mathcal{F}}(\fk)$ then $\hat{\sigma}_\ast^2(\varphi) \leq \hat{\sigma}^2(\varphi)$ and $\hat{\sigma}_\ast^2(\varphi-\hat{\eta}_n^\ast(\varphi)) \leq \hat{\sigma}^2(\varphi-\hat{\eta}_n(\varphi))$.
      \end{enumerate}
      The inequalities are strict if at least one of the knotsets relating $\fk^\ast$ to $\fk$ satisfy the conditions stated in Theorem~\ref{th:var}.
\end{theorem}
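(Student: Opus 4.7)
The plan is to proceed by induction on the length $m$ of the sequence of knotsets relating $\fk^\ast$ to $\fk$. Let $\fk_0 = \fk$ and $\fk_s = \kn_s \ast \fk_{s-1}$ for $s \in \bseq{1}{m}$, so that $\fk_m = \fk^\ast$. Write $\gamma_n^{(s)}, \hat\gamma_n^{(s)}, \eta_n^{(s)}, \hat\eta_n^{(s)}$ and $\sigma_s^2, \hat\sigma_s^2$ for the corresponding measures and asymptotic variance maps. The telescoping structure reduces everything to a single-step argument combined with bookkeeping on the test-function classes $\mathcal{F}(\fk_s)$ and $\hat{\mathcal{F}}(\fk_s)$.

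First I would dispatch the measure equalities. Applying Proposition~\ref{prop:invar} to the pair $(\fk_{s-1},\fk_s)$ gives $\gamma_n^{(s)} = \gamma_n^{(s-1)}$ and $\hat\gamma_n^{(s)} = \hat\gamma_n^{(s-1)}$ for every $s$; iterating yields $\gamma_n^\ast(\varphi) = \gamma_n(\varphi)$, $\hat\gamma_n^\ast(\varphi) = \hat\gamma_n(\varphi)$, and correspondingly $\eta_n^\ast = \eta_n$ and $\hat\eta_n^\ast = \hat\eta_n$ since the normalising constants are preserved as well.

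Next I would chain the variance inequalities. For part~(1), assume $\varphi \in \mathcal{F}(\fk)$, i.e.\ $\sigma_0^2(\varphi) < \infty$. Theorem~\ref{th:var} applied to $\kn_1$ gives $\sigma_1^2(\varphi) \leq \sigma_0^2(\varphi) < \infty$, so $\varphi \in \mathcal{F}(\fk_1)$; iterating, $\varphi \in \mathcal{F}(\fk_s)$ and $\sigma_s^2(\varphi) \leq \sigma_{s-1}^2(\varphi)$ for each $s \in \bseq{1}{m}$, giving $\sigma_\ast^2(\varphi) \leq \sigma^2(\varphi)$. The centred variant follows by the same induction using part~(1) of Corollary~\ref{coro:var2}, noting crucially that $\eta_n^{(s)}(\varphi) = \eta_n(\varphi)$ at every step, so the centred test function $\varphi - \eta_n(\varphi)$ is unchanged throughout the chain. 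Part~(2) is the analogous argument using parts~(2) and~(3) of Corollary~\ref{coro:var2}, starting from $\varphi \in \hat{\mathcal{F}}(\fk)$ and again using invariance of the normalised terminal measures to keep the centred function $\varphi - \hat\eta_n(\varphi)$ fixed across the chain.

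Finally, for strict inequality, suppose some knotset $\kn_s = (R_{0:n-1}^{(s)}, K_{0:n-1}^{(s)})$ satisfies the strictness condition of Theorem~\ref{th:var}, i.e.\ there exists $p \in \bseq{0}{n-1}$ with $\nu_p^{(s)}\{\mathrm{Var}_{K_p^{(s)}}[Q_{p,n}^{(s-1)}(\varphi)]\} > 0$, where $Q_{p,n}^{(s-1)}$ is computed in the model $\fk_{s-1}$. Then $\sigma_s^2(\varphi) < \sigma_{s-1}^2(\varphi)$, and combined with the non-strict chain on either side this gives $\sigma_\ast^2(\varphi) < \sigma^2(\varphi)$; strictness for the centred and updated counterparts propagates in exactly the same way through Corollary~\ref{coro:var2}. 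The main (modest) obstacle is simply the consistent bookkeeping that $\varphi$ remains in the relevant integrability class $\mathcal{F}(\fk_s)$ or $\hat{\mathcal{F}}(\fk_s)$ at each intermediate stage, which is automatic from the monotonicity established one step earlier; no new analytic estimate is required beyond Theorem~\ref{th:var}, Corollary~\ref{coro:var2}, and Proposition~\ref{prop:invar}.
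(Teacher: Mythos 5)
Your proof is correct and follows essentially the same telescoping argument as the paper: chain Theorem~\ref{th:var} and Corollary~\ref{coro:var2} across the intermediate models $\fk_s$, use Proposition~\ref{prop:invar} for the measure equalities, and propagate strictness from any single step. Your explicit bookkeeping that $\varphi$ remains in $\mathcal{F}(\fk_s)$ at each stage (via the monotone finite variances) is a point the paper's proof passes over silently, but it is the same argument.
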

Our partial ordering result allows us to order the asymptotic variance of models related by multiple knots or knotsets. Such a result may be useful for some Feynman--Kac models in practice but will typically be more difficult to implement than a single knot or knotset. The partial ordering is, however, crucial to our exposition and proofs involving knot optimality in Section~\ref{sec:adaptedknots}.

\section{Optimality of adapted knots}\label{sec:adaptedknots}
Adapted knots and knotsets, introduced in Examples~\ref{ex:adaptedknot} and~\ref{ex:adaptedknotset} respectively, possess optimality properties that distinguish them from other knots. Applying an adapted $t$-knot to a Feynman--Kac model results in the largest possible variance reduction of any single $t$-knot. Similarly, an adapted knotset will dominate any other knotset in terms of asymptotic variance reduction. This indicates that adapted knots should be appraised first before considering other types of knots that are compatible with the Feynman--Kac model at hand. The optimality of adapted knots and knotsets is expressed formally in Theorem~\ref{th:adaptknotopt}. 

\begin{theorem}[Single adapted knot optimality]\label{th:adaptknotopt}
If $\kn$ is a knotset (resp. $t$-knot) compatible with~$\fk$, and $\kn^{\diamond}$ is the adapted knotset (resp. adapted $t$-knot) for~$\fk$ then $\kn^{\diamond} \ast \fk \preccurlyeq \kn \ast \fk$.
\end{theorem}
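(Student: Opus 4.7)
The plan is to exhibit in each case a single knotset $\kn'$ compatible with $\kn\ast\fk$ for which $\kn'\ast(\kn\ast\fk)=\kn^\diamond\ast\fk$; by Definition~\ref{def:partialorder} this furnishes a length-one chain establishing $\kn^\diamond\ast\fk \preccurlyeq \kn\ast\fk$. The single algebraic ingredient underpinning both cases is the identity
\begin{equation*}
R^{K(H)} K^{H} = (RK)^{H},
\end{equation*}
which I would verify directly from the definitions: multiplying the two twisted-kernel densities and observing that the factors of $K(H)(y)$ cancel leaves $H(z)\,RK(x,\rmd z)/RK(H)(x)$ after integrating out the intermediate variable. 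The edge case $RK(H)(x)=0$ is accommodated by the convention $K^H=K$ when $K(H)=0$ set in the Notation section.

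For the $t$-knot case, with $\kn=(t,R,K)$, $M_t=RK$, and $\kn^\diamond=(t,\mathrm{Id},M_t)$, I would take $\kn'=(t,\mathrm{Id},R)$, which is compatible with $\kn\ast\fk$ since that model has $M_t^\ast=R$. Unwinding Definition~\ref{def:knotop} gives $M_t^{\ast\ast}=\mathrm{Id}$, $G_t^{\ast\ast}=R(K(G_t))=M_t(G_t)$, and $M_{t+1}^{\ast\ast}=R^{K(G_t)}K^{G_t}M_{t+1}$; the last collapses to $M_t^{G_t}M_{t+1}$ by the identity, and all other times are untouched, so $\kn'\ast(\kn\ast\fk)$ exactly matches the form of $\kn^\diamond\ast\fk$ computed in Example~\ref{ex:adaptedknot}.

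For the knotset case with $\kn=(R_{0:n-1},K_{0:n-1})$, define $\kn'=(R'_{0:n-1},K'_{0:n-1})$ by $(R'_0,K'_0)=(\delta_0,R_0)$ (viewing $R_0$ as a Markov kernel from a singleton) and $(R'_p,K'_p)=(K_{p-1}^{G_{p-1}},R_p)$ for $p\in\bseq{1}{n-1}$. Compatibility with $\kn\ast\fk$ follows from Proposition~\ref{prop:knotsetop} via $R'_p K'_p = K_{p-1}^{G_{p-1}} R_p = M_p^\ast$. A second application of Proposition~\ref{prop:knotsetop} to $\kn'\ast(\kn\ast\fk)$ gives each time-$p$ kernel, for $p\in\bseq{1}{n-1}$, as $(K'_{p-1})^{G_{p-1}^\ast}R'_p = R_{p-1}^{K_{p-1}(G_{p-1})}K_{p-1}^{G_{p-1}}$, which collapses to $M_{p-1}^{G_{p-1}}$ by the identity; the terminal kernel becomes $M_{n-1}^{G_{n-1}}M_n$, and the potentials reduce to $G_p^{\ast\ast}=R_p(K_p(G_p))=M_p(G_p)$. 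Term-by-term this matches the formulas for $\kn^\diamond\ast\fk$ in Example~\ref{ex:adaptedknotset}.

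The main obstacle I anticipate is the careful treatment of the base identity on the exceptional sets where the normalising constants $RK(H)$ or $K(H)(y)$ vanish, so that the conventions for $K^H$ and $\mu^H$ do not propagate unintended terms through the telescoping, together with the $p=0$ bookkeeping where $R_0$ must be promoted from a probability measure to a Markov kernel from a singleton so that it conforms to the $t=0$ convention of Example~\ref{ex:adaptedknot}. Once the identity is secured, the remainder is mechanical reduction via Proposition~\ref{prop:knotsetop}.
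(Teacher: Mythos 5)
Your proposal is correct and follows essentially the same route as the paper: the completing knotset you construct (trivialising each $R_p$ via $(K_{p-1}^{G_{p-1}},R_p)$ with the $\delta_0$ promotion at time zero) is exactly the one in the paper's Proposition~\ref{prop:knotsetcomplete}, and your single-knot argument via $\kn'=(t,\mathrm{Id},R)$ reproduces its Proposition~\ref{prop:knotsimp}, both resting on the same twisting identity $R^{K(H)}K^{H}=(RK)^{H}$ (the paper's Proposition~\ref{prop:twisteq}). The only difference is that you inline these supporting lemmas rather than cite them.
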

Hence, in conjunction with Theorem~\ref{th:varorder}, the asymptotic variance is lowest with adapted knots and knotsets. Further, from Proposition~\ref{prop:knotsimp} we can state that any sequence of $t$-knots applied to $\fk$ is also dominated by the adapted $t$-knot applied to $\fk$. 

We can also deduce from Theorem~\ref{th:adaptknotopt} that the asymptotic variance can only be reduced beyond that of an adapted $t$-knot by using at least two non-trivial knots. For example, using the adapted $t$-knot followed by some other non-trivial knot at time $s\neq t$ would guarantee a further reduction in the variance.

Next we consider the case of multiple applications of knotsets by comparing models of the form $\kn_m\ast\cdots \ast \kn_1\ast \fk$ to the sequence of adapted knotsets applied to $\fk$.

\begin{theorem}[Multiple adapted knotset optimality]\label{th:madaptknotopt}
Let $m \in \nn1$ and consider two sequences of knotsets, $\kn_t$ and $\kn_t^\star$, over $t \in \bseq{0}{m-1}$. Let $\fk_t = \kn_{t-1} \ast \fk_{t-1}$ and $\fk_{t}^\star = \kn_{t-1}^\star \ast \fk_{t-1}^\star$ for $t \in \bseq{1}{m}$ and initial model $\fk_0 = \fk_0^\star \in \fkclass_n$. If $\kn_t^\star$ is the adapted knotset for $\fk_{t}^\star$ for all $t \in \bseq{0}{m-1}$ then $\fk_m^\star \preccurlyeq \fk_m$.
\end{theorem}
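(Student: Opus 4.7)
The plan is to proceed by induction on $m$. The base case $m=1$ is immediate from Theorem~\ref{th:adaptknotopt}: since $\fk_0 = \fk_0^\star$ and $\kn_0^\star$ is the adapted knotset for $\fk_0$, while $\kn_0$ is any knotset compatible with $\fk_0$, the theorem gives directly $\fk_1^\star = \kn_0^\star \ast \fk_0 \preccurlyeq \kn_0 \ast \fk_0 = \fk_1$.

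For the inductive step, suppose $\fk_{m-1}^\star \preccurlyeq \fk_{m-1}$, and write $\fk^\diamond$ for the model obtained by applying the adapted knotset for $\fk$ to $\fk$ itself, so that $\fk_m^\star = (\fk_{m-1}^\star)^\diamond$. Theorem~\ref{th:adaptknotopt} applied to $\fk_{m-1}$ with the compatible knotset $\kn_{m-1}$ yields $(\fk_{m-1})^\diamond \preccurlyeq \kn_{m-1} \ast \fk_{m-1} = \fk_m$. Hence, by transitivity of $\preccurlyeq$, it suffices to establish the following monotonicity property: if $\fk_a \preccurlyeq \fk_b$, then $(\fk_a)^\diamond \preccurlyeq (\fk_b)^\diamond$. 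Combined with the observation above, this yields $\fk_m^\star = (\fk_{m-1}^\star)^\diamond \preccurlyeq (\fk_{m-1})^\diamond \preccurlyeq \fk_m$, closing the induction.

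The main obstacle is this monotonicity property, which is not immediate from the results stated so far. By the inductive hypothesis, $\fk_{m-1}^\star = \mu_k \ast \cdots \ast \mu_1 \ast \fk_{m-1}$ for some knotsets $\mu_1,\ldots,\mu_k$, so monotonicity reduces by chaining to the single-step claim: if $\fk' = \mu \ast \fk$ for some knotset $\mu$ compatible with $\fk$, then $(\fk')^\diamond \preccurlyeq \fk^\diamond$. To prove this single-step version, I would use the explicit formulae from Proposition~\ref{prop:knotsetop} and Example~\ref{ex:adaptedknotset} to write down the kernels and potentials of both $(\fk')^\diamond$ and $\fk^\diamond$, and then construct an explicit knotset linking them. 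The two adapted knot-models differ only through the substitutions $M_p \mapsto K_{p-1}^{G_{p-1}} R_p$ and $G_p \mapsto K_p(G_p)$ induced by $\mu = (R_{0:n-1},K_{0:n-1})$, and the goal is to realise the resulting change in the twisted kernels $M_p^{G_p}$ as a compatible knot application at the adapted level, exploiting the invariance identity $\gamma_p^\ast K_p(\varphi) = \gamma_p(\varphi)$ from Proposition~\ref{prop:predequiv} to ensure the required kernel-potential products align.

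As an alternative route that sidesteps a general monotonicity lemma, one could perform a zig-zag induction over the $m$ indices, defining intermediate models $\fk^{(t)}$ built by applying the non-star knotsets $\kn_0,\ldots,\kn_{t-1}$ followed by adapted knotsets of the current intermediate models, and showing that $\fk^{(t)} \preccurlyeq \fk^{(t+1)}$ by Theorem~\ref{th:adaptknotopt} at the swap step. Either approach ultimately relies on the same technical computation, namely identifying how an adapted knotset interacts with prior knotset applications, which is the decisive calculation of the proof.
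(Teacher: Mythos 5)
Your argument is correct, but it is organised differently from the paper's. The paper works ``outside-in'': it uses Proposition~\ref{prop:knotsetcomplete} to complete the last knotset $\kn_{m-1}$ (so that $\mathcal{R}_m\ast\fk_m = \kn_{m-1}^\diamond\ast\fk_{m-1}$), then uses Proposition~\ref{prop:repadapteq} to commute that adapted knotset leftward past $\kn_{m-2},\ldots,\kn_0$ until it becomes $\kn_0^\star$ followed by residual knotsets, and iterates this completion-and-commutation $m$ times to build $\fk_m^\star$ directly from $\fk_m$. You instead run an induction on $m$ whose engine is the monotonicity claim $\fk' = \mu\ast\fk \Rightarrow (\fk')^\diamond \preccurlyeq \fk^\diamond$; this single-step claim is precisely the content of the paper's Proposition~\ref{prop:adapteq}, which in fact gives the stronger conclusion $(\fk')^\diamond = \kn'\ast\fk^\diamond$ for a single explicit knotset $\kn'$, so your chaining and transitivity steps go through. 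The two proofs therefore rest on the same decisive calculation (commuting an adapted knotset past a prior one), and your version arguably packages it more cleanly; what you lose is that the paper's iterated construction exhibits the witnessing sequence of knotsets explicitly. One correction to your sketch of the deferred calculation: the identity that makes the kernel--potential products align is not the measure identity $\gamma_p^\ast K_p(\varphi)=\gamma_p(\varphi)$ from Proposition~\ref{prop:predequiv}, but the twisted-kernel composition $R^{K(H)}K^H = (RK)^H$ of Proposition~\ref{prop:twisteq}, applied in the form $(K_{p-1}^{G_{p-1}}R_p)^{K_p(G_p)} = (K_{p-1}^{G_{p-1}})^{M_p(G_p)}R_p^{K_p(G_p)}$; without it the explicit knotset linking $(\fk')^\diamond$ to $\fk^\diamond$ cannot be identified.
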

Theorem~\ref{th:madaptknotopt} states that if $\kn_{t}^\star$ are adapted knotsets for $t\in\bseq{1}{m}$, then $\kn_{m}^\star\ast\cdots\ast\kn_1^\star\ast\fk \preccurlyeq \kn_{m}\ast\cdots\ast\kn_1\ast\fk$, and hence a sequence of~$m$ adapted knotsets have a greater variance reduction than any other sequence of $m$ knotsets, $\kn_{m}$ for $t\in\bseq{1}{m}$.

As adapted knotsets reduce the variance by the maximum amount of any knotset in each application, a natural question to ask is; to what extent can the asymptotic variance be reduced by repeated applications of adapted knotsets? Theorem~\ref{th:minimalvar} describes the minimal variance achievable by knots and/or knotsets.
\begin{theorem}[Minimal variance from knots]\label{th:minimalvar}
For every model $\fk \in \fkclass_n$ there exists a sequence of knot(sets) $\kn_n,\kn_{n-1},\ldots,\kn_1$ such that the model
\begin{equation*}
    \fk^\ast = \kn_n \ast \kn_{n-1} \ast \cdots \ast \kn_1 \ast \fk
\end{equation*} 
has asymptotic variance terms satisfying
\begin{equation*}
    v_{n,n}^\ast(\varphi) = v_{n,n}(\varphi),\quad v_{p,n}^\ast(\varphi)=0,\;\text{for}\; p \in \bseq{0}{n-1}
\end{equation*}
    for all $\varphi \in \mathcal{F}(\fk)$ where $v_{p,n}^\ast(\varphi)$ and $v_{p,n}(\varphi)$ are the asymptotic variance terms for $\fk^\ast$ and $\fk$ respectively.
\end{theorem}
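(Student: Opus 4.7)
The plan is to take $\kn_1, \ldots, \kn_n$ to be $n$ successive adapted knotsets: set $\fk^{(0)} = \fk$ and, for $k \in \bseq{1}{n}$, let $\fk^{(k)} = \kn_k \ast \fk^{(k-1)}$ where $\kn_k$ is the adapted knotset for $\fk^{(k-1)}$ (Example~\ref{ex:adaptedknotset}); the candidate is $\fk^\ast = \fk^{(n)}$. Compatibility of each $\kn_k$ with $\fk^{(k-1)}$ is automatic from the definition of an adapted knotset, and Theorem~\ref{th:var} ensures $\varphi \in \mathcal{F}(\fk^{(k)})$ throughout.

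The key observation is a \emph{collapse} of the intermediate state spaces under repeated adapted knotsets. An adapted knotset uses $R_0 = \delta_0$ and $R_p = \mathrm{Id}$ for $p \in \bseq{1}{n-1}$, and $M_p^\ast = K_{p-1}^{G_{p-1}} R_p$ by Proposition~\ref{prop:knotsetop}, so a single application of an adapted knotset transforms the state spaces as $\mathsf{X}_0 \mapsto \{0\}$, $\mathsf{X}_p \mapsto \mathsf{X}_{p-1}$ for $p \in \bseq{1}{n-1}$, and $\mathsf{X}_n \mapsto \mathsf{X}_n$. A simple induction on $k$ then yields $\mathsf{X}_p^{(k)} = \{0\}$ for every $p \in \bseq{0}{k-1}$, so at $k = n$ every intermediate state space is a singleton, each $\gamma_p^{(n)}$ for $p<n$ is a scaled Dirac measure on $\{0\}$, and $\eta_p^{(n)}$ is a point mass with $\mathrm{Var}_{\eta_p^{(n)}}[f] \equiv 0$.

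To finish, I rewrite the variance term in \eqref{eq:asyvarpred} using $\gamma_p Q_{p,n}(\varphi) = \gamma_n(\varphi)$ and $\eta_n(\varphi) = \gamma_n(\varphi)/\gamma_n(1)$ to obtain the identity
\begin{equation*}
v_{p,n}(\varphi) = \frac{\gamma_p(1)^2}{\gamma_n(1)^2}\, \mathrm{Var}_{\eta_p}\!\left[Q_{p,n}(\varphi)\right].
\end{equation*}
Applied to $\fk^{(n)}$, the point-mass property forces $v_{p,n}^\ast(\varphi) = 0$ for $p \in \bseq{0}{n-1}$, while $v_{n,n}^\ast(\varphi) = v_{n,n}(\varphi)$ is immediate from the invariance $\gamma_n^\ast = \gamma_n$ in Proposition~\ref{prop:invar}. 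The main obstacle is the inductive state-space bookkeeping across $n$ iterations of the adapted knotset, particularly reconciling $R_0 = \delta_0$ (which reinitialises the chain at the fixed singleton) with $R_p = \mathrm{Id}$ (which shifts the time-$p$ space to the previous time's space); once this singleton collapse is established, the variance conclusions follow directly from the display above and Proposition~\ref{prop:invar}.
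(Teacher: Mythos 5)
Your proposal is correct and follows essentially the same route as the paper: the paper proves the result by exhibiting the model $\fk_n$ of Example~\ref{ex:perfadapt} (adapted $t$-knots applied for $t=0,\ldots,n-1$), whose intermediate dynamics are degenerate so that $Q_{p,n}(\varphi)$ is constant and $v_{p,n}^\ast(\varphi)=0$ for $p<n$, with $v_{n,n}^\ast(\varphi)=v_{n,n}(\varphi)$ from terminal-measure invariance --- exactly your singleton-collapse argument. Your variant of iterating $n$ adapted knotsets instead of single adapted knots is one the paper itself notes yields the same final model, and your explicit identity $v_{p,n}(\varphi) = \frac{\gamma_p(1)^2}{\gamma_n(1)^2}\mathrm{Var}_{\eta_p}[Q_{p,n}(\varphi)]$ is a clean way to finish.
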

Example~\ref{ex:perfadapt} provides a (non-unique) construction of a sequence of knot-models, $\fk_t$ for $t\in\bseq{1}{n}$, where the corresponding $v_{p,n}(\varphi)$ terms are zero for all $p<t$. Hence the model $\fk_n$ proves the existence of a sequence of knots in Theorem~\ref{th:minimalvar}. In fact, $\fk_n$ produces exact samples from the terminal predictive distribution, indicating that repeated applications of knotsets to a Feynman--Kac model can yield a perfect sampler. In an SMC algorithm, the exact samples will be independently and identically distributed only if adaptive resampling \citep{kong1994sequential,liu1995blind,del2012adaptive} is used.
\begin{example}[A sequence of adapted knot-models]\label{ex:perfadapt}
    For $t \in \bseq{0}{n-1}$, consider a sequence of $t$-knots $\kn_{t}^\diamond$ and models $\fk_{t+1} = \kn_{t}^\diamond \ast \fk_{t}$ with initial model $\fk_{0}=(M_{0:n},G_{0:n})$. Let~$\eta_{0:n}$ be the predictive probability measures for $\fk_0$. If $\kn_{t}^\diamond$ is an adapted $t$-knot of $\fk_{t}$  for $t \in \bseq{0}{n-1}$ then $\fk_{t} = (M_{t,0:n},G_{t,0:n})$ such that
    \begin{equation*}
        M_{t,0} = \delta_0, \quad M_{t,t}(0,\cdot) = \eta_{t}, \quad M_{t,p} = \begin{cases}
            \mathrm{Id} &\text{if}~p \in \bseq{1}{t-1}~\text{and}~ t \geq 2,\\
            M_p &\text{if}~p \in \bseq{t+1}{n}~\text{and}~ t \leq n-1,\\
        \end{cases}
    \end{equation*}
for $t \in \bseq{1}{n}$. Whilst the potential functions are
    \begin{equation*}
    G_{t,p} =
    \begin{cases}
        \eta_p(G_p) & \text{if}~p \in \bseq{0}{t-1},\\
        G_{p} & \text{if}~p \in \bseq{t}{n}.
    \end{cases}
    \end{equation*}
\end{example}
The sequence of models $\fk_t$ in Example~\ref{ex:perfadapt} accumulate zero asymptotic variance from times $p\in \bseq{0}{t-1}$ without changing the asymptotic variance in later times $p \in \bseq{t}{n}$. Applying a sequence of adapted knotsets would reduce the overall variance faster and yield the same $\fk_n$ but is more complicated to describe and less informative as an example.

For the final model, $\fk_n$, the remaining variance is only incurred at the terminal time. We can view the SMC algorithm on $\fk_n$ with adaptive resampling as equivalent to an importance sampler using $\eta_n$ as the importance distribution and  weight function $G_n(x_n)\prod_{p=0}^{n-1}\eta_p(G_p) = G_n(x_n)\gamma_{n}(1)$. From the importance sampling view, we know that to reduce the remaining variance its minimal value, we will need to consider both the terminal potential and the test function of interest. Hence we note that terminal knots, introduced next, will need a treatment that reflect this, and is necessarily different from standard knots.

\section{Tying terminal knots}\label{sec:tyingterminalknots}
The knots considered so far have only acted at times $p \in \bseq{0}{n-1}$ and have led to a variance ordering for all terminal particle approximations of relevant test functions. When considering particle approximations of a fixed test function, a terminal knot can be used to (further) reduce the asymptotic variance. Compared to standard knots, terminal knots require special treatment to ensure that the resulting Feynman--Kac model retains the same horizon~$n$ and terminal measure. Naively adapting the knot procedure from Section~\ref{sec:tyknot} would result in a model with an $n+1$ horizon and asymptotic variance that may be difficult to compare. As such, our approach is to explicitly extend the state-space of the terminal time to prepare the Feynman--Kac model for use with terminal knots. We introduce such extended models in Section~\ref{sec:extendedmodels}.

\subsection{Extended models}\label{sec:extendedmodels}
 Any Feynman--Kac model with terminal elements $M_n$ and $G_n$ can be trivially extended by replacing these terminal components with $M_n \otimes \mathrm{Id}$ and $G_n \otimes 1$ respectively. This replacement artificially extends the horizon and preserves the terminal measures, without inducing further resampling events. We generalise this notion in Definition~\ref{def:extfkmodel}, describing a~$\phi$-extension that is useful to characterise variance reduction and equivalence among models with terminal knots for specific test functions.

\begin{definition}[$\phi$-extended Feynman--Kac model]\label{def:extfkmodel}
    Let $\fk = (M_{0:n},G_{0:n})$ and $\phi \in \mathcal{L}(\hat\gamma_n)$ be a $\hat\gamma_n$-a.e.\ positive function. The $\phi$-extended model of $\fk$, $\fk^\phi = (M_{0:n}^\phi,G_{0:n}^\phi)$, has terminal Markov kernel
    $M^\phi_n = M_n \otimes \mathrm{Id}$ where the identity kernel is defined on $(\mathsf{X}_n,\mathcal{X}_n)$, and terminal potential function $G^\phi_n = (G_n\cdot\phi) \otimes \phi^{-1}$.
    The remaining kernels and potentials are unchanged.
\end{definition}
We will refer to $\fk$ as the reference model for the $\phi$-extension and to $\phi$ as the target function. As with the Markov kernels and potentials, marginal measures of $\fk^\phi$ will be distinguished with a $\phi$ superscript. Note that the use of superscript~$\phi$ will be reserved for extended models and should not be confused with twisted Markov kernels or measures. A $\phi$-extended Feynman--Kac model can be thought of as a superficial change to the model, with several equivalences stated next. This construction ensures that the particle approximations are unchanged, and prepares the model for use with terminal knots. It is clear from Definition~\ref{def:extfkmodel} that the non-terminal measures of a $\phi$-extended model are equivalent to that of the reference model. We characterise the equivalences for terminal measures and the asymptotic variance in Proposition~\ref{prop:extequiv}.
\begin{proposition}[$\phi$-extended model equivalences]\label{prop:extequiv}
    Consider the $\phi$-extended model $\fk^\phi$ and reference model $\fk$. 
    Let $\gamma_n$ and $\hat\gamma_n$ be marginal terminal measures of $\fk$ and $\hat\sigma^2$ be the asymptotic variance map. If $\gamma_n^\phi$ and $\hat\gamma_n^\phi$ are the marginal terminal measures of $\fk^\phi$ and $\hat{\sigma}^{2}_\phi$ is the asymptotic variance map then
    \begin{enumerate}
        \item $\gamma_n^\phi(1\otimes \varphi) = \gamma_n(\varphi)$ for all $\varphi \in \mathcal{L}(\gamma_n)$.
        \item $\hat\gamma^\phi_{n}(1\otimes\varphi) = \hat\gamma_{n}(\varphi)$ for all $\varphi \in \mathcal{L}(\hat\gamma_{n})$.
        \item $\hat{\sigma}^{2}_\phi(1 \otimes \varphi) = \hat{\sigma}^{2}(\varphi)$ for all $\varphi \in \hat{\mathcal{F}}(\fk)$.
    \end{enumerate}

\end{proposition}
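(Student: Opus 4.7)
My plan rests on two observations: the $\phi$-extension only alters the model at time $n$, so $\gamma_p^\phi = \gamma_p$ and $G_p^\phi = G_p$ for $p < n$; and the identity kernel factor in $M_n^\phi = M_n \otimes \mathrm{Id}$ concentrates the extended coordinate on the diagonal. Using the recursion $\gamma_n^\phi = \gamma_{n-1}^\phi \cdot G_{n-1}^\phi \cdot M_n^\phi$ I would first establish the central identity
\begin{equation*}
\gamma_n^\phi(\rmd [x,y]) = \gamma_n(\rmd x)\,\delta_x(\rmd y),
\end{equation*}
from which claim (1) is immediate by Fubini. Claim (2) follows by computing $\hat\gamma_n^\phi(1\otimes\varphi) = \gamma_n^\phi\bigl(G_n^\phi \cdot (1\otimes\varphi)\bigr)$: the $\delta_x$ collapses the $y$-integration and the cancellation $\phi(x)\phi^{-1}(x) = 1$ (valid $\hat\gamma_n$-a.e.\ since $\phi$ is a.e.\ positive) leaves $\int \gamma_n(\rmd x)\, G_n(x) \varphi(x) = \hat\gamma_n(\varphi)$.

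For claim (3), I would match each summand of $\hat\sigma^2_\phi(1\otimes\varphi)$ with the corresponding summand of $\hat\sigma^2(\varphi)$ using the defining formula $\hat v_{p,n}(\varphi) = v_{p,n}(G_n\cdot\varphi)/\eta_n(G_n)^2$. The central intermediate step is
\begin{equation*}
Q_{p,n}^\phi\bigl(G_n^\phi \cdot (1 \otimes \varphi)\bigr)(x_p) = Q_{p,n}(G_n \cdot \varphi)(x_p),\quad p \in \bseq{0}{n-1},
\end{equation*}
which reduces (since $Q_q^\phi = Q_q$ for $q < n$) to the one-step identity $Q_n^\phi(G_n^\phi \cdot (1\otimes\varphi))(x_{n-1}) = Q_n(G_n\cdot\varphi)(x_{n-1})$, obtained once more via the diagonal collapse and $\phi\phi^{-1}$ cancellation. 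Combined with $\gamma_p^\phi = \gamma_p$ for $p < n$ and the invariants $\gamma_n^\phi(1) = \gamma_n(1)$, $\eta_n^\phi(G_n^\phi) = \eta_n(G_n)$, and $\eta_n^\phi(G_n^\phi \cdot (1\otimes\varphi)) = \eta_n(G_n\cdot\varphi)$ from claims (1)--(2), this yields $\hat v_{p,n}^\phi(1\otimes\varphi) = \hat v_{p,n}(\varphi)$ for every $p < n$. The $p=n$ term is the direct analogue of claim (2) applied to the square $(G_n^\phi\cdot(1\otimes\varphi))^2$, whose diagonal restriction is $(G_n\cdot\varphi)^2$; summing over $p$ yields claim (3), and in particular the finiteness of the right-hand side transfers to the left.

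The main obstacle is purely bookkeeping: carefully tracking the $\phi$ factors so they cancel on the appropriate $\hat\gamma_n$-a.e.\ set (the set $\{\phi = 0\}$ is $\hat\gamma_n$-null by hypothesis, so the undefined values of $\phi^{-1}$ there are irrelevant once multiplied by $G_n \cdot \phi$ and integrated against $\gamma_n(\rmd x)\delta_x(\rmd y)$), and unwinding the nested kernel composition in $Q_{p,n}^\phi$ to reduce everything to the single-step argument at time $n$. Once the diagonal identity for $\gamma_n^\phi$ is in hand, all three claims follow by elementary manipulations.
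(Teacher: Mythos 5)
Your proposal is correct and follows essentially the same route as the paper's proof: the diagonal collapse induced by the $\mathrm{Id}$ factor in $M_n^\phi$ (which the paper phrases as $\gamma_n^\phi(\varphi_1\otimes\varphi_2)=\gamma_n(\varphi_1\cdot\varphi_2)$ rather than as a statement about the measure itself), the $\hat\gamma_n$-a.e.\ cancellation $\phi\cdot\phi^{-1}=1$ for Part~2, and the term-by-term matching of $\hat v_{p,n}$ via $Q_{p,n}^\phi(G_n^\phi\cdot[1\otimes\varphi])=Q_{p,n}(G_n\cdot\varphi)$ with the $p=n$ term handled separately through the square. No gaps.
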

The $\phi$-extended model creates an additional pseudo-time step in the Feynman--Kac model which can then be manipulated by the terminal knot defined analogously to a standard knot. To work with terminal knots, we will replace reference models with their $\phi$-extended counterpart.

\subsection{Terminal knots}\label{sec:terminalknots}

The definition of a terminal knot is essentially equivalent to that of standard knot in Definition~\ref{def:knot}. However, a knot $(t,R,K)$ will only be terminal with respect to a model $\fk \in \fkclass_n$ when $t=n$. The key difference when applying a terminal knot is the additional compatibility condition on the model described in Definition~\ref{def:terminalkncompat}. In essence, we require an additional time step at $n+1$ for the terminal knot to operate analogously to standard knots, but we do not want an additional resampling step.
\begin{definition}[Terminal knot compatibility] \label{def:terminalkncompat}
Let $\fk^\circ =(M_{0:n}^\circ,G_{0:n}^\circ)$ be a Feynman--Kac model and $\kn=(n,R,K)$ be a terminal knot. The model $\fk^\circ$ and terminal knot $\kn$ are compatible if
\begin{enumerate}[(i)]
    \item The model satisfies $M_{n}^\circ = U \otimes V^{G_n\cdot \phi}$  and $G_n^\circ = V(G_n\cdot \phi) \otimes \phi^{-1}$, for some Markov kernels $U$ and $V$, reference model $\fk$ with terminal potential $G_n$, and target function~$\phi$. 
    \item The knot satisfies $U = RK$.
\end{enumerate}
\end{definition}
Overall, Definition~\ref{def:terminalkncompat} extends the notion of knot-model compatibility to the special case of terminal knots, and hence expands the compatible knot and model set, $\mathscr{D}_{n}$, given in \eqref{eq:knotmodelcompat}. We will refer to compatibility condition~(i) as \textit{model compatibility}.
Recall that the knot operator, $\ast: \mathscr{D}_{n} \rightarrow \fkclass_n$, maps compatible knot-model pairs to the space of Feynman--Kac models for horizon $n \in \nn{0}$. Definition~\ref{def:termknotop} extends this operation to terminal knots.
\begin{definition}[Knot operator, terminal knots] \label{def:termknotop}
     Consider a compatible terminal knot $\kn = (n,R,K)$ and model $\fk = (M_{0:n},G_{0:n})$. If $M_{n} = P_1 \otimes P_2$ and $G_{n} = H \otimes \phi^{-1}$ then the knot operation yields $\kn \ast \fk = (M_{0:n}^\ast,G_{0:n}^\ast)$ where
    \begin{equation*}
        M_{n}^\ast = R \otimes  K^{H} P_2, \quad G_n^\ast = K(H) \otimes \phi^{-1},
    \end{equation*}
    for some Markov kernels $P_1$ and $P_2$, and functions $H$ and $\phi$.
    The remaining Markov kernels and potential functions are identical to the original model, that is $M_{p}^\ast = M_{p}$ and $G_{p}^\ast = G_{p}$ for $p \in \bseq{0}{n-1}$.
\end{definition}

Note that the existence of $P_1$, $P_2$, $H$, and $\phi$ in Definition~\ref{def:termknotop} are guaranteed by compatibility condition~(i), ensuring model compatibility. However, this still leaves the question of which models have the required form. Taking $U = M_n$ and $V=\mathrm{Id}$, demonstrates that a $\phi$-extended model is a compatible model. Further, we can demonstrate the set of compatible models, i.e.\ those satisfying compatibility condition~(i), are closed under application of knots.
\begin{proposition}[Compatible models closed under knots] \label{prop:closedmodelcompat}
    Consider a model $\fk^\circ \in \fkclass_n$ satisfying compatibility condition (i) in Definition~\ref{def:terminalkncompat}. If $\kn$ is a (terminal) knot compatible with $\fk^\circ$ then $\kn \ast \fk^\circ$ will also satisfy the same compatibility condition.
\end{proposition}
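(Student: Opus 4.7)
The plan is to verify that the terminal components of $\kn\ast\fk^\circ$ retain the form $M_n^\ast = U'\otimes (V')^{G_n\cdot\phi}$ and $G_n^\ast = V'(G_n\cdot\phi)\otimes\phi^{-1}$ required by compatibility condition~(i), by splitting into the two possible types of knot: a standard knot $\kn=(t,R,K)$ with $t<n$, or a terminal knot $\kn=(n,R,K)$. Since condition~(i) only constrains the terminal pair and the non-terminal kernels and potentials are left unchanged in either case by Definitions~\ref{def:knotop} and~\ref{def:termknotop}, I only need to track $M_n^\circ$ and $G_n^\circ$ under each type.

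For a standard knot with $t<n-1$ the terminal pair is not touched at all, so the claim is immediate with $U'=U$ and $V'=V$. For $t=n-1$, Definition~\ref{def:knotop} sets $M_n^\ast = K^{G_{n-1}} M_n^\circ = K^{G_{n-1}}(U\otimes V^{G_n\cdot\phi})$, while $G_n^\ast = G_n^\circ$. The elementary factorisation
\begin{equation*}
 L(A\otimes B)(x,\rmd[y,z]) = (LA)(x,\rmd y)\,B(y,\rmd z),
\end{equation*}
which follows directly from the tensor-product and composition definitions since $B(y,\rmd z)$ carries no $w$-dependence, then yields $M_n^\ast = (K^{G_{n-1}}U)\otimes V^{G_n\cdot\phi}$, so taking $U'=K^{G_{n-1}}U$ and $V'=V$ suffices (the composition of Markov kernels is Markov).

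For a terminal knot with $U=RK$, Definition~\ref{def:termknotop} applied with $P_1=U$, $P_2=V^{G_n\cdot\phi}$, $H=V(G_n\cdot\phi)$ gives $M_n^\ast = R\otimes K^{V(G_n\cdot\phi)}V^{G_n\cdot\phi}$ and $G_n^\ast = K(V(G_n\cdot\phi))\otimes\phi^{-1}$. The crucial step is the twisted-kernel composition identity $(KV)^H = K^{V(H)} V^H$ for non-negative integrable $H$; I would prove this by unpacking both sides against the common base kernel $KV(x,\rmd z) = \int K(x,\rmd y)\,V(y,\rmd z)$, where on $\{KV(H)>0\}$ both sides reduce to $[KV(H)(x)]^{-1}\int K(x,\rmd y)V(y,\rmd z)H(z)$, and on the complementary null set the convention $K^H=K$ makes the two sides agree after a short case check. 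With $H=G_n\cdot\phi$ the identity collapses the second tensor factor of $M_n^\ast$ to $(KV)^{G_n\cdot\phi}$, so $U'=R$ and $V'=KV$ realise the required form, and $G_n^\ast = (KV)(G_n\cdot\phi)\otimes\phi^{-1}$ matches $V'(G_n\cdot\phi)\otimes\phi^{-1}$. The main obstacle is this edge-case bookkeeping in the twisted-kernel identity on the null set $\{V(H)=0\}$; beyond that the argument is a direct substitution from the knot-operator definitions.
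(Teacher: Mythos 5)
Your proof is correct and follows essentially the same route as the paper's: the same case split into non-terminal and terminal knots, the same identification of $U'$ and $V'$ in each case, and the same key identity $(KV)^H = K^{V(H)}V^H$, which the paper has already isolated as Proposition~\ref{prop:twisteq} (including the null-set bookkeeping you flag as the main obstacle). The only differences are cosmetic: you make the sub-case $t<n-1$ and the tensor factorisation $L(A\otimes B)=(LA)\otimes B$ explicit where the paper leaves them implicit.
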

Proposition~\ref{prop:closedmodelcompat} demonstrates that compatibility condition~(i) of Definition~\ref{def:terminalkncompat} is preserved by the applications of knots. Hence, models of the form $\kn_m \ast\cdots \ast \kn_1 \ast \fk^\circ$ will satisfy model compatibility, where $\kn_p$ are (terminal) knots or knotsets and $\fk^\circ$ is a $\phi$-extended model. 
Typically though, a terminal knot will be the first to be applied to a $\phi$-extended model, which we demonstrate in Example~\ref{ex:extendterminalknot}.
\begin{example}[Terminal knot for $\phi$-extended model]\label{ex:extendterminalknot}
    If $\fk^\phi$ is the $\phi$-extended model of $\fk= (M_{0:n},G_{0:n})$ and $\kn = (n,R,K)$ is a terminal knot then $\kn \ast \fk^\phi = (M_{0:n}^\ast,G_{0:n}^\ast)$ where $M_{n}^\ast = R \otimes  K^{G_n\cdot\phi}$ and $G_n^\ast = K(G_n \cdot \phi) \otimes \phi^{-1}$.
\end{example}
Example~\ref{ex:extendterminalknot} shows that applying a terminal knot to a $\phi$-extended model incorporates information from the terminal potential function and the target function $\phi$ into the new model. The use of {$\phi$-extensions} and specific target function can be motivated by drawing a comparison to optimal importance distributions (see discussion in Section~\ref{sec:adaptedknots}). These components are carefully constructed to preserve the marginal distributions, at least in some form, and facilitate our variance reduction results in Section~\ref{sec:terminalvar}. 

Proposition~\ref{prop:terminalinvar} states the invariance of terminal measures when a terminal knot is applied.
\begin{proposition}[Terminal knot-model invariants]\label{prop:terminalinvar}
Let $\kn =(n,R,K)$ be a terminal knot and consider knot-model $\fk^\ast = \kn\ast \fk^\circ$. For measurable function $\varphi$, the knot-model $\fk^\ast$ will have the following invariants and equivalences:
\begin{enumerate}
    \item $\gamma^\ast_{p}(\varphi) = \gamma_{p}(\varphi)$, $\eta^\ast_{p}(\varphi) = \eta_{p}(\varphi)$, $\hat\gamma^\ast_{p}(\varphi) = \hat\gamma_{p}(\varphi)$,  and $\hat\eta^\ast_{p}(\varphi) = \hat\eta_{p}(\varphi)$ for all~${p \in \bseq{0}{n-1}}$.
    \item $\hat\gamma^\ast_{n}(1 \otimes \varphi) = \hat\gamma_{n}(1 \otimes \varphi)$ and $\hat\eta^\ast_{n}(1 \otimes \varphi) = \hat\eta_{n}(1 \otimes \varphi)$.
\end{enumerate}
\end{proposition}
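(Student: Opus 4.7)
The plan is to prove each part by direct computation, leveraging the algebraic cancellations built into the twisted-kernel notation. Part~1 is essentially a book-keeping observation: Definition~\ref{def:termknotop} stipulates $M_p^\ast = M_p^\circ$ and $G_p^\ast = G_p^\circ$ for $p \in \bseqo{0}{n}$, so the recursion \eqref{eq:marginalrec} applied up to time $n-1$ produces identical predictive measures, and the updated and normalised counterparts at those times follow immediately. A single sentence should suffice for this part.

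For Part~2, the strategy is to write out $\hat\gamma_n(1\otimes\varphi)$ and $\hat\gamma_n^\ast(1\otimes\varphi)$ explicitly using states $(x_n,y_n)$ at time $n$, and to show that after the twisting factors cancel, both collapse to the same expression. For the reference model, $M_n^\circ = U \otimes V^{G_n\cdot\phi}$ and $G_n^\circ = V(G_n\cdot\phi) \otimes \phi^{-1}$, so the key identity
\begin{equation*}
V^{G_n\cdot\phi}(x_n,\rmd y_n)\,V(G_n\cdot\phi)(x_n) = V(x_n,\rmd y_n)\,G_n(y_n)\,\phi(y_n)
\end{equation*}
together with the cancellation of $\phi$ against $\phi^{-1}$ yields $\hat\gamma_n(1\otimes\varphi) = \hat\gamma_{n-1}U(\rmd x_n)\,V(x_n,\rmd y_n)\,G_n(y_n)\varphi(y_n)$ once integrated. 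For the knot-model, $M_n^\ast = R \otimes K^{V(G_n\cdot\phi)} V^{G_n\cdot\phi}$ and $G_n^\ast = K(V(G_n\cdot\phi)) \otimes \phi^{-1}$. Applying the analogous identity
\begin{equation*}
K^{V(G_n\cdot\phi)}(x_n,\rmd z)\,K(V(G_n\cdot\phi))(x_n) = K(x_n,\rmd z)\,V(G_n\cdot\phi)(z),
\end{equation*}
followed by the previous identity at $z$, produces $\hat\gamma_n^\ast(1\otimes\varphi) = \hat\gamma_{n-1}RK(\rmd z)\,V(z,\rmd y_n)\,G_n(y_n)\varphi(y_n)$. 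The compatibility condition $U = RK$ then equates the two expressions. The analogous statement for $\hat\eta_n$ follows by setting $\varphi = 1$ to obtain $\hat\gamma_n^\ast(1) = \hat\gamma_n(1)$ and then normalising.

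The main (minor) obstacle is the corner case arising from the convention that $V^{H} = V$ whenever $V(H) = 0$, and likewise for $K^{V(G_n\cdot\phi)}$. I would verify that in those null-set situations, both sides of each key identity vanish independently: if $V(G_n\cdot\phi)(x_n) = 0$ then $G_n\cdot\phi$ is $V(x_n,\cdot)$-null and the right-hand side $V(x_n,\rmd y_n)G_n(y_n)\phi(y_n)$ integrates to zero against any bounded test function, matching the left-hand side $V(x_n,\rmd y_n)\cdot 0$. A parallel argument handles the $K$-twisting. With these edge cases dispensed with, the identities hold unconditionally and the cancellation chain producing Part~2 proceeds without restriction.
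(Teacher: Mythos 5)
Your proposal is correct and takes essentially the same route as the paper's proof: Part~1 is immediate from the unchanged kernels and potentials before time $n$, and Part~2 combines the untwisting identity (Proposition~\ref{prop:untwist}, which your two displayed identities restate in kernel form, null-set corner cases included) with the compatibility condition $U=RK$. The only cosmetic difference is that you also unfold the $V$-twist and cancel $\phi\cdot\phi^{-1}$ on both sides, whereas the paper stops as soon as the knot-model expression is recognised as $\hat\gamma_{n-1}(P_1\otimes P_2)(H\otimes[\phi^{-1}\cdot\varphi])$, i.e.\ the reference-model quantity.
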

Two types of special terminal knots are stated in state in Example~\ref{ex:termtrivialknot}~and~\ref{ex:termadaptedknot} which are the terminal counterparts to trivial and adapted standard knots. Note that for compatibility the reference model will need to be $\phi$-extended before these knots can be applied.
\begin{example}[Trivial terminal knot] \label{ex:termtrivialknot}
    Consider a terminal knot $\kn=(n, P_1,\mathrm{Id})$ and model $\fk \in\fkclass_n$ where the terminal kernel is $M_n = P_1 \otimes P_2$. The model resulting from $\kn \ast \fk = \fk$.
\end{example}
As in the standard case, the trivial terminal knot does not change the Feynman--Kac model. At the other extreme is the adapted terminal knot, for which we discuss optimality in Section~\ref{sec:terminalopt}.
\begin{example}[Adapted terminal knot] \label{ex:termadaptedknot}
    Consider a terminal knot $\kn=(n,\mathrm{Id},P_1)$ and model~$\fk \in \fkclass_n$ where the terminal Markov kernel and potential are $M_n = P_1 \otimes P_2$ and $G_n = H \otimes \phi^{-1}$ respectively. The model $\fk^\ast = \kn \ast \fk$ has new terminal kernel and potential function, 
    \begin{equation*}
        M_{n}^\ast = \mathrm{Id} \otimes  P_1^{H} P_2, \quad G_n^\ast = P_1(H) \otimes \phi^{-1}.
    \end{equation*} Whilst the remaining kernels and potentials are unchanged. Further, if the initial model is a $\phi$-extension, say $\fk^\phi$ where $\fk=(M_{0:n},G_{0:n})$, then 
    \begin{equation*}
        M_{n}^\ast = \mathrm{Id} \otimes  M_n^{G_n\cdot\phi}, \quad G_n^\ast = M_n(G_n \cdot \phi) \otimes \phi^{-1}.
    \end{equation*}
\end{example}

To conclude our treatment of terminal knots, we now define \textit{terminal knotsets}. Unlike standard knotsets, a terminal knotset includes a knot at the terminal time. 
\begin{definition}[Terminal knotsets and compatibility] \label{def:terminalknotset}
A terminal knotset $\kn=(R_{0:n},K_{0:n})$ is specified by $n+1$ knots of the form $\kn_p=(p,R_p,K_p)$ for $p \in \bseq{0}{n}$. Such a knotset is compatible with $\fk \in \fkclass_n$ if each $(p,R_p,K_p)$-knot is compatible with $\fk$ for $p\in\bseq{0}{n}$.
\end{definition}
Note that the compatibility condition for the terminal knot in the set ensures that a $\phi$-extension has occurred on a reference model as part of the construction of $\fk$.
\begin{definition}[Terminal knotset operator] \label{def:terminalknotsetop}
Let $\kn=(R_{0:n},K_{0:n})$ be a terminal knotset compatible with~$\fk \in \fkclass_n$. The terminal knotset operation is defined as $\kn \ast \fk = \kn_{0} \ast \kn_{1} \ast \cdots \ast \kn_{n} \ast \fk$ where~$\kn_p=(p,R_p,K_p)$ for $p \in \bseq{0}{n}$.
\end{definition}
When considering the application of a series of knots to a model, it is not necessary to apply terminal knots first. However, as with standard knotsets, such an order ensures that the individual compatibility conditions correspond to the overall compatibility condition.

A special case of Definition~\ref{def:terminalknotset} is an adapted terminal knotset, which consists of an adapted knotset extended to include an adapted terminal knot. We explore terminal knotsets further in Section~\ref{sec:normalisingconstant} to reduce the variance of normalising constant estimates. 

\subsection{Variance reduction and ordering}\label{sec:terminalvar}

With terminal measure equivalence established in Proposition~\ref{prop:terminalinvar}, we can describe the variance reduction from terminal knots for functions of the form $1 \otimes \varphi$. Recall that the model must be $\phi$-extended before we can apply terminal knots. We start with Theorem~\ref{th:terminaldiffvar}, stating a general result for the difference in asymptotic variance after applying a terminal knot, before carefully specifying the models and test functions for which we can state an asymptotic variance ordering for.

\begin{theorem}[Variance difference with a terminal knot]\label{th:terminaldiffvar}
Consider models $\fk=(M_{0:n},G_{0:n})$ and $\fk^\ast = \kn \ast \fk$ for terminal knot $\kn = (n,R,K)$. If $\bar\varphi = 1\otimes \varphi \in \hat{\mathcal{F}}(\fk)$ then
\begin{equation*}
    \hat\sigma^2(\bar\varphi) - \hat{\sigma}_\ast^2(\bar\varphi) = \frac{\hat\eta_{n-1} R\{\mathrm{Cov}_K(H,{H} \cdot P_2[\{\phi^{-1} \cdot \varphi\}^2])\}}{\eta_n(G_n)^2},
\end{equation*}
where $M_n = P_1 \otimes P_2$ and $G_n = H \otimes \phi^{-1}$.
\end{theorem}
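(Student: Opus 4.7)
The plan is to expand both variances via the summation formula $\hat\sigma^2(\bar\varphi)=\sum_{p=0}^n v_{p,n}(G_n\cdot\bar\varphi)/\eta_n(G_n)^2$ and show that only the $p=n$ summand contributes to the difference. Proposition~\ref{prop:terminalinvar} yields $\gamma_p=\gamma_p^\ast$ for $p\in\bseq{0}{n-1}$, while $\eta_n(G_n)$ and $\eta_n(G_n\bar\varphi)$ coincide across the two models, so the common prefactor and the squared-mean piece of each $v_{p,n}$ cancel. What remains is to compare the integrands $\gamma_p(Q_{p,n}(G_n\bar\varphi)^2)$ with $\gamma_p(Q^\ast_{p,n}(G_n^\ast\bar\varphi)^2)$ at each time.

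Because the knot only changes $M_n$ and $G_n$, we have $Q^\ast_p=Q_p$ for $p\leq n-1$, so $Q^\ast_{p,n} = Q_{p+1}\cdots Q_{n-1}Q^\ast_n$, and it suffices to verify the pointwise identity $Q^\ast_n(G_n^\ast\bar\varphi)=Q_n(G_n\bar\varphi)$ on $\mathsf{X}_{n-1}$. Writing $h=H$ and $f=\phi^{-1}\cdot\varphi$ so that $G_n\cdot\bar\varphi=h\otimes f$, and using $M_n=P_1\otimes P_2$ with $P_1=RK$ together with $M_n^\ast=R\otimes K^H P_2$ and $G_n^\ast=K(h)\otimes \phi^{-1}$, the right-hand side evaluates to $G_{n-1}\cdot RK(h\cdot P_2 f)$ while the left-hand side evaluates to $G_{n-1}\cdot R(K(h)\cdot K^H(P_2 f))$. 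The twisted-kernel identity $K(h)\cdot K^H(g)=K(h\cdot g)$, which also holds trivially when $K(h)=0$ by nonnegativity of $h$, collapses the two sides. Consequently, all contributions for $p\leq n-1$ cancel.

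This leaves $\hat\sigma^2(\bar\varphi)-\hat\sigma^2_\ast(\bar\varphi) = [\eta_n((G_n\bar\varphi)^2)-\eta^\ast_n((G_n^\ast\bar\varphi)^2)]/\eta_n(G_n)^2$. I then compute both terminal integrals by unfolding the tensor-product structure: $\gamma_n((G_n\bar\varphi)^2)=\hat\gamma_{n-1}R\{K(h^2\cdot P_2(f^2))\}$ and $\gamma^\ast_n((G_n^\ast\bar\varphi)^2)=\hat\gamma_{n-1}R\{(K(h))^2\cdot K^H(P_2(f^2))\}$. Applying the twist identity once more gives $(K(h))^2\cdot K^H(P_2(f^2))=K(h)\cdot K(h\cdot P_2(f^2))$, so the integrand difference collapses to $K(h^2\cdot P_2(f^2))-K(h)\cdot K(h\cdot P_2(f^2)) = \mathrm{Cov}_K(h,\,h\cdot P_2(f^2))$. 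Finally, dividing by $\gamma_n(1)=\hat\gamma_{n-1}(1)$ converts $\hat\gamma_{n-1}$ into $\hat\eta_{n-1}$, and a further division by $\eta_n(G_n)^2$ produces the stated formula.

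The main obstacle is careful bookkeeping of the tensor-product state at time $n$ and tracking which coordinate each function acts on; once that is done, the twist identity is applied twice (once to erase the $p\leq n-1$ differences and once to reorganise the squared integral at $p=n$) and the algebra distills into the single covariance. A minor subtlety is the degenerate case $K(h)(u)=0$, handled by the convention in the definition of $K^H$ together with $h\geq 0$.
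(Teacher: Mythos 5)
Your proposal is correct and follows essentially the same route as the paper's proof: it isolates the $p\leq n-1$ variance terms via the identity $Q_n^\ast(G_n^\ast\cdot\bar\varphi)=Q_n(G_n\cdot\bar\varphi)$ (the paper's Proposition~\ref{prop:terminalkerneleq}, proved with the same untwisting identity $K(H)\cdot K^H(g)=K(H\cdot g)$), and then computes the terminal-term difference by unfolding the tensor products and applying the untwisting identity a second time to obtain the covariance. The bookkeeping, including the use of the terminal-measure invariants to cancel the squared means and the degenerate case $K(H)=0$, matches the paper's argument.
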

The equivalence of updated terminal measures under $\fk$ and $\fk^\ast$ in Theorem~\ref{th:terminaldiffvar} for a test function $\bar\varphi$ is stated in Proposition~\ref{prop:terminalinvar}.
Clearly, models satisfying $P_2[\{\phi^{-1}\cdot \varphi\}^2] = 1$ almost surely will have a guaranteed variance reduction and there may be certain model classes where more general conclusions can be made. 
We state some sufficient conditions in Corollary~\ref{coro:terminalvar} to ensure a variance reduction.
\begin{corollary}[Variance reduction with a terminal knot]\label{coro:terminalvar}
Consider model $\fk^\circ \in \fkclass_n$ and terminal knot~$\kn = (n,R,K)$ and let $\fk^\ast = \kn \ast \fk^\circ$. For a reference model $\fk= (M_{0:n},G_{0:n})$ and target function $\phi$, if 
\begin{enumerate}
    \item for some $m \in \nn1$ there exists a sequence of knots $\kn_1,\ldots,\kn_m$ such that $\fk^\circ = \kn_m \ast \cdots \ast \kn_1 \ast \fk^\phi$, and
    \item $\varphi \in \hat{\mathcal{F}}(\fk)$ and $\phi = \vert \varphi \vert$, then
\end{enumerate}
\begin{equation*}
    \hat\sigma_\ast^2(1 \otimes \varphi) \leq \hat\sigma_{\circ}^2(1 \otimes \varphi) \leq \hat\sigma^2(\varphi).    
\end{equation*}
Further, if $\fk^\circ = \fk^\phi$ then $\hat\sigma_{\circ}^2(1 \otimes \varphi) = \hat\sigma^2(\varphi)$,
\begin{equation*}
    \hat\sigma^2(\varphi) - \hat{\sigma}_\ast^2(1 \otimes \varphi) = \frac{\hat\eta_{n-1} R\{\mathrm{Var}_{K}(G_n\cdot\vert \varphi \vert)\}}{\eta_n(G_n)^2},
\end{equation*}
and the inequality $\hat\sigma_\ast^2(1 \otimes \varphi) \leq \hat\sigma^2(\varphi)$ is strict if $\hat\eta_{n-1} R\{\mathrm{Var}_{K}(G_n\cdot\vert \varphi \vert)\} > 0$.
\end{corollary}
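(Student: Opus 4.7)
The plan is to break the two inequalities into separate arguments and chain each through results already established. The second inequality $\hat\sigma_{\circ}^2(1\otimes\varphi) \leq \hat\sigma^2(\varphi)$ will follow from two observations: Proposition~\ref{prop:extequiv}(3) supplies the equality $\hat\sigma^2_\phi(1\otimes\varphi) = \hat\sigma^2(\varphi)$, and since hypothesis~(1) gives $\fk^\circ \preccurlyeq \fk^\phi$ in the partial order of Definition~\ref{def:partialorder}, Theorem~\ref{th:varorder} (applied to the updated asymptotic variance) yields $\hat\sigma_\circ^2(1\otimes\varphi) \leq \hat\sigma^2_\phi(1\otimes\varphi)$. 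Composing the two gives the claimed bound, and when $\fk^\circ = \fk^\phi$ the middle step is vacuous, producing the stated equality $\hat\sigma_{\circ}^2(1\otimes\varphi) = \hat\sigma^2(\varphi)$.

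For the first inequality $\hat\sigma_\ast^2(1\otimes\varphi) \leq \hat\sigma_\circ^2(1\otimes\varphi)$, I would apply Theorem~\ref{th:terminaldiffvar} to the pair $(\fk^\circ,\fk^\ast)$ under the knot $\kn = (n,R,K)$. By the model-compatibility condition (i) in Definition~\ref{def:terminalkncompat}, there exist Markov kernels $U,V$ with $M_n^\circ = U \otimes V^{G_n\cdot\phi}$ and $G_n^\circ = V(G_n\cdot\phi) \otimes \phi^{-1}$, so in the theorem's notation $P_2 = V^{G_n\cdot\phi}$ and $H = V(G_n\cdot\phi)$. The central algebraic identity is that when $\phi = \vert\varphi\vert$ one has $\phi^{-1}\cdot\varphi^2 = \phi$ on $\{\phi > 0\}$; applying the twisted-kernel definition gives, at points where $V(G_n\cdot\phi)(x) > 0$,
\begin{equation*}
    P_2[\{\phi^{-1}\cdot\varphi\}^2](x) = \frac{V(G_n\cdot\phi\cdot\phi^{-1}\cdot\varphi^2)(x)}{V(G_n\cdot\phi)(x)} = \frac{V(G_n\cdot\phi)(x)}{V(G_n\cdot\phi)(x)} = 1.
\end{equation*}
At the remaining points $H(x) = 0$ directly kills the product, so the pointwise identity $H \cdot P_2[\{\phi^{-1}\cdot\varphi\}^2] = H$ holds everywhere. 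The covariance in Theorem~\ref{th:terminaldiffvar} then collapses to $\mathrm{Var}_K(H) \geq 0$, establishing the desired inequality for any compatible $\fk^\circ$.

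The explicit difference formula and strictness claim follow by specialising to $\fk^\circ = \fk^\phi$. Here $U = M_n$ and $V = \mathrm{Id}$, yielding $H = G_n\cdot\phi = G_n\cdot\vert\varphi\vert$ and hence $\mathrm{Var}_K(H) = \mathrm{Var}_K(G_n\cdot\vert\varphi\vert)$. The $\phi$-extension leaves non-terminal measures untouched, so $\hat\eta_{n-1}^\phi = \hat\eta_{n-1}$, and Proposition~\ref{prop:extequiv} gives $\eta_n^\phi(G_n^\phi) = \eta_n(G_n)$; substituting into the simplified right-hand side of Theorem~\ref{th:terminaldiffvar} yields the displayed formula, and strictness is immediate from positivity of $\hat\eta_{n-1} R\{\mathrm{Var}_K(G_n\cdot\vert\varphi\vert)\}$. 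The main obstacle I anticipate is carefully justifying the pointwise collapse in the twisted-kernel identity: the equality $\phi^{-1}\cdot\varphi^2 = \phi$ holds only on $\{\phi > 0\}$, so one must handle the zero set of $V(G_n\cdot\phi)$ (where the twisted kernel falls back to $V$) by observing that $H$ already vanishes there, rather than by appealing to any global algebraic identity.
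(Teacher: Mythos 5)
Your treatment of the first inequality and of the explicit difference formula is essentially the paper's own proof: you extract $P_2 = V^{G_n\cdot\phi}$ and $H = V(G_n\cdot\phi)$ from terminal-knot model compatibility, collapse the covariance in Theorem~\ref{th:terminaldiffvar} to $\mathrm{Var}_K(H)\ge 0$ via the pointwise identity $H\cdot P_2[\{\phi^{-1}\cdot\varphi\}^2]=H$ when $\phi=\vert\varphi\vert$ (including the correct handling of the zero set of $V(G_n\cdot\phi)$), and then specialise to $U=M_n$, $V=\mathrm{Id}$ for the quantitative statement. One small slip: the integrand in your twisted-kernel computation should be $G_n\cdot\phi\cdot\{\phi^{-1}\cdot\varphi\}^2 = G_n\cdot\phi^{-1}\cdot\varphi^2$, not $G_n\cdot\phi\cdot\phi^{-1}\cdot\varphi^2$; the conclusion $V(\cdot)=V(G_n\cdot\phi)$ is unaffected.

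The one genuine gap is in the middle inequality $\hat\sigma_\circ^2(1\otimes\varphi)\leq\hat\sigma^2_\phi(1\otimes\varphi)$. You obtain it by asserting $\fk^\circ\preccurlyeq\fk^\phi$ and invoking Theorem~\ref{th:varorder}, but the partial order of Definition~\ref{def:partialorder} and Theorem~\ref{th:varorder} are stated for standard knotsets only, whereas the sequence $\kn_1,\ldots,\kn_m$ in hypothesis~(1) is permitted to contain terminal knots (this hypothesis mirrors Part~2 of Definition~\ref{def:terminalpartialorder}, and the paper's proof explicitly branches on ``if a standard knot'' versus ``if a terminal knot''). For a terminal-knot link in the chain, Theorem~\ref{th:varorder} does not apply. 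The paper closes this by noting that every intermediate model in the chain satisfies model compatibility (Proposition~\ref{prop:closedmodelcompat}), so each terminal-knot link is controlled by exactly the covariance-collapse inequality you have already established ``for any compatible $\fk^\circ$'', while each standard-knot link is controlled by Theorem~\ref{th:var} and Corollary~\ref{coro:var2}. Your argument already contains all the needed ingredients; you just need to route the terminal-knot links through your own first-inequality bound rather than through Theorem~\ref{th:varorder}.
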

Corollary~\ref{coro:terminalvar} presents the incremental variance reduction from a terminal knot applied to a model that can be expressed as a $\phi$-extended model with or without knots. It is written to emphasise the case of updated measures, since terminal knots are defined for an updated measure by convention, but includes predictive measures as a special case when $G_n = 1$. Multiple applications of terminal and standard knots are treated by the partial ordering described in Theorem~\ref{th:terminalvarorder}. Importantly, we can only consider $\varphi$ that are almost everywhere non-zero due the the conditions imposed by the $\phi$-extension in Definition~\ref{def:extfkmodel}.

To reduce the variance for a particle approximation to a probability measure, i.e.\ $\hat\eta_n(\varphi)$, Corollary~\ref{coro:terminalvar} implies that one should set $\phi = \vert \varphi  - \hat\eta_n(\varphi)\vert$. However, this would require knowledge of $\hat\eta_n(\varphi)$ in advance. Iterative schemes could be used to approximate such a terminal knot, but the result would be approximate. We leave investigation of such iterative schemes for future work. As present, terminal knots are most amenable to normalising constant estimation which we consider specifically in Section~\ref{sec:normalisingconstant}.

Terminal knots can be used in conjunction with standard knots, but such a combination will only guarantee a reduction in the asymptotic variance for the target function $\phi = \vert \varphi \vert$. Equipped with terminal knots, we can define a partial ordering on Feynman--Kac models specifically for a test function $\varphi$. 
\begin{definition}[A partial ordering of Feynman--Kac with terminal knots]\label{def:terminalpartialorder}
    Consider two Feynman--Kac models, $\fk^\circ,\fk^\ast\in\fkclass_n$ and target function $\phi$. We say that ${\fk^\ast \preccurlyeq_\phi \fk^\circ}$ with respect to a reference model $\fk \in\fkclass_n$ if for some $m,m^\prime \in \nn{1}$ 
    \begin{enumerate}
        \item there exists a sequences of knots $\kn_{1}, \kn_{2}, \ldots, \kn_{m}$  such that $\fk^\ast = \kn_{m}\ast\cdots\ast \kn_{1}\ast\fk^\circ$,
        \item there exists a sequences of knots $\kn_{1}^\prime, \kn_{2}^\prime, \ldots, \kn_{m^\prime}^\prime$  such that 
    $\fk^\circ = \kn_{m}^\prime\ast\cdots\ast \kn_{1}^\prime\ast\fk^\phi$.
    \end{enumerate}
    Each knot in the sequences can be a terminal knots or a standard knot.
\end{definition}
Compared to Definition~\ref{def:partialorder}, this partial ordering now includes terminal knots but at the expense of generality: We are now tied to a single test function $\varphi$ that satisfies $\phi = \vert \varphi \vert$ as the variance ordering states in Theorem~\ref{th:terminalvarorder}. 
\begin{theorem}[Variance ordering with terminal knots]\label{th:terminalvarorder}
      Consider a reference model $\fk \in \fkclass_n$ and let $\varphi \in \hat{\mathcal{F}}(\fk)$. If ${\fk^\ast \preccurlyeq_\phi \fk^\circ}$ with respect to~$\fk$  and $\phi = \vert \varphi \vert$ then 
      \begin{equation*}
        \hat{\gamma}_n^\ast(1 \otimes \varphi) = \hat{\gamma}_n^\circ(1 \otimes \varphi) = \hat\gamma_n(\varphi) \quad \text{and} \quad \hat\sigma_\ast^2(1\otimes\varphi) \leq \hat\sigma_\circ^2(1 \otimes \varphi)  \leq \hat\sigma^2(\varphi).    
      \end{equation*}
\end{theorem}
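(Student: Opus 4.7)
The plan is to first unpack Definition~\ref{def:terminalpartialorder} to express $\fk^\ast$ as the image of $\fk^\phi$ under a finite sequence of (possibly terminal) knots, with $\fk^\circ$ sitting as an intermediate stage, and then iterate the single-knot results already proved. Concretely, by the two clauses of the definition, there exist $m, m^\prime \in \nn{1}$ and knots $\kn^\prime_{1},\ldots,\kn^\prime_{m^\prime},\kn_1,\ldots,\kn_m$ such that $\fk^\circ = \kn^\prime_{m^\prime} \ast \cdots \ast \kn^\prime_1 \ast \fk^\phi$ and $\fk^\ast = \kn_m \ast \cdots \ast \kn_1 \ast \fk^\circ$. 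Set $\fk^{(0)} = \fk^\phi$ and let $\fk^{(k)}$ denote the intermediate models along this concatenated sequence for $k \in \bseq{1}{L}$ with $L = m+m^\prime$, so that $\fk^{(m^\prime)} = \fk^\circ$, $\fk^{(L)} = \fk^\ast$, and $\fk^{(k)}$ is obtained from $\fk^{(k-1)}$ by applying a single knot.

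For the identity of terminal updated measures, I would induct on $k$. The base case is Proposition~\ref{prop:extequiv}(2), which gives $\hat\gamma_n^{(0)}(1\otimes\varphi) = \hat\gamma_n(\varphi)$. At each inductive step: if the knot is standard, Proposition~\ref{prop:invar}(1) preserves $\hat\gamma_n^{(k)}$ as a measure and hence its value on $1 \otimes \varphi$; if it is terminal, Proposition~\ref{prop:terminalinvar}(2) preserves values on functions of the form $1\otimes\varphi$. Reading off at $k=m^\prime$ and $k=L$ yields $\hat{\gamma}_n^\circ(1 \otimes \varphi) = \hat{\gamma}_n^\ast(1 \otimes \varphi) = \hat\gamma_n(\varphi)$.

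The variance inequalities follow from an analogous induction. The base case uses Proposition~\ref{prop:extequiv}(3) to give $\hat\sigma_\phi^2(1 \otimes \varphi) = \hat\sigma^2(\varphi)$, and the hypothesis $\varphi \in \hat{\mathcal{F}}(\fk)$ ensures $1 \otimes \varphi \in \hat{\mathcal{F}}(\fk^{(0)})$. Suppose inductively $1\otimes\varphi \in \hat{\mathcal{F}}(\fk^{(k-1)})$ and $\hat\sigma_{k-1}^2(1\otimes\varphi) \leq \hat\sigma^2(\varphi)$. If $\kn$ applied at step $k$ is a standard knot, Corollary~\ref{coro:var2}(2) (or Theorem~\ref{th:varorder}(2) applied to a single knotset) yields $\hat\sigma_k^2(1\otimes\varphi) \leq \hat\sigma_{k-1}^2(1\otimes\varphi)$. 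If $\kn$ is terminal, then by construction $\fk^{(k-1)}$ is of the form $\kn_{k-1}\ast\cdots\ast\kn_1\ast\fk^\phi$, so Corollary~\ref{coro:terminalvar} applies with $\phi = |\varphi|$ and gives the same inequality. Finiteness at step $k$ maintains the induction, and evaluating at $k=m^\prime$ and $k=L$ yields $\hat{\sigma}_\circ^2(1\otimes\varphi) \leq \hat\sigma^2(\varphi)$ and $\hat{\sigma}_\ast^2(1\otimes\varphi) \leq \hat{\sigma}_\circ^2(1\otimes\varphi)$, completing the ordering.

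The main obstacle is verifying that the model-compatibility condition of Definition~\ref{def:terminalkncompat} remains in force whenever a terminal knot appears in the sequence, since Corollary~\ref{coro:terminalvar} implicitly relies on it. Proposition~\ref{prop:closedmodelcompat} settles this: the property of being of the required tensor-product form is closed under applications of (standard or terminal) knots, so the structure inherited from the initial $\phi$-extension persists throughout the sequence and each terminal-knot step is legitimate.
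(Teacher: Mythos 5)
Your proposal is correct and follows essentially the same route as the paper's proof: iterate the single-knot results (Proposition~\ref{prop:invar}/\ref{prop:terminalinvar} for the measure identities, Theorem~\ref{th:var}/Corollary~\ref{coro:var2}/Corollary~\ref{coro:terminalvar} for the variance inequalities) along the concatenated knot sequence, anchored by Proposition~\ref{prop:extequiv} at the $\phi$-extension. Your explicit invocation of Proposition~\ref{prop:closedmodelcompat} to keep terminal-knot compatibility in force is a detail the paper leaves implicit (it is absorbed into the proof of Corollary~\ref{coro:terminalvar}), but it is the same argument.
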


\subsection{Optimality of adapted terminal knots}\label{sec:terminalopt}
Analogously to their standard counterparts, applying an adapted terminal knot results in the largest variance reduction of any single terminal knot for the test function $\varphi$. We state this result in Theorem~\ref{th:terminaladaptknotopt}.

\begin{theorem}[Adapted terminal knot optimality]\label{th:terminaladaptknotopt}
    Consider a model $\fk^\circ$ satisfying Part~2 of Definition~\ref{def:terminalpartialorder} with reference model $\fk$. Let $\kn$ and $\kn^\diamond$ be terminal knots compatible with $\fk^\circ$. If~$\kn^\diamond$ is the adapted terminal knot for~$\fk^\circ$ then $\kn^\diamond\ast\fk^\circ \preccurlyeq_{\phi} \kn \ast \fk^\circ$ with respect to $\fk$.
\end{theorem}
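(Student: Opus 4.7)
The strategy mirrors the single-knot optimality argument for Theorem~\ref{th:adaptknotopt}, lifted to the terminal setting: construct one additional terminal knot $\kn^\prime$ such that $\kn^\prime \ast \kn \ast \fk^\circ = \kn^\diamond \ast \fk^\circ$, then verify both parts of Definition~\ref{def:terminalpartialorder}.

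First, I would unpack the forms imposed by terminal-knot compatibility. Write the terminal components of $\fk^\circ$ as $M_n^\circ = U \otimes V^{G_n \cdot \phi}$ and $G_n^\circ = H \otimes \phi^{-1}$, with $H = V(G_n \cdot \phi)$, as required by Definition~\ref{def:terminalkncompat}(i). For a generic compatible terminal knot $\kn = (n, R, K)$ with $RK = U$, Definition~\ref{def:termknotop} gives $\kn \ast \fk^\circ$ terminal kernel $R \otimes K^H V^{G_n \cdot \phi}$ and potential $K(H) \otimes \phi^{-1}$. A direct calculation from the twisted-kernel definition yields $K^H V^{G_n \cdot \phi} = (KV)^{G_n \cdot \phi}$, so this knot-model again satisfies Definition~\ref{def:terminalkncompat}(i) with the same reference potential $G_n$ and target $\phi$ but with updated pair $U' = R$ and $V' = KV$---this is essentially Proposition~\ref{prop:closedmodelcompat} specialised to a terminal knot.

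Second, let $\kn^\prime = (n, \mathrm{Id}, R)$ be the adapted terminal knot for $\kn \ast \fk^\circ$, compatible because $\mathrm{Id} \cdot R = R = U'$. Applying Definition~\ref{def:termknotop} to $\kn^\prime \ast \kn \ast \fk^\circ$ produces terminal kernel $\mathrm{Id} \otimes R^{K(H)} (KV)^{G_n \cdot \phi}$ and potential $RK(H) \otimes \phi^{-1} = U(H) \otimes \phi^{-1}$. The key algebraic identity $R^{K(H)} K^H = (RK)^H = U^H$, verified by unwinding the twisted-kernel definition, simplifies the kernel to $\mathrm{Id} \otimes U^H V^{G_n \cdot \phi}$. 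By Example~\ref{ex:termadaptedknot}, these are exactly the terminal components of $\kn^\diamond \ast \fk^\circ$; all non-terminal kernels and potentials are untouched by either terminal knot, so $\kn^\prime \ast \kn \ast \fk^\circ = \kn^\diamond \ast \fk^\circ$.

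Finally, Part~1 of Definition~\ref{def:terminalpartialorder} is witnessed by the single knot $\kn^\prime$, and Part~2 follows by prepending $\kn$ to the sequence of knots that takes $\fk^\phi$ to $\fk^\circ$---such a sequence exists by the hypothesis that $\fk^\circ$ satisfies Part~2 of the same definition with reference model $\fk$. Hence $\kn^\diamond \ast \fk^\circ \preccurlyeq_\phi \kn \ast \fk^\circ$ with respect to $\fk$. The main technical obstacle is bookkeeping the factorisation $U' \otimes V'^{G_n \cdot \phi}$ through knot composition and establishing the twisted-kernel identity $R^{K(H)} K^H = (RK)^H$ (modulo the paper's zero-set convention, which is immaterial for particle approximations); the remainder is formal substitution into the relevant definitions.
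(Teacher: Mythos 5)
Your proposal is correct and follows essentially the same route as the paper's proof: both construct the completing terminal knot $(n,\mathrm{Id},R)$, use the twisted-kernel composition identity of Proposition~\ref{prop:twisteq} (which you re-derive by hand) to show $(n,\mathrm{Id},R)\ast\kn\ast\fk^\circ = \kn^\diamond\ast\fk^\circ$ with $\kn^\diamond=(n,\mathrm{Id},U)$, and then read off the $\preccurlyeq_\phi$ ordering from Definition~\ref{def:terminalpartialorder}. The only cosmetic difference is that the paper simplifies both terminal kernels all the way to $\mathrm{Id}\otimes(UV)^{G_n\cdot\phi}$ whereas you stop at the equivalent form $\mathrm{Id}\otimes U^{H}V^{G_n\cdot\phi}$.
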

Beyond this optimality for a single terminal knot, we can also prove that adapted terminal knots allow for the asymptotic variance to be reduced to zero in some cases, in conjunction with standard knots.

\begin{corollary}[Minimal variance from knotsets with terminal knot]\label{cor:terminalminimalvar}
For every model $\fk \in \fkclass_n$ and a.s.\ non-zero $\varphi \in \hat{\mathcal{F}}(\fk)$ there exists a sequence of knots $\kn_{n+1},\kn_{n},\ldots,\kn_1$ such that
\begin{equation*}
    \fk^\star = \kn_{n+1} \ast \kn_{n} \ast \cdots \ast \kn_1 \ast \fk^{\vert \varphi \vert}
\end{equation*} 
has asymptotic variance terms satisfying
\begin{equation*}
    \hat{v}_{n,n}^\star(\bar\varphi) = \hat\eta_n(\vert \varphi \vert)^2 - \hat\eta_n(\varphi)^2,\quad \hat{v}_{p,n}^\star(\varphi)=0,\;\text{for}\; p \in \bseq{0}{n-1}
\end{equation*}
    where $v_{p,n}^\star(\varphi)$ are the asymptotic variance terms for $\fk^\star$ and $\hat\eta_n$ is the terminal updated probability measure for $\fk$.
\end{corollary}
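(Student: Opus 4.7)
The construction combines the sequence from Example~\ref{ex:perfadapt} with an adapted terminal knot. First, pass to the $\phi$-extended model $\fk^{|\varphi|}$; by Proposition~\ref{prop:extequiv} this does not alter the asymptotic variance of $\bar\varphi = 1\otimes\varphi$. Take $\kn_1,\ldots,\kn_n$ to be the sequence of adapted $t$-knots of Example~\ref{ex:perfadapt}, now initialised with $\fk^{|\varphi|}$ rather than $\fk$. The intermediate model $\fk_n = \kn_n\ast\cdots\ast\kn_1\ast\fk^{|\varphi|}$ then has $M_0 = \delta_0$, $M_p = \mathrm{Id}$ with constant potential for $p \in \bseq{1}{n-1}$, and a time-$n$ kernel that samples from $\eta_n^{|\varphi|}(\rmd [y,z]) = \eta_n(\rmd y)\,\delta_y(\rmd z)$. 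In particular $\gamma_p^{(n)}$ is proportional to $\delta_0$ for every $p < n$, and Theorem~\ref{th:minimalvar} already certifies that $\hat v_{p,n}^{(n)}(\bar\varphi) = 0$ for those $p$.

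Define $\kn_{n+1}$ to be the adapted terminal knot for $\fk_n$ (Example~\ref{ex:termadaptedknot}); compatibility is guaranteed by Proposition~\ref{prop:closedmodelcompat}, since $\fk^{|\varphi|}$ satisfies condition~(i) of Definition~\ref{def:terminalkncompat} with $U = M_n$ and $V = \mathrm{Id}$, and this condition is preserved under knot operations. Let $\fk^\star = \kn_{n+1}\ast\fk_n$. By Proposition~\ref{prop:terminalinvar}, $\gamma_p^\star = \gamma_p^{(n)}$ for every $p<n$, so these remain point masses at $\{0\}$. Combining $\gamma_p^\star Q_{p,n}^\star = \gamma_n^\star$ with the terminal invariance $\hat\gamma_n^\star(\bar\varphi) = \hat\gamma_n(\varphi)$ gives $Q_{p,n}^\star(G_n^\star\cdot\bar\varphi)(0) = \hat\gamma_n(\varphi)/\gamma_p(1)$, after which the two summands in $v_{p,n}^\star(G_n^\star\cdot\bar\varphi)$ cancel and $\hat v_{p,n}^\star(\bar\varphi) = 0$ for every $p \in \bseq{0}{n-1}$.

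For the terminal term, Example~\ref{ex:termadaptedknot} gives $M_n^\star = \mathrm{Id}\otimes\eta_n^{G_n\cdot|\varphi|}$ and $G_n^\star = \eta_n(G_n\cdot|\varphi|)\otimes|\varphi|^{-1}$, so $\eta_n^\star(\rmd[y,z]) = \delta_0(\rmd y)\,\eta_n^{G_n\cdot|\varphi|}(\rmd z)$. A change of measure from $\eta_n^{G_n\cdot|\varphi|}$ back to $\eta_n$, combined with $\varphi^2/|\varphi|^2 = 1$ almost surely (using that $\varphi$ is a.s.\ non-zero), yields $\eta_n^\star((G_n^\star)^2\bar\varphi^2) = \eta_n(G_n\cdot|\varphi|)^2$ and $\eta_n^\star(G_n^\star) = \eta_n(G_n)$, while Proposition~\ref{prop:terminalinvar} gives $\hat\eta_n^\star(\bar\varphi) = \hat\eta_n(\varphi)$. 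Substituting into $\hat v_{n,n}^\star(\bar\varphi) = \eta_n^\star((G_n^\star)^2\bar\varphi^2)/\eta_n^\star(G_n^\star)^2 - \hat\eta_n^\star(\bar\varphi)^2$ produces $\hat\eta_n(|\varphi|)^2 - \hat\eta_n(\varphi)^2$ as claimed. The main obstacle is bookkeeping on the extended state space: one must track the roles of the two components of the time-$n$ state through both the standard and terminal knot applications, and verify that the degeneracy of $\gamma_p^\star$ at earlier times persists after $\kn_{n+1}$; once these are in hand the remaining manipulations reduce to elementary change-of-measure identities.
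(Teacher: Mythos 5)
Your construction is exactly the paper's: initialise the Example~\ref{ex:perfadapt} sequence of adapted knots at the $|\varphi|$-extension $\fk^{|\varphi|}$, append the adapted terminal knot, kill the $p<n$ terms via the degeneracy established in Theorem~\ref{th:minimalvar} (preserved by Proposition~\ref{prop:terminalinvar}), and evaluate the terminal term from the explicit form $M_n^\star(0,\cdot)=\delta_0\otimes\eta_n^{G_n\cdot|\varphi|}$, $G_n^\star=\eta_n(G_n\cdot|\varphi|)\otimes|\varphi|^{-1}$ using $\varphi^2/|\varphi|^2=1$ a.s. The argument is correct and matches the paper's proof, with your point-mass computation for the $p<n$ terms simply making explicit what the paper cites from Theorem~\ref{th:minimalvar}.
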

With Corollary~\ref{cor:terminalminimalvar}, we can state that when the target function $\varphi$ is almost surely non-negative or non-positive, then the asymptotic variance of the particle approximation for $\varphi$ is zero. This property is analogous to that of optimal importance functions in importance sampling. We can extended the result to non-negative or non-positive $\varphi$ by replacing the terminal potential by $G_n^0 = G_n \cdot 1_{\overline{S_0}(\varphi)}$. Noting that $\gamma_n(G_n \cdot \varphi) = \gamma_n(G_n^0 \cdot \varphi)$ shows the equivalence, though the terminal probability measures will differ.

To construct particle estimates with zero asymptotic variance under~$\hat\gamma_n$ for more general functions, one could adapt the strategy of ``positivisation'' from importance sampling \citep[see for example,][]{owen2000safe}. We can write the terminal predictive measure of a fixed function $\varphi\in\mathcal{L}(\hat\gamma_n)$ as $\gamma_n(G_n \cdot\varphi) = \gamma_n(G_n^+ \cdot\vert\varphi\vert) - \gamma_n(G_n^- \cdot\vert\varphi\vert)$, where $G_n^+ = G_n \cdot 1_{\varphi>0}$ and $G_n^- = G_n \cdot 1_{\varphi<0}$. From this expression it is natural to consider two SMC algorithms; one with terminal potential $G_n^+$ and the other with $G_n^-$. The underlying Feynman--Kac models and test functions of both algorithms now satisfy the conditions to achieve zero variance.

We now state an example model that achieves the variance in Corollary~\ref{cor:terminalminimalvar} which can be constructed by extending the sequence of models given in Example~\ref{ex:perfadapt}. 

\begin{example}[A sequence of adapted knot-models, continued] \label{ex:terminalperfadapt}
    Consider the sequence of models in Example~\ref{ex:perfadapt} with additional requirement that the initial model $\fk_0 = \fk^\phi$ is a $\phi$-extension such that $\phi = \vert\varphi\vert$. Denote the predictive probability measures of $\fk = (M_{0:n},G_{0:n})$ as $\eta_{0:n}$. Let the next model in the sequence be $\fk_{n+1}=\kn_{n}^\diamond \ast \fk_{n}$. If $\kn_{n}^\diamond$ is the adapted terminal knot for $\fk_{n}$ then the model $\fk_{n+1}=(M_{0:n}^\star,G_{0:n}^\star)$ satisfies
    \begin{equation*}
        M_{0}^\star = \delta_0, \quad M_{n}^\star(0,\cdot) = \delta_0 \otimes \eta_{n}^{G_n\cdot \phi}, \quad M_{p}^\star = \mathrm{Id} ~\text{for}~p\in\bseq{1}{n-1},
    \end{equation*}
with potential functions $G_{p}^\star = \eta_p(G_p)$ for $p \in \bseq{0}{n-1}$ and $G_{n}^\star ={\eta_n(G_{n}\cdot\phi) \otimes \phi^{-1}}$.
\end{example}
Having proven and demonstrated that a target model can be transformed until it has minimal asymptotic variance using knots, we can conclude that
the partial ordering induced by knots includes the optimal model.

\subsection{Estimating normalising constants}\label{sec:normalisingconstant}
One of the most useful cases for terminal knots is when the normalising constant of the Feynman--Kac model, $\hat\gamma_n(1)$, of the model is of primary interest. If $\varphi=1$ is the only test function of interest then it is possible to specify a simplified Feynman--Kac model, which we detail in Example~\ref{ex:termknotsetconstant}, which results from applying a terminal knotset (see Definition~\ref{def:terminalknotsetop}).

\begin{example}[Terminal knotset for normalising constant estimation] \label{ex:termknotsetconstant}
    Consider a $\phi$-extended model $\fk^\phi \in \fkclass_n$ and terminal knotset $\kn = (R_{0:n
    },K_{0:n})$ where $\phi =1$. The knot-model $\fk^\ast = \kn \ast \fk^\phi = (M^\ast_{0:n},G^\ast_{0:n})$ satisfies $M^\ast_{0} = R_0$, $M^\ast_{p} = K_{p-1}^{G_{p-1}} R_p$ for $p\in\bseq{1}{n-1}$, $M_{n}^\ast = K_{n-1}^{G_{n-1}} R_n \otimes K_n^{G_n}$, $G^\ast_{p} = K_p(G_p)$ for $p \in \bseq{0}{n-1}$, and $G_n^\ast = K_n(G_n) \otimes 1$.
     
     The model $\fk^\dagger = (M^\dagger_{0:n},G^\dagger_{0:n})$ with 
\begin{alignat*}{3}
        M_0^\dagger &= R_0, &\quad G_0^\dagger &= K_0(G_0), & & \\
        M_{p}^\dagger &= K_{p-1}^{G_{p-1}} R_p, &\quad G_p^\dagger &= K_p(G_p), & &\quad p \in \bseq{1}{n},
    \end{alignat*}
will satisfy $\hat\gamma_n^\dagger(1) = \hat\gamma_n^\ast(1) =\hat\gamma_n(1)$ and $\hat\sigma^2_\dagger(1) = \hat\sigma^2_\ast(1) \leq \hat\sigma^2(1)$.
\end{example}
Note that $K_n^{G_n}$ does not need to be simulated in the model $\fk^\dagger$. The asymptotic variance for model $\fk^\dagger$ can also be further reduced by applying more knots. A special case of Example~\ref{ex:termknotsetconstant} is using the adapted terminal knotset. In this case, $M_0^\dagger = \delta_0$ and so long as $G_0^\dagger = M_0(G_0)$ is accounted for elsewhere in the algorithm, the first iteration of the SMC algorithm does not need to be run.

\section{Applications and examples}\label{sec:apps}

\subsection{Particle filters with `full' adaptation}\label{sec:apfperfect}

A particle filter with `full' adaptation 
adapts each Markov kernel in the Feynman--Kac model to the current potential information through twisting. Originally proposed as a type of auxiliary particle filter by \citet{pitt1999filtering}, its modern interpretation does away with auxiliary variables, though it is still often referred to as a fully-adapted auxiliary particle filter. It is popular due to its empirical performance and its derivation which is motivated by identifying locally (i.e.\ conditional) optimal proposal distributions for the particle weights at each time step. We refer to this algorithm as a particle filter with `full' adaptation. The Feynman--Kac model for such an algorithm is described in Example~\ref{ex:fullyadaptedpf}.
\begin{example}[Particle filter with `full' adaptation]\label{ex:fullyadaptedpf}
    Let $\fk = (M_{0:n},G_{0:n})$ be a model for a particle filter. The particle filter with `full' adaptation with respect to $\fk$ has model  $\fk^\mathrm{F} = (M_{0:n}^\mathrm{F},G_{0:n}^\mathrm{F})$ satisfying
        \begin{alignat*}{3}
        M_0^\mathrm{F} &= M_0^{G_0}, &\quad G_0^\mathrm{F} &= M_0(G_0) \cdot M_1(G_1), & & \\
        M_{p}^\mathrm{F} &= M_{p}^{G_{p}}, &\quad G_p^\mathrm{F} &= M_{p+1}(G_{p+1}), & &\quad p \in \bseq{1}{n-1} \\
        M_{n}^\mathrm{F} &= M_{n}^{G_{n}}, &\quad G_n^\mathrm{F} &= 1. &
    \end{alignat*}
\end{example}
The adapted knot-model in Example~\ref{ex:adaptedknotset} and the particle filter with `full' adaptation in Example~\ref{ex:fullyadaptedpf} share the same constituent twisted Markov kernels $M_p^{G_p}$ and expected potential functions $M_p(G_p)$, but differ in where these elements are located in time. In particular, $M_{p}^\ast = M_{p-1}^{G_{p-1}} = M_{p-1}^{\mathrm{F}}$ and $G_p^\ast = M_p(G_p) = G_{p-1}^{\mathrm{F}}$ for $p \in \bseq{2}{n-1}$.  A further crucial difference is that the adapted knot-model does not use the terminal twisted kernel $M_n^{G_n}$. Our theory on knots has shown that adapted knot-models order the asymptotic variance for all relevant test functions, whilst \citet{johansen2008note} contained a counterexample for such a result with the particle filter with `full' adaptation. Note that they referred to this model as having `perfect' adaptation.

We now restate the counterexample in Example~\ref{ex:binarypf}, and demonstrate how adapted knot-models guarantee an asymptotic variance reduction whereas the fully-adapted particle filter does not.

\begin{example}[Binary model of \citeauthor{johansen2008note}, \citeyear{johansen2008note}]
    \label{ex:binarypf}
   Let $\mathcal{B}=(M_{0:1},G_{0:1})$ be a Feynman--Kac model with
\begin{equation*}
    M_0(x_0) = 
        \begin{cases}
            \frac{1}{2} & \text{if}~x_0 = 0,\\
            \frac{1}{2} & \text{if}~x_0 = 1,\\
        \end{cases} \qquad
    M_1(x_0, x_1) = 
        \begin{cases}
            1-\delta & \text{if}~x_1 = x_0,\\
            \delta & \text{if}~x_1 = 1-x_0,\\
        \end{cases}
\end{equation*}
respectively, and potential functions
\begin{equation*}
        G_t(x_t) = 
        \begin{cases}
            1-\varepsilon & \text{if}~x_t = y_t,\\
            \varepsilon & \text{if}~x_t = 1- y_t,\\
        \end{cases} \quad t \in \{0,1\}
\end{equation*}
with fixed observations $y_t$ such that $y_0 = 0$ and $y_1 = 1$.
\end{example}
Figure~\ref{fig:adaptrecreate} compares the asymptotic variances of various particle filters (PF) approximating $\hat\eta_1^N(\varphi)$ with $\varphi(x) = x$. We consider the bootstrap PF, PF with `full' adaptation, and adapted knotset PF in Example~\ref{ex:binarypf} with~$\varepsilon = 0.25$ and~$\delta \in (0,1)$, both analytically and empirically. Figure~\ref{fig:adaptrecreate} replicates the second figure in \citet{johansen2008note} with the addition of the adapted knotset PF. The PF with `full' adaptation is specified by Example~\ref{ex:fullyadaptedpf} with $n=1$,
whilst the adapted knotset PF has
$M_0^\ast = \delta_0$, $M_1^\ast(0,\cdot) = M_0^{G_0}M_1$, $G_0^\ast = M_0(G_0)$, and $G_1^\ast = G_1$.  When $\varepsilon = 0.25$, the adapted knotset PF outperforms the other PFs in this regime, whilst the bootstrap PF mostly outperforms the PF with `full' adaptation. The existence of regimes where the bootstrap PF outperforms the PF with `full' adaptation constitutes the counterexample of \citet{johansen2008note}.
\begin{figure}
\centering
\includegraphics[width=0.75\textwidth]{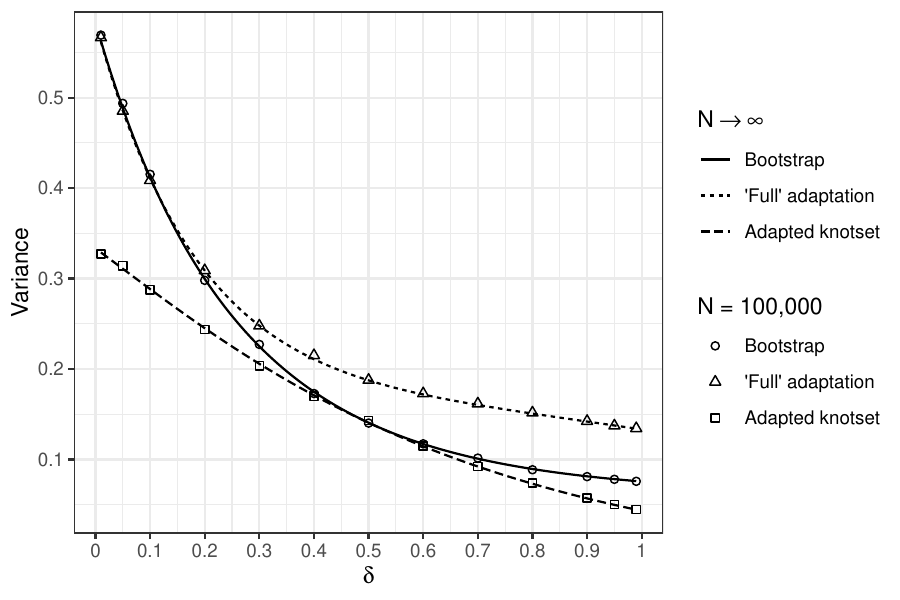}
\caption{Analytical ($N \rightarrow \infty$) and empirical ($N =100{,}000$) asymptotic variance of bootstrap particle filter, filter with `full' adaptation, and adapted knotset particle filter for $\hat\eta_1^N(\varphi)$ in Example~\ref{ex:binarypf} with $\varepsilon=0.25$ and $\varphi(x) = x$. The empirical variance is estimated from 50,000 independent replications.}
\label{fig:adaptrecreate}
\end{figure}

Figure~\ref{fig:adaptvarious} displays the excess analytical asymptotic variance of the PF with `full' adaptation and adapted knotset PF, relative to the bootstrap PF, for $\varepsilon \in \{0.05,0.1,0.2,0.4,0.5\}$ and~$\delta \in (0,1)$. The excess variance of the adapted knotset PF is always less than or equal to zero, demonstrating the dominance over the bootstrap PF, whilst the PF with `full' adaptation can be better or worse than the bootstrap depending on the regime. We also note that the PF with `full' adaptation can outperform the adapted knotset PF for some parameter values and test functions. From our theory, knot-models or equivalent specifications guarantee variance ordering for all relevant test functions when we do not include a knot at the terminal time. Adapting the terminal time, as the PF with `full' adaption does, requires special consideration as discussed in Section~\ref{sec:tyingterminalknots}. Using Example~\ref{ex:binarypf}, we illustrate terminal knots and their relation to the PF with `full' adaptation next.
\begin{figure}
\centering
\includegraphics[width=0.99\textwidth]{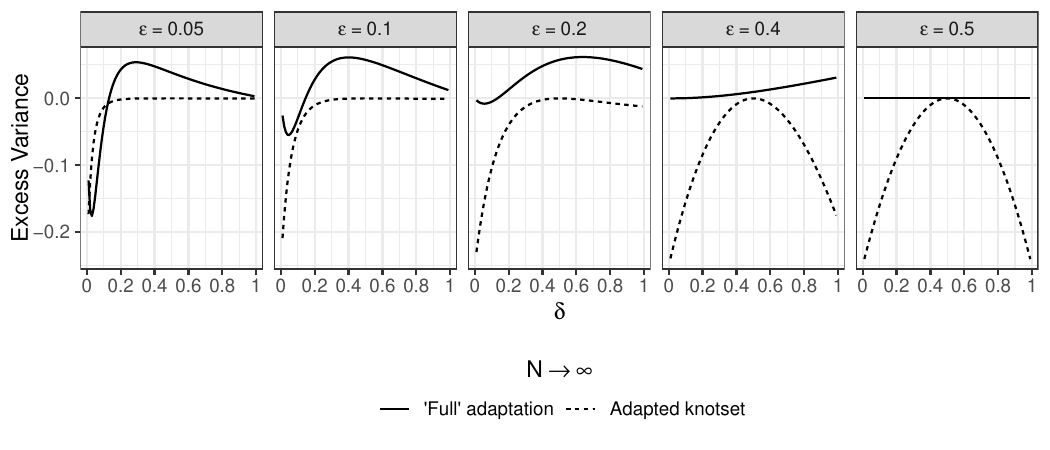}
\caption{Analytical ($N \rightarrow \infty$) asymptotic excess variance of filter with `full' adaptation and adapted knotset particle filter relative to the bootstrap particle filter for $\hat\eta_1^N(\varphi)$ in Example~\ref{ex:binarypf} with $\varepsilon=0.25$ and $\varphi(x) = x$. The excess variance is calculated by subtracting the asymptotic variance of bootstrap particle filter. Negative values indicate an improvement over the bootstrap particle filter. }\label{fig:adaptvarious}
\end{figure}

We now consider Example~\ref{ex:binarypf} for normalising constant estimation, comparing the bootstrap PF, PF with `full' adaption, and PF with an adapted terminal knotset. For the latter particle filter we use the simplified version stated in Example~\ref{ex:termknotsetconstant} with adapted knots, yielding components $M_0^\ast = \delta_0$, $M_1^\ast(0,\cdot) = M_0^{G_0}$, $G_0^\ast = M_0(G_0)$, and $G_1^\ast = M_1(G_1)$.
\begin{figure}
\centering
\includegraphics[width=0.9\textwidth]{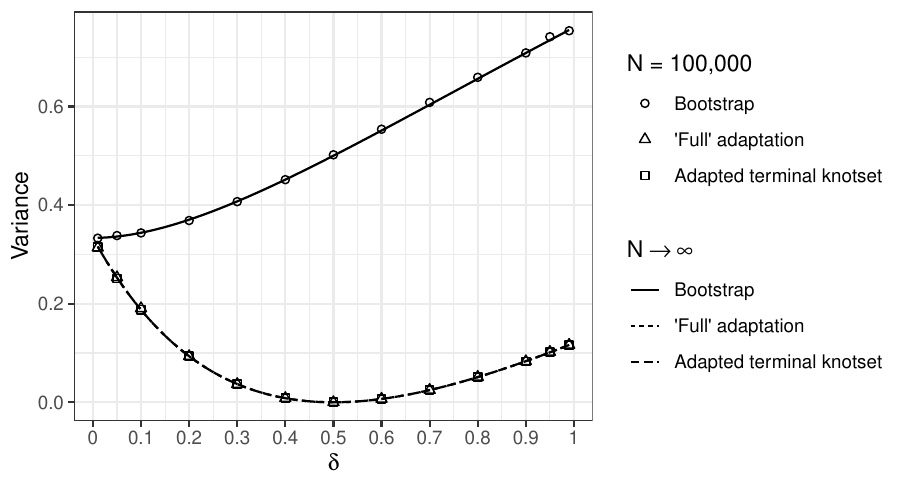}
\caption{Analytical ($N \rightarrow \infty$) and empirical ($N =100{,}000$) asymptotic variance of bootstrap PF, filter with `full' adaptation, and adapted knotset PF for $\hat\gamma_1^N(1)/\hat\gamma_1(1)$ in Example~\ref{ex:binarypf} with $\varepsilon=0.25$. The empirical variance is estimated from 50,000 independent replications. Note that the `full' adaptation and adapted terminal knotset values are theoretically equal.}\label{fig:adaptconstant}
\end{figure}
Figure~\ref{fig:adaptconstant} compares the asymptotic variance of $\hat\gamma_1^N(1)/\hat\gamma_1(1)$ for each particle filter with~$\varepsilon = 0.25$ and~$\delta \in (0,1)$ both analytically and empirically. The figure shows that the variance of the PF with `full' adaptation and adapted terminal knotset PF coincide, as expected as the non-asymptotic estimates are equivalent. As such, the PF with `full' adaptation guarantees variance reduction for normalising constant estimation. 

Overall, our approach to variance reduction with knots explains why the particle filter with `full' adaptation has good empirical performance in many different contexts. Firstly, it is only slightly different to a model (Example~\ref{ex:adaptedknotset}) that we can guarantee an asymptotic variance ordering for all relevant functions. Secondly, it does guarantee a variance reduction for normalising constant estimation because of its equivalence to an adapted terminal knot-model. Thus, we provide a cohesive explanation for the counterexample in \citet{johansen2008note} by clarifying that it is the adaptation to the terminal time that restricts the variance reduction guarantee, not the use of adaptation that is only locally (i.e.\ conditionally) optimal. Further, we have demonstrated how a minor modification to the particle filter with `full' adaptation, Example~\ref{ex:adaptedknotset}, guarantees variance ordering for all relevant test functions which has practical significance. 

\subsection{Marginalisation as a knot}

Model marginalisation in SMC is a well-known variance reduction technique that can be viewed as a special case of knots. Often referred to as Rao--Blackwellisation, the procedure involves analytically marginalising part of the state-space of the Feynman--Kac model, thus reducing the dimensionality of the estimation problem. Historically, Rao--Blackwellisation has been applied to models where it is applicable at all time points, and justified in the case of sequential importance samplers by appealing to a reduction in the variance of the weights \citep{doucet1998sequential,doucet2000sequential}. However, this justification does not relate Rao--Blackwellisation to the variance of particle approximations from a modern SMC algorithm which, in comparison, uses resampling. Framing model marginalisation as a knot operation, we prove this procedure will order the asymptotic variance of SMC algorithms for all relevant test functions when a knot is applied to a non-terminal time point. 

We describe the marginalisation knot in Example~\ref{ex:rbknot} and the application of such a knot in Example~\ref{ex:rbmodel} to demonstrate how certain model assumptions recover existing Rao--Blackwellisation in the literature.
\begin{example}[Marginalisation knot]\label{ex:rbknot} 

Consider a Markov kernel $M_t = U \otimes V$ such that $U:(\mathsf{X}_{t-1},\mathcal{Z}_1) \rightarrow [0,1]$ and $V:(\mathsf{Z}_1 \times \mathsf{X}_{t-1},\mathcal{Z}_2)\rightarrow [0,1]$ for measurable spaces $(\mathsf{Z}_1,\mathcal{Z}_1)$ and $(\mathsf{Z}_2,\mathcal{Z}_2)$.
The knot $\kn = (t,R,K)$ where 
\begin{equation*}
    R(x, \rmd y) = U(x,\rmd y_1)\delta_x(\rmd y_2),\quad  K(y,\rmd z) = \delta_{y_1}(\rmd z_1) V(y,\rmd z_2)
\end{equation*}
is a marginalisation knot and $M_t = RK$.
\end{example}
The kernels $U$ and $V$ partition the state-space $M_t$ is defined on. In particular, $U$ is the marginal distribution of the first component of the partition and $V$ is the conditional distribution of the second component of the partition. The result of applying a marginalisation knot to a model is considered next.
\begin{example}[Model with marginalistion knot]\label{ex:rbmodel}
    Consider $\fk = (M_{0:n},G_{0:n})$ where $M_t = U \otimes V$ with $U,V$, and $\kn = (t,R,K)$ defined as in Example~\ref{ex:rbknot}. If $\kn \ast \fk = (M_{0:n}^\ast,G_{0:n}^\ast)$ then
    \begin{equation*}
        M_{t}^\ast = R, \quad G_t^\ast(y) = V[F(y_1)](y), \quad M_{t+1}^\ast = K^{G_t} M_{t+1},
    \end{equation*}
    where $F$ is a functional such that $F(z_1) = G_t([z_1,\cdot])$ and $K^{G_t}(y,\rmd z) = \delta_{y_1}(\rmd z_1)\otimes V^{F(y_1)}(y,\rmd z_2)$.
\end{example}
Example~\ref{ex:rbmodel} generalises several existing particle filters using marginalised models.
\citet{doucet2000sequential} consider the case where $M_{t+1}(z,\cdot) = P(z_1,\cdot)$, for some kernel $P$, and hence $M_{t+1}$ does not depend on $z_2$. As such, the twisted kernel $V^{F(y_1)}$ is not necessary for particle filter implementation in practice. 
\citet{andrieu2002particle} present the case where $G_t(z) = H(z_1)$, for some potential function $H$ not depending on $z_2$ so that $G_t^\ast = G_t$ and $M_{t+1}^\ast = KM_{t+1}$. The authors assume $M_t$ is Gaussian and apply the Kalman filter to calculate the form of the appropriate marginal and conditional distributions, $U$ and $V$ respectively. Extending this further, \citet{schon2005marginalized} use a Kalman filter to marginalise the linear-Gaussian component of more general state-space models. Special cases include mixture Kalman filters \citep{chen2000mixture} and model-marginalised particle filters for jump Markov
linear systems \citep{doucet2002particle}.
For these examples, and any general (e.g.\ non-Gaussian) state-space model, this paper contributes a complete analysis of asymptotic variance reduction for terminal particle approximations arising from SMC when analytical marginalisation can be performed.

It is also instructive to note that a knot itself can be seen as model marginalisation. If $M_t = RK$ we could extend the Feynman--Kac model from $\fk=(M_{0:n},G_{0:n})$ to $\fk^\prime = (M_{0:n+1}^\prime,G_{0:n+1}^\prime)$ where $M_{t}^\prime = R$, $G_{t}^\prime = 1$, $M_{t+1}^\prime = K$, $G_{t+1}^\prime = G_t$, and $M_{p+1}^\prime = M_{p}$ with $G_{p+1}^\prime = G_{p}$ for $p \in \bseq{t+1}{n}$. Marginalising the state $X_{t+1}^\prime \sim K(x_{t}^\prime,\cdot)$ in $\fk^\prime$, and collecting the remaining terms, results in the model $\kn\ast\fk$ where $\kn = (t,R,K)$. As such, a knot can also be viewed as marginalisation, in particular model extension followed by marginalisation. Our procedures and theory present the most general case of this, as well as nuance around the use of knots at the terminal time. 

\subsection{Non-linear Student distribution state-space models}

In this section we provide a numerical example to illustrate the use of knots in practice, and elucidate the connection between adapted knots and marginalisation knots. We consider a non-linear state-space model with latent variable driven by additive Student noise. The model uses non-linear functions $f_p:
\mathbb{R}^d\rightarrow \mathbb{R}^d$ such that the latent space evolution can be described as 
\begin{align*}
    X_0\sim \mathcal{T}(\mu,\Sigma,\nu),\quad (X_p \mid X_{p-1}) &\sim \mathcal{T}(f_p(X_{p-1}),\Sigma,\nu)~\text{for}~p\in\bseq{1}{n}
\end{align*}
where $\mathcal{T}(\mu,\Sigma,\nu)$ denotes the multivariate Student's t-distribution with mean $\mu\in\mathbb{R}^d$, positive definite scale matrix $\Sigma\in\text{PD}(\mathbb{R}^{d\times d})$, and degrees of freedom $\nu$. We assume the data are observed with Gaussian noise, that is $(Y_p \mid X_p) \sim \mathcal{N}(X_p,\Sigma^\prime)$ with $\Sigma^\prime\in\text{PD}(\mathbb{R}^{d\times d})$ for $p\in\bseq{0}{n}$.

We will use the fact that a multivariate Student distribution can be represented as scale mixture of multivariate Gaussian distributions with transformed $\chi^2_\nu$ distribution, that is a Chi-squared distribution with $\nu$ degrees of freedom. Noting this construction, the Feynman--Kac form of this state-space model has Markov kernels satisfying
\begin{equation}\label{eq:studentfk}
\begin{aligned}
    M_{0} &= (\delta_\mu \otimes S)K\\
    M_{p}(x_{p-1}, \cdot) &= (\delta_{f_p(x_{p-1})} \otimes S)K~\text{for}~p\in\bseq{1}{n}
\end{aligned}
\end{equation}
where $S$ is a $\chi^2_\nu$ distribution and $K$ is conditionally multivariate normal with $K([z,s],\cdot) = \mathcal{N}(z,\frac{\nu}{s}\Sigma)$. Hence, the knot $\kn_p=(p,R_p,K)$ can be applied to the model where $R_0 = \delta_\mu \otimes S$ and $R_p(x_{p-1},\cdot) = \delta_{f(x_{p-1})} \otimes S$ for $p \in \bseq{0}{n-1}$. If we $\phi$-extend the model, we can also apply the analogous terminal knot $\kn_n$.

To test the variance reductions possible for this model, we compare the bootstrap particle filter and the terminal knotset particle filter in Example~\ref{ex:termknotsetconstant} with knots $\kn_0,\ldots, \kn_n$. We consider the problem of normalising constant estimation and assess performance with $R=200$ repetitions of each particle filter. The particle filter implementations used $N=2^{10}$ particles and adaptive resampling with threshold $\kappa = 0.5$. We test the particle filters for five independent datasets simulated by the data generating process that vary by dimension~$d\in\bseq{1}{5}$. The time horizon is fixed at $n=10$, with initial mean $\mu = 0_d$, degrees of freedom $\nu = 4$, and identity variance matrices $\Sigma=\Sigma^\prime=I_d$. We use a univariate non-linear function~$g_p(x) = \frac{x}{2} + \frac{25x}{1 + x^2} + 8 \cos\left(1.2 p \right)$ \citep[see][and references therein]{kitagawa1996monte} to construct a multivariate  non-linear function $f_p(x) = A[g_p(x_1)\cdots g_p(x_d)]^\top$ with matrix $A \in \mathbb{R}^{d\times d}$ having unit diagonal, off-diagonals elements set to half (when $d\geq 2$), and all other elements set to zero (when $d \geq 3$).
\begin{figure}
\centering
\includegraphics[width=0.95\textwidth]{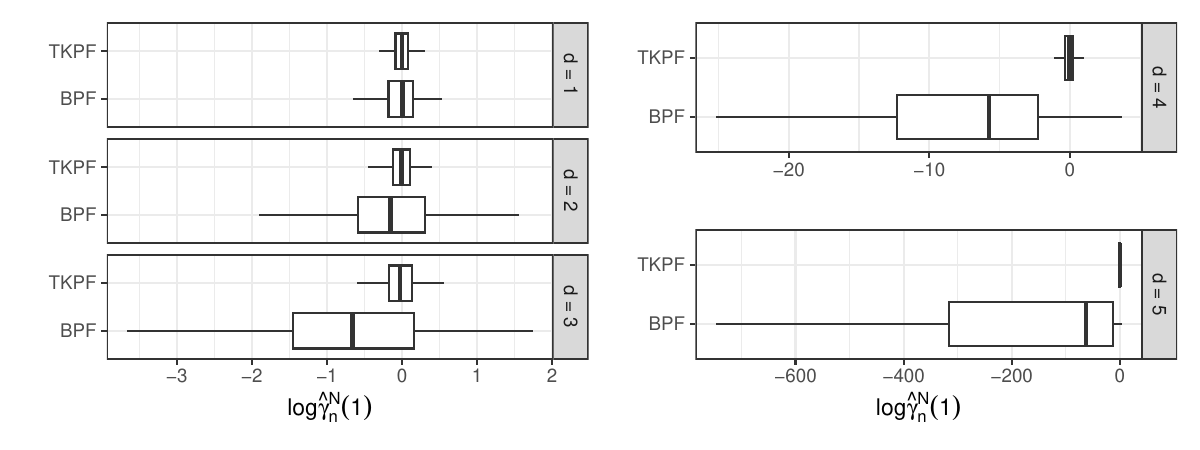}
\caption{Box plots of (shifted) log normalising constant estimates of bootstrap PF and Terminal Knotset PF. The estimates were shifted by a constant (for each dimension $d$) for ease of comparison.}\label{fig:ncstudentbox}
\end{figure}

Figure~\ref{fig:ncstudentbox} displays the estimated normalising constants from each particle filter on the log-scale. For ease of comparison, the log-estimates were shifted by a constant (for each $d$) so that the terminal knotset particle filter estimates have a unit mean (zero on log-scale). The figure demonstrates the variance reduction for normalising constant estimation for each dimension when using knots. We observe that the terminal knotset particle filter remains stable whilst the bootstrap particle filter becomes unstable as $d$ increases.

This example demonstrates a variance reduction for a class of models that appears to be unconsidered in the literature. From the view of adaptation, \citet{pitt1999filtering} showed that `full' adaptation could be implemented with non-linear functions and additive noise. Whilst in this example, we condition on the $\chi^2_\nu$ distributed state and adapt only the Gaussian component in the extended state-space. Whereas from the marginalisation view, the general framework in \citet{schon2005marginalized} does not include the possibility of marginalising a non-linear component, as we do here.

Many further generalisations of this example are possible. For example, a model in the form of \eqref{eq:studentfk} with a $\kn_p$ knot leads to a tractable particle filter any conjugate $K$ and $G_p$ and arbitrary $S$. Such an $S$ could represent an alternative distribution for the scale, or other components of the state-space, and need not be independent of past states.

\section{Discussion}
\label{sec:discussion}

We have shown that knots unify and generalise `full' adaptation and model-marginalisation as variance reduction techniques in sequential Monte Carlo algorithms. Our theory provides a comprehensive assessment of the asymptotic variance ordering implied by knots, the optimality of adapted knots, and demonstrates that repeated applications of adapted knots lead to algorithms with optimal variance. 

In terms of particle filter design, we have re-emphasised the importance of `full' adaptation (or adapted knots) by identifying the pitfall in the counterexample of \citet{johansen2008note}, and explaining how it
can be avoided by not adapting at the terminal time point, or by adapting in a way that takes the test function $\varphi$ into account. Further, given the guaranteed asymptotic variance ordering from knots, we can recommend that every particle filter be assessed to determine if there are one or more tractable knots that can be applied. In such an assessment, the cost of computing $K(G_p)$ and simulating from $K^{G_p}$ should also be considered.

There are several future research directions to explore. Adapted knots have a connection with twisted Feynman--Kac models \citep{guarniero2017iterated,heng2020controlled}. On an extended state-space, a knot can be thought of as decomposing a kernel of a Feynman--Kac model into $R_t$ and $K_t$, adapting $K_t$ to the potential $G_t$, and simplifying the resulting model. Whilst a ``twist'' at time $t$ decomposes the potential function into $\psi_t$ and $\frac{G_t}{\psi}$ and adapts $M_t$ to $\psi_t$. When $K_t = M_t$ and $\psi_t = G_t$ the knot-model and twist-model are equivalent up to a time-shift for $t < n$. Further developing this connection may suggest new methods for twisted Feynman--Kac models in particle filters. Similarly, look-ahead particle filters should also be considered \citep{lin2013lookahead}. For now we note that, except for normalising constant estimation, it may be beneficial for a twisted model to use $\psi_n=1$ as we know that terminal knots do not guarantee an asymptotic variance reduction for all relevant test functions. 

Though we described how terminal knots reduce the asymptotic variance of terminal updated probability measures, we noted a complication arising from the requirement that the target function be $\phi = \vert \varphi  - \hat\eta_n(\varphi)\vert$, where $\hat\eta_n(\varphi)$ is not known ahead of time. It would be possible to estimate $\hat\eta_n(\varphi)$ after the $n-1$ step of the particle filter, and use this to construct the requisite terminal Markov kernel and potential, from the application of a terminal knot to the appropriate $\phi$-extended model, adaptively. Further, it is possible that this procedure could be done for multiple test functions, resulting in an SMC algorithm run until time $n-1$ which then branches for each function of interest. Methodological development of such an algorithm is left for future work.

The application of knots and related variance reductions for SMC samplers is also an area for future research. Consider an SMC sampler with target distribution $\hat\eta_{t-1}$ at time~$t$ with $M_t = RK$. Further, let $R$ be $k$ iterations of some $\hat\eta_{t-1}$-invariant MCMC kernel~$V$, that is $R = V\otimes \cdots \otimes V$, and let $K$ be a random uniform selection over the path generated by $R$. Applying the knot $(t,R,K)$ to the resulting Feynman--Kac model recovers the proposed kernel and potential function used in the recent waste-free SMC algorithm \citep{dau2022waste} at time $t$. Related connections to recent work on Hamiltonian Monte Carlo integrator snippets in an SMC-like algorithm \citep{andrieu2024monte} are also of interest.

\bibliographystyle{chicago}

\bibliography{refs}

\begin{thebibliography}{}

\bibitem[\protect\citeauthoryear{Andrieu and Doucet}{Andrieu and Doucet}{2002}]{andrieu2002particle}
Andrieu, C. and A.~Doucet (2002).
\newblock Particle filtering for partially observed {G}aussian state space models.
\newblock {\em Journal of the Royal Statistical Society Series B: Statistical Methodology\/}~{\em 64\/}(4), 827--836.

\bibitem[\protect\citeauthoryear{Andrieu, Escudero, and Zhang}{Andrieu et~al.}{2024}]{andrieu2024monte}
Andrieu, C., M.~C. Escudero, and C.~Zhang (2024).
\newblock Monte {C}arlo sampling with integrator snippets.
\newblock {\em arXiv preprint arXiv:2404.13302\/}.

\bibitem[\protect\citeauthoryear{Bengtsson}{Bengtsson}{2025}]{matrixstats}
Bengtsson, H. (2025).
\newblock {\em matrixStats: Functions that Apply to Rows and Columns of Matrices (and to Vectors)}.
\newblock {R} package version 1.5.0.

\bibitem[\protect\citeauthoryear{Besançon, Papamarkou, Anthoff, Arslan, Byrne, Lin, and Pearson}{Besançon et~al.}{2021}]{JSSv098i16}
Besançon, M., T.~Papamarkou, D.~Anthoff, A.~Arslan, S.~Byrne, D.~Lin, and J.~Pearson (2021).
\newblock Distributions.jl: Definition and modeling of probability distributions in the juliastats ecosystem.
\newblock {\em Journal of Statistical Software\/}~{\em 98\/}(16), 1--30.

\bibitem[\protect\citeauthoryear{Bezanson, Edelman, Karpinski, and Shah}{Bezanson et~al.}{2017}]{Julia-2017}
Bezanson, J., A.~Edelman, S.~Karpinski, and V.~B. Shah (2017).
\newblock Julia: A fresh approach to numerical computing.
\newblock {\em SIAM {R}eview\/}~{\em 59\/}(1), 65--98.

\bibitem[\protect\citeauthoryear{Billingsley}{Billingsley}{1995}]{billingsley1995probability}
Billingsley, P. (1995).
\newblock {\em Probability and Measure\/} (3rd ed.).
\newblock New York, NY: John Wiley \& Sons, Inc.

\bibitem[\protect\citeauthoryear{Bouchet-Valat and Kamiński}{Bouchet-Valat and Kamiński}{2023}]{JSSv107i04}
Bouchet-Valat, M. and B.~Kamiński (2023).
\newblock Dataframes.jl: Flexible and fast tabular data in julia.
\newblock {\em Journal of Statistical Software\/}~{\em 107\/}(4), 1--32.

\bibitem[\protect\citeauthoryear{Cardoso, el~idrissi, Corff, and Moulines}{Cardoso et~al.}{2024}]{cardoso2024monte}
Cardoso, G., Y.~J. el~idrissi, S.~L. Corff, and E.~Moulines (2024).
\newblock Monte {C}arlo guided denoising diffusion models for {B}ayesian linear inverse problems.
\newblock In {\em The Twelfth International Conference on Learning Representations}.

\bibitem[\protect\citeauthoryear{C{\'e}rou, Del~Moral, Furon, and Guyader}{C{\'e}rou et~al.}{2012}]{cerou2012sequential}
C{\'e}rou, F., P.~Del~Moral, T.~Furon, and A.~Guyader (2012).
\newblock Sequential {M}onte {C}arlo for rare event estimation.
\newblock {\em Statistics and computing\/}~{\em 22\/}(3), 795--808.

\bibitem[\protect\citeauthoryear{Chen and Liu}{Chen and Liu}{2000}]{chen2000mixture}
Chen, R. and J.~S. Liu (2000).
\newblock Mixture {K}alman filters.
\newblock {\em Journal of the Royal Statistical Society: Series B (Statistical Methodology)\/}~{\em 62\/}(3), 493--508.

\bibitem[\protect\citeauthoryear{Creal}{Creal}{2012}]{creal2012survey}
Creal, D. (2012).
\newblock A survey of sequential {M}onte {C}arlo methods for economics and finance.
\newblock {\em Econometric reviews\/}~{\em 31\/}(3), 245--296.

\bibitem[\protect\citeauthoryear{Dai, Heng, Jacob, and Whiteley}{Dai et~al.}{2022}]{dai2022invitation}
Dai, C., J.~Heng, P.~E. Jacob, and N.~Whiteley (2022).
\newblock An invitation to sequential {M}onte {C}arlo samplers.
\newblock {\em Journal of the American Statistical Association\/}~{\em 117\/}(539), 1587--1600.

\bibitem[\protect\citeauthoryear{Dau and Chopin}{Dau and Chopin}{2022}]{dau2022waste}
Dau, H.-D. and N.~Chopin (2022).
\newblock Waste-free sequential {M}onte {C}arlo.
\newblock {\em Journal of the Royal Statistical Society Series B: Statistical Methodology\/}~{\em 84\/}(1), 114--148.

\bibitem[\protect\citeauthoryear{Del~Moral}{Del~Moral}{2004}]{del2004feynman}
Del~Moral, P. (2004).
\newblock {\em Feynman-{K}ac formulae}.
\newblock New York: Springer-Verlag.

\bibitem[\protect\citeauthoryear{Del~Moral, Doucet, and Jasra}{Del~Moral et~al.}{2006}]{del2006sequential}
Del~Moral, P., A.~Doucet, and A.~Jasra (2006).
\newblock Sequential {M}onte {C}arlo samplers.
\newblock {\em Journal of the Royal Statistical Society Series B: Statistical Methodology\/}~{\em 68\/}(3), 411--436.

\bibitem[\protect\citeauthoryear{Del~Moral, Doucet, and Jasra}{Del~Moral et~al.}{2012}]{del2012adaptive}
Del~Moral, P., A.~Doucet, and A.~Jasra (2012).
\newblock On adaptive resampling strategies for sequential {M}onte {C}arlo methods.
\newblock {\em Bernoulli\/}~{\em 18\/}(1), 252--278.

\bibitem[\protect\citeauthoryear{Doucet}{Doucet}{1998}]{doucet1998sequential}
Doucet, A. (1998).
\newblock On sequential simulation-based methods for {B}ayesian filtering.
\newblock Technical Report CUED-F-ENG-TR310, University of Cambridge, Department of Engineering.

\bibitem[\protect\citeauthoryear{Doucet, Godsill, and Andrieu}{Doucet et~al.}{2000}]{doucet2000sequential}
Doucet, A., S.~Godsill, and C.~Andrieu (2000).
\newblock On sequential {M}onte {C}arlo sampling methods for {B}ayesian filtering.
\newblock {\em Statistics and computing\/}~{\em 10}, 197--208.

\bibitem[\protect\citeauthoryear{Doucet, Gordon, and Krishnamurthy}{Doucet et~al.}{2002}]{doucet2002particle}
Doucet, A., N.~J. Gordon, and V.~Krishnamurthy (2002).
\newblock Particle filters for state estimation of jump {M}arkov linear systems.
\newblock {\em IEEE Transactions on signal processing\/}~{\em 49\/}(3), 613--624.

\bibitem[\protect\citeauthoryear{Doucet and Wang}{Doucet and Wang}{2005}]{doucet2005monte}
Doucet, A. and X.~Wang (2005).
\newblock Monte {C}arlo methods for signal processing: a review in the statistical signal processing context.
\newblock {\em IEEE Signal Processing Magazine\/}~{\em 22\/}(6), 152--170.

\bibitem[\protect\citeauthoryear{Endo, Van~Leeuwen, and Baguelin}{Endo et~al.}{2019}]{endo2019introduction}
Endo, A., E.~Van~Leeuwen, and M.~Baguelin (2019).
\newblock Introduction to particle {M}arkov-chain {M}onte {C}arlo for disease dynamics modellers.
\newblock {\em Epidemics\/}~{\em 29}, 100363.

\bibitem[\protect\citeauthoryear{Fearnhead and K{\"u}nsch}{Fearnhead and K{\"u}nsch}{2018}]{fearnhead2018particle}
Fearnhead, P. and H.~R. K{\"u}nsch (2018).
\newblock Particle filters and data assimilation.
\newblock {\em Annual Review of Statistics and Its Application\/}~{\em 5\/}(1), 421--449.

\bibitem[\protect\citeauthoryear{Gordon, Salmond, and Smith}{Gordon et~al.}{1993}]{gordon1993novel}
Gordon, N.~J., D.~J. Salmond, and A.~F. Smith (1993).
\newblock Novel approach to nonlinear/non-{G}aussian {B}ayesian state estimation.
\newblock {\em IEE Proceedings F (Radar and Signal Processing)\/}~{\em 140\/}(2), 107--113.

\bibitem[\protect\citeauthoryear{Guarniero, Johansen, and Lee}{Guarniero et~al.}{2017}]{guarniero2017iterated}
Guarniero, P., A.~M. Johansen, and A.~Lee (2017).
\newblock The iterated auxiliary particle filter.
\newblock {\em Journal of the American Statistical Association\/}~{\em 112\/}(520), 1636--1647.

\bibitem[\protect\citeauthoryear{Gustafsson, Gunnarsson, Bergman, Forssell, Jansson, Karlsson, and Nordlund}{Gustafsson et~al.}{2002}]{gustafsson2002particle}
Gustafsson, F., F.~Gunnarsson, N.~Bergman, U.~Forssell, J.~Jansson, R.~Karlsson, and P.-J. Nordlund (2002).
\newblock Particle filters for positioning, navigation, and tracking.
\newblock {\em IEEE Transactions on signal processing\/}~{\em 50\/}(2), 425--437.

\bibitem[\protect\citeauthoryear{Heng, Bishop, Deligiannidis, and Doucet}{Heng et~al.}{2020}]{heng2020controlled}
Heng, J., A.~N. Bishop, G.~Deligiannidis, and A.~Doucet (2020).
\newblock Controlled sequential {M}onte {C}arlo.
\newblock {\em The Annals of Statistics\/}~{\em 48\/}(5), 2904--2929.

\bibitem[\protect\citeauthoryear{Johansen and Doucet}{Johansen and Doucet}{2008}]{johansen2008note}
Johansen, A.~M. and A.~Doucet (2008).
\newblock A note on auxiliary particle filters.
\newblock {\em Statistics \& Probability Letters\/}~{\em 78\/}(12), 1498--1504.

\bibitem[\protect\citeauthoryear{Jouin, Gouriveau, Hissel, P{\'e}ra, and Zerhouni}{Jouin et~al.}{2016}]{jouin2016particle}
Jouin, M., R.~Gouriveau, D.~Hissel, M.-C. P{\'e}ra, and N.~Zerhouni (2016).
\newblock Particle filter-based prognostics: Review, discussion and perspectives.
\newblock {\em Mechanical Systems and Signal Processing\/}~{\em 72}, 2--31.

\bibitem[\protect\citeauthoryear{Kantas, Doucet, Singh, Maciejowski, and Chopin}{Kantas et~al.}{2015}]{kantas2015particle}
Kantas, N., A.~Doucet, S.~S. Singh, J.~Maciejowski, and N.~Chopin (2015).
\newblock On particle methods for parameter estimation in state-space models.
\newblock {\em Statistical Science\/}~{\em 30\/}(3), 328--351.

\bibitem[\protect\citeauthoryear{Kitagawa}{Kitagawa}{1996}]{kitagawa1996monte}
Kitagawa, G. (1996).
\newblock Monte {C}arlo filter and smoother for non-{G}aussian nonlinear state space models.
\newblock {\em Journal of computational and graphical statistics\/}~{\em 5\/}(1), 1--25.

\bibitem[\protect\citeauthoryear{Kong, Liu, and Wong}{Kong et~al.}{1994}]{kong1994sequential}
Kong, A., J.~S. Liu, and W.~H. Wong (1994).
\newblock Sequential imputations and {B}ayesian missing data problems.
\newblock {\em Journal of the American Statistical Association\/}~{\em 89\/}(425), 278--288.

\bibitem[\protect\citeauthoryear{Lazaric, Restelli, and Bonarini}{Lazaric et~al.}{2007}]{lazaric2007reinforcement}
Lazaric, A., M.~Restelli, and A.~Bonarini (2007).
\newblock Reinforcement learning in continuous action spaces through sequential {M}onte {C}arlo methods.
\newblock {\em Advances in neural information processing systems\/}~{\em 20}, 1--8.

\bibitem[\protect\citeauthoryear{Lee and Whiteley}{Lee and Whiteley}{2018}]{lee2018variance}
Lee, A. and N.~Whiteley (2018).
\newblock Variance estimation in the particle filter.
\newblock {\em Biometrika\/}~{\em 105\/}(3), 609--625.

\bibitem[\protect\citeauthoryear{Lin, White, Byrne, Bates, Noack, Pearson, Arslan, Squire, Anthoff, Papamarkou, Besançon, Drugowitsch, Schauer, and contributors}{Lin et~al.}{2019}]{Distributions.jl-2019}
Lin, D., J.~M. White, S.~Byrne, D.~Bates, A.~Noack, J.~Pearson, A.~Arslan, K.~Squire, D.~Anthoff, T.~Papamarkou, M.~Besançon, J.~Drugowitsch, M.~Schauer, and contributors (2019, July).
\newblock {JuliaStats/Distributions.jl: a Julia package for probability distributions and associated functions}.

\bibitem[\protect\citeauthoryear{Lin, Chen, and Liu}{Lin et~al.}{2013}]{lin2013lookahead}
Lin, M., R.~Chen, and J.~S. Liu (2013).
\newblock Lookahead strategies for sequential {M}onte {C}arlo.
\newblock {\em Statistical science\/}~{\em 28\/}(1), 69--94.

\bibitem[\protect\citeauthoryear{Lioutas, Lavington, Sefas, Niedoba, Liu, Zwartsenberg, Dabiri, Wood, and Scibior}{Lioutas et~al.}{2023}]{lioutas2023critic}
Lioutas, V., J.~W. Lavington, J.~Sefas, M.~Niedoba, Y.~Liu, B.~Zwartsenberg, S.~Dabiri, F.~Wood, and A.~Scibior (2023).
\newblock Critic sequential {M}onte {C}arlo.
\newblock In {\em The Eleventh International Conference on Learning Representations}.

\bibitem[\protect\citeauthoryear{Liu and Chen}{Liu and Chen}{1995}]{liu1995blind}
Liu, J.~S. and R.~Chen (1995).
\newblock Blind deconvolution via sequential imputations.
\newblock {\em Journal of the American Statistical Association\/}~{\em 90\/}(430), 567--576.

\bibitem[\protect\citeauthoryear{Lopes and Tsay}{Lopes and Tsay}{2011}]{lopes2011particle}
Lopes, H.~F. and R.~S. Tsay (2011).
\newblock Particle filters and {B}ayesian inference in financial econometrics.
\newblock {\em Journal of Forecasting\/}~{\em 30\/}(1), 168--209.

\bibitem[\protect\citeauthoryear{Macfarlane, Toledo, Byrne, Duckworth, and Laterre}{Macfarlane et~al.}{2024}]{macfarlane2024spo}
Macfarlane, M., E.~Toledo, D.~Byrne, P.~Duckworth, and A.~Laterre (2024).
\newblock {SPO}: Sequential {M}onte {C}arlo policy optimisation.
\newblock {\em Advances in Neural Information Processing Systems\/}~{\em 37}, 1019--1057.

\bibitem[\protect\citeauthoryear{Mihaylova, Carmi, Septier, Gning, Pang, and Godsill}{Mihaylova et~al.}{2014}]{mihaylova2014overview}
Mihaylova, L., A.~Y. Carmi, F.~Septier, A.~Gning, S.~K. Pang, and S.~Godsill (2014).
\newblock Overview of {B}ayesian sequential {M}onte {C}arlo methods for group and extended object tracking.
\newblock {\em Digital Signal Processing\/}~{\em 25}, 1--16.

\bibitem[\protect\citeauthoryear{Owen and Zhou}{Owen and Zhou}{2000}]{owen2000safe}
Owen, A. and Y.~Zhou (2000).
\newblock Safe and effective importance sampling.
\newblock {\em Journal of the American Statistical Association\/}~{\em 95\/}(449), 135--143.

\bibitem[\protect\citeauthoryear{Pedersen}{Pedersen}{2025}]{patchwork}
Pedersen, T.~L. (2025).
\newblock {\em patchwork: The Composer of Plots}.
\newblock {R} package version 1.3.2.

\bibitem[\protect\citeauthoryear{Phillips, Dau, Hutchinson, De~Bortoli, Deligiannidis, and Doucet}{Phillips et~al.}{2024}]{phillips2024particle}
Phillips, A., H.-D. Dau, M.~J. Hutchinson, V.~De~Bortoli, G.~Deligiannidis, and A.~Doucet (2024, 21--27 Jul).
\newblock Particle denoising diffusion sampler.
\newblock In R.~Salakhutdinov, Z.~Kolter, K.~Heller, A.~Weller, N.~Oliver, J.~Scarlett, and F.~Berkenkamp (Eds.), {\em Proceedings of the 41st International Conference on Machine Learning}, Volume 235 of {\em Proceedings of Machine Learning Research}, pp.\  40688--40724. PMLR.

\bibitem[\protect\citeauthoryear{Pitt and Shephard}{Pitt and Shephard}{1999}]{pitt1999filtering}
Pitt, M.~K. and N.~Shephard (1999).
\newblock Filtering via simulation: Auxiliary particle filters.
\newblock {\em Journal of the American Statistical Association\/}~{\em 94\/}(446), 590--599.

\bibitem[\protect\citeauthoryear{{R Core Team}}{{R Core Team}}{2025}]{Rstats}
{R Core Team} (2025).
\newblock {\em R: A Language and Environment for Statistical Computing}.
\newblock Vienna, Austria: R Foundation for Statistical Computing.

\bibitem[\protect\citeauthoryear{Sch\"{o}n, Gustafsson, and Nordlund}{Sch\"{o}n et~al.}{2005}]{schon2005marginalized}
Sch\"{o}n, T., F.~Gustafsson, and P.-J. Nordlund (2005).
\newblock Marginalized particle filters for mixed linear/nonlinear state-space models.
\newblock {\em IEEE Transactions on signal processing\/}~{\em 53\/}(7), 2279--2289.

\bibitem[\protect\citeauthoryear{Stachniss and Burgard}{Stachniss and Burgard}{2014}]{stachniss2014particle}
Stachniss, C. and W.~Burgard (2014).
\newblock Particle filters for robot navigation.
\newblock {\em Foundations and Trends{\textregistered} in Robotics\/}~{\em 3\/}(4), 211--282.

\bibitem[\protect\citeauthoryear{Temfack and Wyse}{Temfack and Wyse}{2024}]{temfack2024review}
Temfack, D. and J.~Wyse (2024).
\newblock A review of sequential {M}onte {C}arlo methods for real-time disease modeling.
\newblock {\em arXiv preprint arXiv:2408.15739\/}.

\bibitem[\protect\citeauthoryear{Thrun}{Thrun}{2002}]{thrun2002particle}
Thrun, S. (2002).
\newblock Particle filters in robotics.
\newblock In {\em Proceedings of the 18th Conference on Uncertainty in Artificial Intelligence (UAI)}, pp.\  511--518.

\bibitem[\protect\citeauthoryear{Van~Leeuwen, K{\"u}nsch, Nerger, Potthast, and Reich}{Van~Leeuwen et~al.}{2019}]{van2019particle}
Van~Leeuwen, P.~J., H.~R. K{\"u}nsch, L.~Nerger, R.~Potthast, and S.~Reich (2019).
\newblock Particle filters for high-dimensional geoscience applications: A review.
\newblock {\em Quarterly Journal of the Royal Meteorological Society\/}~{\em 145\/}(723), 2335--2365.

\bibitem[\protect\citeauthoryear{Wang, Li, Sun, and Corchado}{Wang et~al.}{2017}]{wang2017survey}
Wang, X., T.~Li, S.~Sun, and J.~M. Corchado (2017).
\newblock A survey of recent advances in particle filters and remaining challenges for multitarget tracking.
\newblock {\em Sensors\/}~{\em 17\/}(12), 2707.

\bibitem[\protect\citeauthoryear{Wickham, Averick, Bryan, Chang, McGowan, François, Grolemund, Hayes, Henry, Hester, Kuhn, Pedersen, Miller, Bache, Müller, Ooms, Robinson, Seidel, Spinu, Takahashi, Vaughan, Wilke, Woo, and Yutani}{Wickham et~al.}{2019}]{tidyverse}
Wickham, H., M.~Averick, J.~Bryan, W.~Chang, L.~D. McGowan, R.~François, G.~Grolemund, A.~Hayes, L.~Henry, J.~Hester, M.~Kuhn, T.~L. Pedersen, E.~Miller, S.~M. Bache, K.~Müller, J.~Ooms, D.~Robinson, D.~P. Seidel, V.~Spinu, K.~Takahashi, D.~Vaughan, C.~Wilke, K.~Woo, and H.~Yutani (2019).
\newblock Welcome to the {tidyverse}.
\newblock {\em Journal of Open Source Software\/}~{\em 4\/}(43), 1686.

\end{thebibliography}

\appendix
\section{Proof of main results}\label{app:mainresults}

\subsection*{Proposition~\ref{prop:predequiv}}
\begin{proof} 
Part 1. From Proposition~\ref{prop:knotsetop} $\gamma_0^\ast = R_0$. Further, using the recursion \eqref{eq:marginalrec}, for $p \in \bseq{1}{n-1}$, we have
\begin{equation*}
    \gamma_{p}^\ast(\varphi) = \hat\gamma_{p-1}^\ast M_{p}^\ast(\varphi)
     = \gamma_{p-1}^\ast\{K_{p-1}(G_{p-1}) \cdot K_{p-1}^{G_{p-1}} R_p(\varphi)\}.
\end{equation*}
Then by Proposition~\ref{prop:untwist}, $\gamma_{p-1}^\ast\{K_{p-1}(G_{p-1}) \cdot K_{p-1}^{G_{p-1}} R_p(\varphi)\}
     = \gamma_{p-1}^\ast K_{p-1}\{G_{p-1} \cdot R_p(\varphi)\}$, and hence for $p \in \bseq{1}{n-1}$,
\begin{equation*}
    \gamma_{p}^\ast(\varphi) 
     = \gamma_{p-1}^\ast K_{p-1}\{G_{p-1} \cdot R_p(\varphi)\}.
\end{equation*}
 From this we can see that $\gamma_{1}^\ast(\varphi) = \hat \gamma_{0}R_1(\varphi)$, then $\gamma_{2}^\ast(\varphi) = \hat \gamma_{1}R_2(\varphi)$, and the proof for Part~1 can conclude by induction. 

For Part 2, we use Part 1 to see that $\gamma_0^\ast K_0(\varphi) = R_0 K_0 (\varphi) = M_0(\varphi)=\gamma_0(\varphi)$ and further that
$\gamma^\ast_{p}K_p(\varphi) = \hat\gamma_{p-1}R_pK_p(\varphi)=\hat\gamma_{p-1}M_p(\varphi)=\gamma_{p}(\varphi)$ for $p \in \bseq{1}{n-1}$.
\end{proof}

\subsection*{Proposition~\ref{prop:invar}}

\begin{proof} 
    For Part 1, we expand \eqref{eq:marginalrec} at time $n$ to get 
    \begin{align*}
        \gamma_{n}^\ast(\varphi) &= \gamma_{n-1}^\ast\{G_{n-1}^\ast \cdot M_{n}^\ast(\varphi)\}
        \\
        &= \hat\gamma_{p-2}R_{n-1}\{K_{n-1}(G_{n-1}) \cdot K_{n-1}^{G_{n-1}} M_n(\varphi)\} \\
        &= \hat\gamma_{p-2}R_{n-1}K_{n-1}[G_{n-1} \cdot M_n(\varphi)] \\
        &= \hat\gamma_{p-2}M_{n-1}[G_{n-1} \cdot M_n(\varphi)] \\
        &= \gamma_{n}(\varphi)
    \end{align*}
    from Proposition~\ref{prop:knotsetop}, Proposition~\ref{prop:predequiv}, and Proposition~\ref{prop:untwist}. Since $G_n^\ast = G_n$ the updated terminal measures are also equivalent.
    
    Part 2 follows directly from Part 1 by normalising.
    
    For Part 3, for the predictive measures we have $\gamma^\ast_{n}(1) = \gamma_{n}(1)$ from Part 1. Further, for $p \in \bseq{0}{n-1}$, Proposition~\ref{prop:predequiv} (Part~2) yields $\gamma^\ast_{p}K_p(1) = \gamma_{p}(1)$ then we note that $\gamma^\ast_{p}K_p(1) = \gamma^\ast_{p}(1)$ to gain $\gamma^\ast_{p}(1) = \gamma_{p}(1)$. 

    For the updated measures, for $p \in \bseq{1}{n-1}$, Proposition~\ref{prop:predequiv} (Part~1) yields $\gamma^\ast_{p}(1) = \hat\gamma_{p-1}R_p(1) = \hat\gamma_{p-1}(1)$. Then note that $\gamma^\ast_{p}(1) = \hat\gamma^\ast_{p-1}(1)$ to gain $\hat\gamma^\ast_{p}(1) = \hat\gamma_{p}(1)$ for $p \in \bseq{0}{n-2}$. For $p=n-1$ we have $\hat\gamma^\ast_{n-1}(1) = \gamma^\ast_{n}(1) = \gamma_{n}(1) =  \hat\gamma_{n-1}(1)$ by Part 1, and for $p=n$, $\hat\gamma^\ast_{n}(1) = \hat\gamma_{n}(1)$ again by Part 1.

\end{proof}

\subsection*{Theorem~\ref{th:var}}

\begin{proof} 
    From Proposition~\ref{prop:Qpn} we have $Q_{p,n}^\ast = K_{p} Q_{p,n}$ almost everywhere for $p \in \bseq{0}{n-1}$, and from Proposition~\ref{prop:predequiv} $\gamma_0^\ast = R_0$ and $\gamma_p^\ast = \hat\gamma_{p-1} R_p$ for $p \in \bseq{1}{n-1}$. Therefore, using Jensen's inequality, for $p \in \bseq{1}{n-1}$
    \begin{align*}
        \gamma_p^\ast\{Q_{p,n}^\ast(\varphi)^2\} =\hat\gamma_{p-1} R_p \{K_{p} Q_{p,n}(\varphi)^2\} &\leq \hat\gamma_{p-1} R_p K_{p}\{ Q_{p,n}(\varphi)^2\} = \gamma_{p}\{ Q_{p,n}(\varphi)^2\},~\text{and}\\
        \gamma_0^\ast\{Q_{0,n}^\ast(\varphi)^2\} = R_0\{K_0Q_{0,n}(\varphi)^2\} &\leq R_0K_0\{Q_{0,n}(\varphi)^2\} = \gamma_0\{Q_{0,n}(\varphi)^2\}.
    \end{align*}
     Whilst for $p=n$, and $Q_{n,n}^\ast=Q_{n,n} = \mathrm{Id}$ by definition and~$\gamma_n^\ast=\gamma_n$ from Proposition~\ref{prop:invar} so $\gamma_n^\ast\{Q_{n,n}(\varphi)^2\} = \gamma_n\{Q_{n,n}(\varphi)^2\}$.
    From the above inequalities, and using $\gamma_p^\ast(1) = \gamma_p(1)$ and $\eta_n^\ast = \eta_n$ from Proposition~\ref{prop:invar} we can state that, for $p \in \bseq{0}{n-1}$,
    \begin{equation*}
        v_{p,n}^\ast(\varphi) = \frac{\gamma_p^\ast(1)\gamma_p^\ast(Q_{p,n}^\ast(\varphi)^2)}{\gamma_n^\ast(1)^2} - \eta_n^\ast(\varphi)^2 
        \leq \frac{\gamma_p(1)\gamma_p(Q_{p,n}(\varphi)^2)}{\gamma_n(1)^2} - \eta_n(\varphi)^2 
        = v_{p,n}(\varphi),
    \end{equation*}
    and therefore $\sigma^2_\ast(\varphi) \leq \sigma^2(\varphi)$ by also noting that $v_{n,n}^\ast(\varphi) =v_{n,n}(\varphi)$.

    To quantify the reduction in variance, we can see that 
    \begin{align*}
        \gamma_n\{Q_{n,n}(\varphi)^2\} - \gamma_n^\ast\{Q_{n,n}(\varphi)^2\} &= 0, \\
        \gamma_{p}\{ Q_{p,n}(\varphi)^2\} - \gamma_p^\ast\{Q_{p,n}^\ast(\varphi)^2\} &= \hat\gamma_{p-1} R_p K_{p}\{ Q_{p,n}(\varphi)^2\} -\hat\gamma_{p-1} R_p \{K_{p} Q_{p,n}(\varphi)^2\}\\ 
        &= \hat\gamma_{p-1} R_p \left[ K_{p}\{ Q_{p,n}(\varphi)^2\} -  K_{p} Q_{p,n}(\varphi)^2 \right] \\
        &= \hat\gamma_{p-1} R_p \left[\mathrm{Var}_{K_p}\{Q_{p,n}(\varphi)\} \right],~\text{for}~p\in \bseq{1}{n-1},\\
        \gamma_0\{Q_{0,n}(\varphi)^2\} -\gamma_0^\ast\{Q_{0,n}^\ast(\varphi)^2\} &= R_0K_0\{Q_{0,n}(\varphi)^2\}- R_0\{K_0Q_{0,n}(\varphi)^2\} \\ &= R_0 \left[ K_0\{Q_{0,n}(\varphi)^2\} - K_0Q_{0,n}(\varphi)^2 \right] \\
        &= R_0 \left[\mathrm{Var}_{K_0}\{Q_{0,n}(\varphi)\} \right]
    \end{align*}
    which combined with the measure equivalences with Proposition~\ref{prop:invar} yields the desired result. From this quantification and the original inequality we can conclude that the inequality is indeed strict if the $\nu_p$-averaged variance terms in Theorem~\ref{th:var} are strictly positive.
\end{proof}

\subsection*{Corollary~\ref{coro:var2}}

\begin{proof} 
    For Part 1, if $\varphi \in \mathcal{F}(\fk)$ then $\varphi - \eta_n(\varphi) \in \mathcal{F}(\fk)$ then from Theorem~\ref{th:var} we have $\sigma_\ast^2(\varphi - \eta_{n}^\ast(\varphi)) \leq \sigma^2(\varphi - \eta_{n}(\varphi))$, noting that $\eta_{n}^\ast(\varphi)=\eta_{n}(\varphi) < \infty$.
    
For Part 2, using \eqref{eq:asyvarupdate} the updated asymptotic variance of $\fk^\ast$ can be written as $\hat{\sigma}_\ast^2(\varphi) = \sigma_\ast^2(G_n^\ast\cdot\varphi) / \eta_n^\ast(G_n^\ast)^2$. Then we can state
\begin{equation*}
    \hat{\sigma}_\ast^2(\varphi) = \frac{\sigma_\ast^2(G_n^\ast\cdot\varphi)}{\eta_n^\ast(G_n^\ast)^2} = \frac{\sigma_\ast^2(G_n\cdot\varphi)}{\eta_n(G_n)^2}
\end{equation*}
since we have $G_n^\ast = G_n$ by Proposition~\ref{prop:knotsetop} and $\eta_n^\ast = \eta_n$ by Proposition~\ref{prop:invar}. Lastly, if $\varphi \in \hat{\mathcal{F}}(\fk)$ then $G_n\cdot \varphi \in \mathcal{F}(\fk)$ and so from Theorem~\ref{th:var} $\sigma_\ast^2(G_n\cdot\varphi) \leq \sigma^2(G_n\cdot\varphi)$. Therefore,
\begin{equation*}
    \hat{\sigma}_\ast^2(\varphi) \leq \frac{\sigma^2(G_n\cdot\varphi)}{\eta_n(G_n)^2} = \hat{\sigma}^2(\varphi),
\end{equation*}
and the inequality is strict under the same conditions as Theorem~\ref{th:var}.
Part 3 follows in the same manner as Part 1, but for updated models.
\end{proof}

\subsection*{Theorem~\ref{th:varorder}}

\begin{proof} 
If ${\fk^\ast \preccurlyeq \fk}$ then there exists a sequence of knotsets $\kn_1, \kn_2,\ldots, \kn_m$ such that $\fk^\ast = \kn_m\ast\cdots\ast \kn_1\ast\fk$ for some $m \in \nn{1}$. 
Let $\fk_s = \kn_s \ast \fk_{s-1}$ for $s \in \bseq{1}{m}$ with $\fk_{0} = \fk$ and let the predictive and asymptotic variance maps of $\fk_s$ be $\sigma^2_s$ and $\hat\sigma^2_s$ respectively. We note that $\fk_m = \fk^\ast$. From Theorem~\ref{th:var} and Corollary~\ref{coro:var2} we can state that
$\sigma^2_s(\varphi) \leq \sigma^2_{s-1}(\varphi)$ for $\varphi \in \mathcal{F}(\fk)$ and $\hat\sigma^2_s (\varphi) \leq \hat\sigma^2_{s-1} (\varphi)$ for $\varphi \in \hat{\mathcal{F}}(\fk)$ over $s \in \bseq{1}{m}$. Each inequality will be strict under the same conditions as Theorem~\ref{th:var}. Therefore we can state that
$\sigma_\ast^2(\varphi) = \sigma^2_m(\varphi) \leq \sigma^2_{0}(\varphi) = \sigma^2(\varphi)$ if $\varphi \in \mathcal{F}(\fk)$ and $\hat{\sigma}_\ast^2(\varphi) = \hat{\sigma}^2_m(\varphi) \leq \hat{\sigma}^2_{0}(\varphi) = \hat{\sigma}^2(\varphi)$ if $\varphi \in \hat{\mathcal{F}}(\fk)$. The analogous results for the probability measures follow since $\varphi - \eta_n(\varphi) \in \mathcal{F}(\fk)$ and $\varphi - \hat\eta_n(\varphi) \in \hat{\mathcal{F}}(\fk)$.
\end{proof}

\subsection*{Theorem~\ref{th:adaptknotopt}}

\begin{proof} 
    First consider the case of knots. For $t >0$ and $\kn = (t,R,K)$, let $\mathcal{R} = (t,\mathrm{Id},R)$ then $\mathcal{R} \ast \kn \ast \fk = \kn^{\diamond} \ast \fk$ by Proposition~\ref{prop:knotsimp} and hence $\kn^{\diamond} \ast \fk \preccurlyeq \kn \ast \fk$. For a knot at time $t=0$, $\kn = (0,R,K)$ where $R$ is a probability measure. As such, let $\mathcal{R} = (0,\delta_0,K_0)$ where $K_0(0,\cdot) = R$ to reach the same conclusion.
    
    For knotsets, Proposition~\ref{prop:knotsetcomplete} ensures the existence of a knot $\mathcal{R}$ such that $\mathcal{R} \ast \kn \ast \fk = \kn^\diamond \ast \fk$ for any model $\fk$ and knotset $\kn$. Therefore, $\kn^{\diamond} \ast \fk \preccurlyeq \kn \ast \fk$. 
\end{proof}

\subsection*{Theorem~\ref{th:madaptknotopt}}

\begin{proof} 
    We have that $\fk_m = \kn_{m-1} \ast \fk_{m-1}$ and hence, by Proposition~\ref{prop:knotsetcomplete}, there exists a knotset~$\mathcal{R}_m$ that completes $\kn_{m-1}$, that is~$\mathcal{R}_m \ast \fk_m = \kn_{m-1}^\diamond \ast \fk_{m-1}$ where $\kn_{m-1}^\diamond$ is the adapted knotset for $\fk_{m-1}$. We use Proposition~\ref{prop:repadapteq} to find
    \begin{align*}
        \mathcal{R}_m \ast \fk_m & = \kn_{m-1}^\diamond \ast \kn_{m-2} \ast \cdots \ast \kn_0 \ast \fk_0\\
        &=\mathcal{J}_{m-1} \ast \cdots \ast \mathcal{J}_{1} \ast \kn_0^\star \ast \fk_0 \\
        &= \mathcal{J}_{m-1} \ast \cdots \ast \mathcal{J}_{1} \ast \fk_1^\star,
    \end{align*}
for some knotsets $\mathcal{J}_t$ for $t \in \bseq{1}{m-1}$, where the first equality follows by definition of $\fk_{m-1}$. The process of completing the first knotset (now $\mathcal{J}_{m-1}$) with a $\mathcal{R}_{m-1}$ and changing the adapted knotset in the sequence can be repeated until we have 
\begin{equation*}
    \mathcal{R}_{1} \ast \cdots \ast \mathcal{R}_{m-1} \ast \mathcal{R}_m \ast \fk_m = \mathcal{M}^\ast_m,
\end{equation*}
and hence $\fk_m^\star \preccurlyeq \fk_m$.
\end{proof}

\subsection*{Theorem~\ref{th:minimalvar}}

\begin{proof} 
    The model $\fk_n$ in Example~\ref{ex:perfadapt} satisfies the requirements on the asymptotic variance terms for a sequence of knots. This can be seen by noting all potential functions are constant in this model before the terminal time, hence $v_{p,n}^\ast(\varphi)=0$ for $p \in \bseq{0}{n-1}$. As for the final term we have 
    \begin{align*}
        v_{n,n}^\ast(\varphi) &=  \gamma^\ast_n(G_n^2) - \eta_n^\ast(\varphi)^2 \\
        &= \prod_{p=0}^{n-1}\eta_p(G_p)\eta_n(G_n^2) - \eta_n(\varphi)^2 \\
        &=\gamma_{n}(1) \eta_n(G_n^2) - \eta_n(\varphi)^2 \\
        &= \gamma_n(G_n^2) - \eta_n(\varphi)^2 \\
        &= v_{n,n}(\varphi).
    \end{align*}
As knots are a special case of knotsets, a sequence of knotsets satisfying the variance conditions also exists.
\end{proof}

\subsection*{Proposition~\ref{prop:extequiv}}

\begin{proof} 
    For Part~1, since no elements of the Feynman--Kac model are changed prior to time $n$, we have that $\hat\gamma_{n-1}^\phi = \hat\gamma_{n-1}$ and then
    \begin{equation}\label{eq:extint}
        \gamma_n^\phi(\varphi_1 \otimes \varphi_2) = \hat\gamma_{n-1}M_{n}^\phi(\varphi_1 \otimes \varphi_2) 
        = \hat\gamma_{n-1}M_{n}(\varphi_1 \cdot \varphi_2) 
         = \gamma_{n}(\varphi_1 \cdot \varphi_2).
    \end{equation}
    From this result, we see that $\gamma_n^\phi(1 \otimes \varphi) = \gamma_n(\varphi)$ as required for the predictive measure. As for the updated measure in Part~2 we have 
    \begin{align*}
        \hat\gamma_n^\phi(1 \otimes \varphi) &= \gamma_n^\phi(G_n^\phi \cdot [1 \otimes \varphi]) \\
        &= \gamma_n^\phi([G_n \cdot \phi] \otimes [\varphi \cdot \phi^{-1}]) \\
        &= \gamma_n(G_n \cdot \varphi \cdot \phi \cdot \phi^{-1}) \\
        &= \hat\gamma_n(\varphi \cdot \phi \cdot \phi^{-1}) \\
        &= \hat\gamma_n(\varphi),
    \end{align*}
    using \eqref{eq:extint} for the third equality and the fact that $\phi$ is $\hat\gamma_n$-a.e.\ positive for the final equality.
    For Part~3, starting with the $Q_{p,n}$ terms we first consider $Q_{n}^\phi$ for
\begin{align*}
    Q_{n}^\phi(G_n^\phi \cdot [1\otimes \varphi]) &= G_{n-1}^\phi \cdot  M_n^\phi(G_n^\phi\cdot [1\otimes \varphi]) \\
     &= G_{n-1}^\phi \cdot  (M_n \otimes \mathrm{Id})([G_n \cdot \phi] \otimes [\varphi \cdot \phi^{-1}]) \\
    &= G_{n-1} \cdot M_n(G_n \cdot \varphi \cdot \phi \cdot \phi^{-1}) \\
    &= Q_n(G_n \cdot \varphi),
\end{align*}
almost everywhere w.r.t.\ $\gamma_{n-1}$. Then since $M_p^\phi = M_p$ and $G_p^\phi = G_p$ for $p \in \bseq{0}{n-1}$, we have $Q_p^\phi =Q_p$ for $p \in \bseq{1}{n-1}$, and can state that $Q_{p,n}^\phi(G_n^\phi) = Q_{p,n}(G_n\cdot\varphi)$ almost everywhere under $\gamma_p$ for $p \in \bseq{0}{n-1}$. It is also true that $Q_{n,n}^\phi = \mathrm{Id}$ and $Q_{n,n} = \mathrm{Id}$ by definition, where each identity kernel is defined on their respective measure space. As such, combining with Part~2 we can state that $\hat{v}^\phi_{p,n}(1\otimes\varphi) = \hat{v}_{p,n}(\varphi)$ for $p \in \bseq{0}{n-1}$. 

Whereas for $p=n$, we first note $\{G_n^\phi \cdot [1\otimes\varphi] \}^2 = [G_n \cdot \phi]^2 \otimes [\varphi \cdot \phi^{-1}]^2$, so
\begin{align*}
    M^\phi_n(\{G_n^\phi \cdot [1\otimes\varphi]\}^2) &= (M_n \otimes \mathrm{Id})([G_n \cdot \phi]^2 \otimes [\varphi \cdot \phi^{-1}]^2) \\
    &= M_n([G_n \cdot \varphi]^2),
\end{align*}
almost everywhere under $\hat\gamma_{n-1}$.
Then for the $n$th variance term
\begin{align*}
    \hat{v}_{n,n}^\phi([1 \otimes \varphi]) &= \frac{\gamma_n^\phi(1)\gamma_n^\phi(\{G_n^\phi\cdot [1\otimes\varphi]\}^2)}{\gamma_n^\phi(G_n^\phi)^2} - \eta_n^\phi(G_n^\phi \cdot [1 \otimes \varphi]) \\
    &= \frac{\gamma_n(1)\hat\gamma_{n-1}M_n^\phi(\{G_n^\phi\cdot [1\otimes\varphi]\}^2)}{\gamma_n(G_n)^2} - \eta_n(G_n\cdot [1 \otimes \varphi]) \\
     &= \frac{\gamma_n(1)\hat\gamma_{n-1}M_n(\{G_n\cdot \varphi\}^2)}{\gamma_n(G_n)^2} - \eta_n(G_n\cdot [1 \otimes \varphi])\\
     &= \hat{v}_{n,n}(\varphi),
\end{align*}
using Part~2 for equality of marginal measures.
Hence, $\hat\sigma^2_\phi([1 \otimes \varphi]) = \hat\sigma^2(\varphi)$ since all $p \in \bseq{0}{n}$ terms are equal.
\end{proof}

\subsection*{Proposition~\ref{prop:closedmodelcompat}}

\begin{proof} 
    By compatibility condition (i),  $\fk^\circ = (M_{0:n}^\circ,G_{0:n}^\circ)$ satisfies $M_{n}^\circ = U_1 \otimes V_1^{G_n\cdot \phi}$  and $G_n^\circ = V_1(G_n\cdot \phi) \otimes \phi^{-1}$, for some Markov kernels $U_1$ and $V_1$, reference model $\fk = (M_{0:n},G_{0:n})$, and target function $\phi$. Let $\fk^\ast = \kn \ast \fk^\circ = (M_{0:n}^\ast,G_{0:n}^\ast)$.

     First case. If $\kn$ is a non-terminal knot then $M_{n}^\ast=K^{G_{n-1}}U_1 \otimes V_1^{G_n\cdot \phi}$ and $G_{n}^\ast=V_1(G_n\cdot\phi) \otimes \phi^{-1}$ for some Markov kernel $K$. Letting $U_{2} = K^{G_{n-1}}U_1$ and $V_{2} = V_1$ shows that $\fk^\ast$ satisfies compatibility condition (i).

    Second case. If $\kn$ is a terminal knot then $M_{n}^\ast=R \otimes K^{V_1(G_n\cdot\phi)}V_1^{G_n\cdot \phi}$ and $G_{n}^\ast=KV_1(G_n\cdot\phi) \otimes \phi^{-1}$ for some Markov kernels $R$ and $K$ such that $RK = U_1$. By Proposition~\ref{prop:twisteq}, we can state $M_{n}^\ast=R \otimes (KV_1)^{G_n\cdot \phi}$ and hence letting $U_{2} = R$ and $V_{2} = KV_1$ shows that $\fk^\ast$ satisfies compatibility condition (i).
\end{proof}

\subsection*{Proposition~\ref{prop:terminalinvar}}

\begin{proof} 
For Part 1, since $M_p^\ast = M_p$ and $G_p^\ast = G_p$ for $p \in \bseq{0}{n-1}$ all marginal measures are equal at these times.

By Definition~\ref{def:terminalkncompat} compatibility condition~(i) there exists $P_1$ $P_2$, $H$, and $\phi$ such that $M_{n} = P_1 \otimes P_2$ and $G_{n} = H \otimes \phi^{-1}$. 
Therefore for Part~2 we can consider
\begin{align*}
    \hat\gamma_n^\ast(1\otimes\varphi) &= \hat\gamma_{n-1}^\ast M_n^\ast(G_n^\ast \cdot[1\otimes\varphi]) \\
     &= \hat\gamma_{n-1} (R \otimes K^{H} P_2)(K(H)\otimes [\phi^{-1} \cdot \varphi]) \\
     &= \hat\gamma_{n-1}R\{K(H)\cdot K^{H} P_2(\phi^{-1}\cdot \varphi)\} \\
     &= \hat\gamma_{n-1}RK[H \cdot P_2(\phi^{-1}\cdot \varphi)]
\end{align*}
by Proposition~\ref{prop:untwist}.
Then, by Definition~\ref{def:terminalkncompat} compatibility condition~(ii), we have
\begin{align*}
    \hat\gamma_n^\ast(1\otimes\varphi)&= \hat\gamma_{n-1}P_1[H \cdot P_2(\phi^{-1}\cdot \varphi)] \\
    &= \hat\gamma_{n-1}(P_1 \otimes P_2)([H \otimes \phi^{-1}] \cdot [1 \otimes \varphi]) \\
    &= \hat\gamma_{n-1}M_n(G_n\cdot [1 \otimes \varphi]) \\
    &= \hat\gamma_{n}(1 \otimes \varphi).
\end{align*}
\end{proof}

\subsection*{Theorem~\ref{th:terminaldiffvar}}

\begin{proof} 
For the $Q_{p,n}$ terms we first consider $Q_{n}^\ast$ for
\begin{align*}
    Q_{n}^\ast(G_n^\ast \cdot \bar\varphi) = G_{n-1}^\ast \cdot   M_n^\ast(G^\ast_n\cdot \bar\varphi) 
    = G_{n-1} \cdot  M_n(G_n  \cdot \bar\varphi)
    = Q_n(G_n \cdot \bar\varphi)
\end{align*}
from Proposition~\ref{prop:terminalkerneleq} and noting $G_{n-1}^\ast=G_{n-1}$. Then since $M_p^\ast = M_p$ and $G_p^\ast = G_p$ for $p \in \bseq{0}{n-1}$, we have $Q_p^\ast =Q_p$ for $p \in \bseq{1}{n-1}$, and can state that $Q_{p,n}^\ast(G_n^\ast \cdot \bar\varphi) = Q_{p,n}(G_n\cdot\bar\varphi)$ for $p \in \bseq{0}{n-1}$ and $Q_{n,n}^\ast = Q_{n,n} = \mathrm{Id}$ by definition. As such, combining with Proposition~\ref{prop:terminalinvar} (Part~1),  we can state that $\hat{v}^\ast_{p,n}(\bar\varphi) = \hat{v}_{p,n}( \bar\varphi)$ for $p \in \bseq{0}{n-1}$.  Therefore, we can state that
$\hat\sigma^2(\bar\varphi) - \hat{\sigma}_\ast^2(\bar\varphi) = \hat{v}_{n,n}(\bar\varphi) - \hat{v}_{n,n}^\ast(\bar\varphi)$.

From \eqref{eq:asyvarupdate} we find
\begin{align*}
    \hat{v}_{n,n}(\bar\varphi) - \hat{v}_{n,n}^\ast(\bar\varphi) &= \frac{\eta_n(\{G_n\cdot \bar\varphi\}^2) -\eta_n(G_n\cdot \bar\varphi)^2}{\eta_n(G_n)^2} -\frac{\eta_n^\ast(\{G_n^\ast\cdot \bar\varphi\}^2)-\eta_n^\ast(G_n^\ast\cdot \bar\varphi)^2}{\eta_n^\ast(G_n^\ast)^2}\\
    &= \frac{\hat\eta_{n-1}M_n(\{G_n\cdot \bar\varphi\}^2)}{\eta_n(G_n)^2} - \frac{\hat\eta_{n-1}^\ast M_n^\ast(\{G_n^\ast\cdot \bar\varphi\}^2)}{\eta_n(G_n)^2} \\
    &= \frac{\hat\eta_{n-1}\left[M_n(\{G_n\cdot \bar\varphi\}^2)-M_n^\ast(\{G_n^\ast\cdot \bar\varphi\}^2)\right]}{\eta_n(G_n)^2},
\end{align*}
using $\eta_n^\ast(G_n^\ast\cdot\bar\varphi) = \frac{\hat\gamma_n^\ast(\bar\varphi)}{\gamma_n^\ast(1)} = \frac{\hat\gamma_n(\bar\varphi)}{\gamma_n(1)} = \eta_n(G_n \cdot \bar\varphi)$ and $\hat\eta_{n-1}^\ast = \hat\eta_{n-1}$ from Proposition~\ref{prop:terminalinvar}.

Then we compute the individual terms of the difference, first noting that $M_n = P_1 \otimes P_2$ and $G_n = H \otimes \phi^{-1}$ for some $P_1$, $P_2$, $H$, and $\phi$ by compatibility. 
As such, we have
\begin{align*}
    M^\ast_n(\{G_n^\ast \cdot \bar\varphi\}^2)&= (R \otimes K^{H} P_2)(K({H})^2 \otimes [\phi^{-1} \cdot \varphi]^2) \\
    &= R\{K({H})^2 \cdot K^{H} P_2([\phi^{-1} \cdot \varphi]^2)\} \\
    &= R\{K({H}) \cdot K[ {H} \cdot P_2([\phi^{-1} \cdot \varphi]^2)]\}
\end{align*}
by Proposition~\ref{prop:untwist}, and by simplification
\begin{align*}
    M_n(\{G_n \cdot \bar\varphi\}^2)&= (P_1 \otimes P_2)(H^2 \otimes [\phi^{-1} \cdot \varphi]^2) \\
    &= P_1\{H^2 \cdot P_2([\phi^{-1} \cdot \varphi]^2)\} \\
    &= RK\{H^2 \cdot P_2([\phi^{-1} \cdot \varphi]^2)\}.
\end{align*}
Let $H^\prime = {H} \cdot P_2([\phi^{-1} \cdot \varphi]^2)$ then we can state 
\begin{align*}
    M_n(\{G_n \cdot \bar\varphi\}^2) - M^\ast_n(\{G_n^\ast \cdot \bar\varphi\}^2)
    &= R \left\{K(H \cdot H^\prime) - K({H}) \cdot K( H^\prime)\right\} \\
    &= R\{\mathrm{Cov}_K(H,H^\prime)\},
\end{align*}
completing the proof.
\end{proof}

\subsection*{Corollary~\ref{coro:terminalvar}}

\begin{proof} 
From (repeated application of) Proposition~\ref{prop:closedmodelcompat} there exists Markov kernels $U,V$ such that $M_n^\circ = U \otimes V^{G_n\cdot \phi}$ and $G_n^\circ = V(G_n\cdot\phi) \otimes \phi^{-1}$. With $P_1=U$, $P_2 = V^{G_n\cdot \phi}$, and $H = V(G_n\cdot\phi)$, we note that 
\begin{align*}
    H\cdot P_2[\{\phi^{-1}\cdot \varphi\}^2] &= V(G_n\cdot\phi) \cdot V^{G_n\cdot \phi}[\{\phi^{-1} \cdot \varphi\}^2] \\
    &= V(G_n\cdot \phi) = H,
\end{align*}
since $\phi = \vert \varphi \vert$. Then from Theorem~\ref{th:terminaldiffvar} we can state
\begin{equation}\label{eq:terminalvarres1}
    \hat\sigma^2_\circ(1 \otimes \varphi) - \hat{\sigma}_\ast^2(1 \otimes \varphi) = \frac{\hat\eta_{n-1}^\circ R\{\mathrm{Var}_K(V[G_n \cdot\phi])\}}{\eta_n^\circ(G_n^\circ)}.
\end{equation}
Hence, we have
\begin{equation}\label{eq:terminalvarineq1}
\hat{\sigma}_\ast^2(1 \otimes \varphi) \leq \hat\sigma^2_\circ(1 \otimes \varphi)
\end{equation}
and the inequality will be strict if $\hat\eta_{n-1}^\circ R\{\mathrm{Var}_K(V[G_n \cdot\phi])\} > 0$.

Equation~\eqref{eq:terminalvarineq1} establishes that terminal knots applied to $\phi$-extended models with knots lead to an asymptotic variance ordering for test functions of the form $1 \otimes \varphi$ when $\phi = \vert \varphi \vert$. Since $\fk^\circ$ has this form, every knot in the assumed knot-model sequence reduces the variance of a test function of the form $1 \otimes \varphi$. This follows from Proposition~\ref{th:var} (if a standard knot) or by \eqref{eq:terminalvarineq1} (if a terminal knot). Hence we can state that $\hat{\sigma}_\circ^2(1 \otimes \varphi) \leq \hat{\sigma}_\phi^2(1 \otimes \varphi)$ where $\hat{\sigma}_\phi^2$ is the asymptotic variance map for $\fk^\phi$. Then from Proposition~\ref{prop:extequiv} we have $\hat{\sigma}_\phi^2(1 \otimes \varphi) = \hat{\sigma}^2(\varphi)$ to complete the first part of the proof.

If $\kn$ is the first knot to be applied to $\fk^\phi$ then we have $V = \mathrm{Id}$, $\hat\eta^\circ_{n-1}=\hat\eta^\phi_{n-1}$, and $\hat{\sigma}_\circ^2(1 \otimes \varphi) = \hat{\sigma}^2(\varphi)$ as $\fk^\circ = \fk^\phi$ and $\hat\eta^\phi_{n-1} = \hat\eta_{n-1}$ as $\phi$-extension does not change the non-terminal measures. Using these result in conjunction with \eqref{eq:terminalvarres1} leads to the final result.
\end{proof}

\subsection*{Theorem~\ref{th:terminalvarorder}}

\begin{proof} 
    By definition $\fk^\ast = \kn_m\ast\cdots\ast \kn_1\ast\fk^\circ$ so the variance inequalities follow from iterated applications of Corollary~\ref{coro:terminalvar} (if a terminal knot) or Theorem~\ref{th:var} (if a standard knot).  Further, $\fk^\circ = \kn^\prime_{m^\prime}\ast\cdots\ast \kn^\prime_1\ast\fk^\phi$ by definition so the equalities between measures follows by iterated applications of Proposition~\ref{prop:terminalinvar} (if a terminal knot) and Proposition~\ref{prop:invar} (if a standard knot) to the entire sequence of models. Finally, Proposition~\ref{prop:extequiv} ensures the equivalence of the $\phi$-extended model to the reference model $\fk$.
\end{proof}

\subsection*{Theorem~\ref{th:terminaladaptknotopt}}

\begin{proof} 
    First note that since $\fk^\circ$ satisfies Part~2 of Definition~\ref{def:terminalpartialorder} with respect to $\fk$, by definition, $\kn \ast \fk^\circ$ also satisfies this condition. 
    Then by (repeated application of) Proposition~\ref{prop:closedmodelcompat} there exists Markov kernels $U,V$ such that $M_n^\circ = U \otimes V^{G_n\cdot \phi}$ and $G_n^\circ = V(G_n\cdot\phi) \otimes \phi^{-1}$. Let $\kn=(n,R,K)$ and $\mathcal{R} = (n,\mathrm{Id},R)$, consider the model $\fk^\ast=\mathcal{R}\ast\kn\ast\fk^\circ$, noting that $U = RK$ by compatibility.
    Hence $M_{n}^\ast = \mathrm{Id} \otimes R^{K(H)} K^H V^{G_n\cdot \phi}$ and $G_n^\ast =  RK(H) \otimes \phi^{-1}$ where $H = V(G_n\cdot \phi)$ from the sequential application of the terminal knots.  
    We can simplify the Markov kernel $M_{n}^\ast = \mathrm{Id} \otimes (RKV)^{G_n\cdot \phi} = \mathrm{Id} \otimes(UV)^{G_n\cdot \phi}$ using Proposition~\ref{prop:twisteq} twice. The potential function simplifies to $G_n^\ast =  UV(G_n \cdot \phi) \otimes \phi^{-1}$.

    Now consider the model $\fk^\diamond = \kn^\diamond \ast \fk^\circ$ where $\kn^\diamond$ is the adapted kernel for $\fk^\circ$. The adapted knot is $\kn^\diamond = (n, \mathrm{Id},U)$ and hence $M_{n}^\diamond = \mathrm{Id} \otimes U^{H}V^{G_n\cdot \phi} = \mathrm{Id} \otimes(UV)^{G_n\cdot \phi}$ by Proposition~\ref{prop:twisteq} and $G_n^\ast =  U(H) \otimes \phi^{-1}=UV(G_n \cdot \phi) \otimes \phi^{-1}$. Hence, $M_{n}^\diamond = M_{n}^\ast$ and $G_{n}^\diamond = G_{n}^\ast$, so that we can conclude $\fk^\ast = \fk^\diamond$. 

    Finally, we can state $\fk^\ast=\mathcal{R}\ast\kn\ast\fk^\circ = \kn^\diamond \ast \fk^\circ$ and hence $\kn^\diamond\ast\fk^\circ \preccurlyeq_{\phi} \kn \ast \fk^\circ$ with respect to $\fk$.
\end{proof}

\subsection*{Corollary~\ref{cor:terminalminimalvar}}

\begin{proof} 
Consider $\fk_n$ defined by the sequence in Example~\ref{ex:perfadapt} using initial model $\fk_0 = \fk^{\phi}$ with target function $\phi = \vert \varphi\vert$ and reference model $\fk$. Let $\fk^\star = \kn^\diamond_n \ast \fk_n$ where $\kn^\diamond_n$ is the adapted terminal knot for $\fk_n$. 

First note that from Theorem~\ref{th:minimalvar} we have $\hat{v}_{p,n}^\star(\varphi)=0$ for $p \in \bseq{0}{n-1}$ as the application of the terminal knot $\kn^\diamond_n$ to $\fk_n$ will not change the asymptotic variance terms at earlier times.

From Example~\ref{ex:terminalperfadapt} we can state $\eta_n^\star = \delta_0 \otimes \eta_n^{G_n\cdot\phi}$ and $G_n^\star = \eta_{n}(G_n\cdot\phi) \otimes \phi^{-1}$. Hence, $\eta_n^\star(G_n^\star\cdot \bar\varphi) = (\delta_0 \otimes \eta_n^{G_n\cdot\phi})\{\eta_n(G_n\cdot\phi) \otimes (\phi^{-1}\cdot \varphi)\} = \eta_n(G_n \cdot \varphi)$ and $\eta_n^\star(\{G_n^\star\cdot \bar\varphi\}^2) = (\delta_0 \otimes \eta_n^{G_n\cdot\phi})\{\eta_n(G_n\cdot\phi)^2 \otimes 1\} = \eta_n(G_n\cdot\phi)^2$.

Combining these results with \eqref{eq:asyvarupdate} we find that 
\begin{align*}
    \hat{v}_{n,n}^\ast(\bar\varphi) 
    &= \frac{\eta_n^\ast(\{G_n^\ast\cdot \bar\varphi\}^2) - \eta_n^\ast(G_n^\ast\cdot \bar\varphi)^2}{\eta_n^\ast(G_n^\ast)^2}\\
    &= \frac{\eta_n(G_n\cdot \phi)^2 - \eta_n(G_n\cdot \varphi)^2}{\eta_n(G_n)^2}\\
    &= \hat\eta_n(\phi)^2 - \hat\eta_n(\varphi)^2.
\end{align*}
\end{proof}

\section{Supporting results}\label{app:suppresults}

\begin{proposition}[Kernel untwisting]\label{prop:untwist}
If $K^H:(\mathsf{X},\mathcal{Y})\rightarrow [0,1]$ is a twisted Markov kernel and $\varphi:\mathsf{Y} \rightarrow [-\infty,\infty]$ is measurable then
\begin{equation*}
    K(H) \cdot K^H(\varphi) = K(H\cdot \varphi),
\end{equation*}
$\lambda$-a.e.\ for any measure $\lambda$ on $(\mathsf{X},\mathcal{X})$.
\end{proposition}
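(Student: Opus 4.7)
The plan is a direct case split using the two-branch definition of $K^H$ together with the non-negativity of $H$. For each fixed $x \in \mathsf{X}$, I would show the claimed identity pointwise, and then observe that it therefore holds $\lambda$-a.e. for any measure $\lambda$ on $(\mathsf{X},\mathcal{X})$ (in fact everywhere, so the $\lambda$-a.e.\ qualifier is essentially cosmetic, accommodating the freedom in how one defines $K^H$ on the null set where $K(H)=0$).

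First, in the case $K(H)(x) > 0$, I would simply substitute the upper branch of the definition of $K^H$. Computing
\begin{equation*}
K^H(\varphi)(x) = \int \frac{K(x,\rmd y)H(y)}{K(H)(x)}\varphi(y) = \frac{K(H\cdot\varphi)(x)}{K(H)(x)},
\end{equation*}
and multiplying through by $K(H)(x) > 0$ gives the identity immediately. The second case, $K(H)(x) = 0$, is the one to be careful about. Here the left-hand side equals $0 \cdot K^H(\varphi)(x)$, which is $0$ using the convention $0 \cdot \infty = 0$ that is standard when $\varphi$ is allowed to take infinite values. For the right-hand side, since $H$ is non-negative (as required to define the twisted kernel), $K(H)(x)=0$ forces $H(y) = 0$ for $K(x,\cdot)$-a.e.\ $y$. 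Therefore $(H\cdot\varphi)(y) = 0$ for $K(x,\cdot)$-a.e.\ $y$, again invoking $0\cdot \infty = 0$ where needed, and so $K(H\cdot\varphi)(x) = 0$ as well.

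The only subtlety, and arguably the one point requiring care, is handling possibly infinite values of $\varphi$ on the $K(x,\cdot)$-null set where $H$ vanishes; this is resolved by the standard measure-theoretic convention. No assumption of integrability of $\varphi$ or $H\cdot \varphi$ with respect to $K(x,\cdot)$ is needed, because both sides reduce to $0$ in the degenerate branch and, in the non-degenerate branch, the identity is a direct algebraic rearrangement of the definition.
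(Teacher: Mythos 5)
Your proof is correct and follows essentially the same route as the paper's: the same case split on $K(H)(x)>0$ versus $K(H)(x)=0$, the same algebraic rearrangement in the first case, and the same use of non-negativity of $H$ plus the convention $0\cdot\infty=0$ to show both sides vanish in the second. The only cosmetic difference is that the paper routes the left-hand side through $K^H=K$ on the degenerate branch before invoking the convention, whereas you apply it directly to $0\cdot K^H(\varphi)(x)$; this changes nothing of substance.
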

\begin{proof} 
Let $\mathsf{X}_+ = \{x\in\mathsf{X}:K(H)(x)>0\}$ and $\mathsf{X}_0 = \{x\in\mathsf{X}:K(H)(x)=0\}$ noting $\mathsf{X} = \mathsf{X}_+ \cup \mathsf{X}_0$ and $\mathsf{X}_+ \cap \mathsf{X}_0 = \emptyset$. 

First case. If $x \in \mathsf{X}_+$ then $K(H)(x) K^H(\varphi)(x) = K(H\cdot \varphi)(x)$. 

Second case. If $x \in \mathsf{X}_0$ we make three observations. Firstly, $K(H)(x) K^H(\varphi)(x) = K(H)(x) K(\varphi)(x)$ by definition. Secondly, $H = 0$ almost surely w.r.t.\ $K(x,\cdot)$ since~$H$ is non-negative. Hence, $K(H\cdot\varphi)(x)=0$ using the standard measure-theoretic convention $0\times\infty=0$ \cite[see for example,][p. 199]{billingsley1995probability}. Lastly, $K(H)(x)K(\varphi)(x)=0$ for $K(\varphi)(x)\in [-\infty,\infty]$ by the same convention, since $K(H)(x)=0$. Therefore, $K(H)(x) K^H(\varphi)(x) = K(H\cdot\varphi)(x) = 0$ for almost every $x \in \mathsf{X}_0$.
\end{proof}

\begin{proposition}[Form of $Q_{p,n}$ with knots]\label{prop:Qpn}
    Suppose $\fk^\ast = \kn \ast \fk$ with $Q_{p,n}^\ast$ terms for~$p \in \bseq{0}{n}$. For any measure $\lambda_p$ on $(\mathsf{Y}_p,\mathcal{Y}_p)$, if $\kn = (R_{0:n-1},K_{0:n-1})$ and $\varphi: \mathsf{X}_n \rightarrow [-\infty, \infty]$ then 
    $Q_{p,n}^\ast(\varphi) = K_{p} Q_{p,n}(\varphi)$ $\lambda_p$-a.e.\ for $p \in \bseq{0}{n-1}$
and $Q_{n,n}^\ast(\varphi) = Q_{n,n}(\varphi)$.
\end{proposition}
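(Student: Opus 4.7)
The plan is to proceed by downward induction on $p$, at each step using Proposition~\ref{prop:knotsetop} to unpack the knot-model components in terms of the original ones, and Proposition~\ref{prop:untwist} to absorb the normalising factor $K_{p}(G_{p})$ back into the outer $K_{p}$. Knot compatibility $R_{p+1}K_{p+1}=M_{p+1}$ will then let the intermediate kernels collapse into the original $Q_{p+1}$, closing the induction.

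The case $p=n$ is immediate because $Q_{n,n}^\ast=\mathrm{Id}=Q_{n,n}$. For the base step $p=n-1$, write $Q_{n-1,n}^\ast = Q_n^\ast = G_{n-1}^\ast\cdot M_n^\ast = K_{n-1}(G_{n-1})\cdot K_{n-1}^{G_{n-1}}M_n$ using Proposition~\ref{prop:knotsetop}, and then apply Proposition~\ref{prop:untwist} with $H=G_{n-1}$ to $M_n(\varphi)$ to obtain $Q_n^\ast(\varphi) = K_{n-1}(G_{n-1}\cdot M_n(\varphi)) = K_{n-1}Q_n(\varphi)$ a.e., as required.

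For the inductive step, assume $Q_{p+1,n}^\ast(\varphi)=K_{p+1}Q_{p+1,n}(\varphi)$ a.e.\ for some $p\in\bseq{0}{n-2}$ and decompose $Q_{p,n}^\ast = Q_{p+1}^\ast Q_{p+1,n}^\ast$. Proposition~\ref{prop:knotsetop} gives $Q_{p+1}^\ast = K_p(G_p)\cdot K_p^{G_p}R_{p+1}$, and Proposition~\ref{prop:untwist} (applied with $H=G_p$ to the function $R_{p+1}Q_{p+1,n}^\ast(\varphi)$) yields
\begin{equation*}
Q_{p,n}^\ast(\varphi) = K_p\bigl[\,G_p\cdot R_{p+1}Q_{p+1,n}^\ast(\varphi)\,\bigr]
\end{equation*}
a.e. The inductive hypothesis together with knot compatibility then gives $R_{p+1}Q_{p+1,n}^\ast(\varphi) = R_{p+1}K_{p+1}Q_{p+1,n}(\varphi) = M_{p+1}Q_{p+1,n}(\varphi)$, so that $Q_{p,n}^\ast(\varphi) = K_p[G_p\cdot M_{p+1}Q_{p+1,n}(\varphi)] = K_p[Q_{p+1}Q_{p+1,n}(\varphi)] = K_pQ_{p,n}(\varphi)$, as desired.

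The main thing to watch is the propagation of the $\lambda_p$-a.e.\ qualifier through the composition with $R_{p+1}$ and the outer $K_p$, but this is painless: Proposition~\ref{prop:untwist} provides equality $\lambda$-a.e.\ for \emph{any} measure $\lambda$, so the exceptional sets are universally negligible and survive every composition without special argument. The boundary case $p=0$ needs no separate treatment, since the induction operates through $Q_1^\ast$ rather than $Q_0^\ast$, and so the fact that $R_0$ is a probability measure rather than a Markov kernel never enters the calculation.
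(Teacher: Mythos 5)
Your proof is correct and follows essentially the same route as the paper's: downward induction from $p=n-1$, unpacking $Q_{p+1}^\ast = K_p(G_p)\cdot K_p^{G_p}R_{p+1}$ via Proposition~\ref{prop:knotsetop}, absorbing the normalisation with Proposition~\ref{prop:untwist}, and collapsing $R_{p+1}K_{p+1}=M_{p+1}$ by compatibility. The paper writes the inductive step as an explicit integral rather than in operator notation, but the computation and the handling of the a.e.\ qualifiers are identical.
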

\begin{proof}
    Let $R_p$ a probability measure on $(\mathsf{Y}_p,\mathcal{Y}_p)$. Starting with $Q_p^\ast$ terms under $\fk^\ast$, for $p \in \bseq{1}{n-1}$ and $\varphi_p : \mathsf{Y}_p \rightarrow [-\infty, \infty]$ and we have 
\begin{equation}\label{eq:Qpphip}
\begin{aligned}
    Q_p^\ast(\varphi_p) &= K_{p-1}(G_{p-1}) \cdot K_{p-1}^{G_{p-1}} R_p(\varphi_p) \\ &= K_{p-1}\{G_{p-1} \cdot R_p(\varphi_p)\},
\end{aligned}
\end{equation}
$\lambda_{p-1}$-a.e.\ by Proposition~\ref{prop:untwist}. As for $Q_n^\ast$ we have
\begin{align*}
    Q_n^\ast(\varphi) &= K_{n-1}(G_{n-1}) \cdot K_{n-1}^{G_{n-1}} M_n(\varphi) \\ &= K_{n-1}\{G_{n-1} \cdot M_n(\varphi) \} \\ &= K_{n-1}Q_n(\varphi),
\end{align*}
$\lambda_{n-1}$-a.e.\ by Proposition~\ref{prop:untwist}. Similarly, it follows that
\begin{equation*}
    Q_{n-1,n}^\ast(\varphi) = Q_{n}^\ast(\varphi) 
    = K_{n-1}Q_n(\varphi) 
    = K_{n-1}Q_{n-1,n}(\varphi).
\end{equation*}
Now, assume $Q_{p+1,n}^\ast(\varphi) = K_{p+1} Q_{p+1,n}(\varphi)$ for a given $p \in \bseq{0}{n-2}$.
Then by \eqref{eq:Qpphip}
\begin{align*}
     Q_{p,n}^\ast(\varphi) &= Q_{p+1}^\ast Q_{p+1,n}^\ast(\varphi) \\
    &= \int K_{p}(\cdot, \rmd x_{p}) G_{p}(x_{p}) R_{p+1}(x_{p}, \rmd y_{p+1}) K_{p+1}(y_{p+1}, \rmd x_{p+1}) Q_{p+1,n}(\varphi)(x_{p+1}) \\
    &= \int K_{p}(\cdot, \rmd x_{p}) G_{p}(x_{p}) M_{p+1}(x_{p}, \rmd x_{p+1}) Q_{p+1,n}(\varphi)(x_{p+1}) \\
    &= K_p Q_{p+1} Q_{p+1,n}(\varphi) \\
    &= K_p Q_{p,n}(\varphi),
\end{align*}
$\lambda_{p}$-a.e.
Therefore by induction we have
$Q_{p,n}^\ast(\varphi) = K_{p} Q_{p,n}(\varphi)$ $\lambda_{p}$-a.e. for $p \in \bseq{0}{n-1}$,
and $Q_{n,n}^\ast(\varphi) = Q_{n,n}(\varphi)$ since $Q_{n,n}^\ast = Q_{n,n} = \mathrm{Id}$ by definition.
\end{proof}

\begin{proposition}[Twisting kernels equivalence]\label{prop:twisteq}
    Let $R:(\mathsf{X},\mathcal{Y})\rightarrow [0,1]$ and $K:(\mathsf{Y},\mathcal{Z})\rightarrow [0,1]$ be two Markov kernels for measurable spaces  $(\mathsf{Y},\mathcal{Y})$ and $(\mathsf{Z},\mathcal{Z})$, and let $H$ be a non-negative real-valued function. If $R^{K(H)}$ and $K^H$ are twisted Markov kernels then $R^{K(H)} K^H = (RK)^H$ $\lambda$-a.e.\ for any measure $\lambda$ on $(\mathsf{X},\mathcal{X})$.
\end{proposition}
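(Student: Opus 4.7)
The plan is to verify the identity pointwise in $x \in \mathsf{X}$ by splitting on the sign of $RK(H)(x) = R(K(H))(x)$. Since both $(RK)^H$ and $R^{K(H)}$ use this same quantity in their definitional case split, the two sides of the equation move together between the two regimes, which should make the case analysis clean.

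In the case $RK(H)(x) > 0$, I would expand the composition on the left: for a measurable set $A \in \mathcal{Z}$,
\begin{equation*}
R^{K(H)} K^H(x, A) = \frac{1}{R(K(H))(x)} \int R(x, \rmd y)\, K(H)(y)\, K^H(y, A).
\end{equation*}
The key move is to apply Proposition~\ref{prop:untwist} with the test function $\varphi = 1_A$, giving $K(H)(y) K^H(y, A) = K(H \cdot 1_A)(y)$ for $R(x, \cdot)$-almost every $y$. Substituting yields $RK(H \cdot 1_A)(x) / RK(H)(x)$, which is exactly $(RK)^H(x, A)$ by definition in this regime.

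In the case $RK(H)(x) = 0$, both sides should collapse to $RK(x, \cdot)$. On the right, $(RK)^H(x, \cdot) = RK(x, \cdot)$ directly by the definition of the twist. On the left, $R^{K(H)}(x, \cdot) = R(x, \cdot)$ since $R(K(H))(x) = 0$, so it remains to show $\int R(x, \rmd y) K^H(y, A) = RK(x, A)$. Here I would use non-negativity of $H$: because $K(H) \geq 0$ and its integral against $R(x, \cdot)$ is zero, $K(H)(y) = 0$ for $R(x, \cdot)$-almost every $y$, and therefore $K^H(y, \cdot) = K(y, \cdot)$ for those $y$ by the convention in the definition of twisted kernels, giving the desired equality after integration.

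The main obstacle is purely bookkeeping: making sure the measure-zero exceptional sets in the two applications of the twist convention (once in Proposition~\ref{prop:untwist} for Case~1, once for $R(x, \cdot)$-a.e.\ vanishing of $K(H)$ in Case~2) are handled so that the final pointwise identity, and hence the $\lambda$-a.e.\ statement for any $\lambda$, holds. No additional integrability hypothesis is needed beyond what is built into the twist definitions, and the conclusion actually holds for every $x$, which is stronger than the stated $\lambda$-a.e.\ claim.
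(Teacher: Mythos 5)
Your proposal is correct and follows essentially the same route as the paper's proof: the same case split on $RK(H)(x)>0$ versus $RK(H)(x)=0$, the same application of Proposition~\ref{prop:untwist} to collapse $K(H)(y)\,K^H(y,\cdot)$ into $K(H\cdot{}\,\cdot)(y)$ in the positive case, and the same use of non-negativity of $K(H)$ to reduce both sides to $RK(x,\cdot)$ in the degenerate case. The only differences are cosmetic (testing against indicator sets rather than kernel differentials, and your explicit remark that the identity is in fact pointwise), so nothing further is needed.
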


\begin{proof}
    Note that since $K(H)$ is integrable w.r.t.\ $R(x,\cdot)$ by definition and $R\{K(H)\} = RK(H)$ then we have $H$ integrable w.r.t.\ $RK(x,\cdot)$ for $x \in \mathsf{X}$. 
    
    By definition of the twisted kernel we have
    \begin{equation*}
        R^{K(H)} K^H(x, \rmd z) =
        \begin{cases}
            \int_{\mathsf{Y}} \frac{R(x,\rmd y)K(H)(y)}{RK(H)(x)}K^H(y,\rmd z) &\text{if}~ RK(H)(x) > 0, \\
             RK^H(x,\rmd z), &\text{if}~ RK(H)(x) = 0.
        \end{cases}
    \end{equation*}
    In the first case,
    \begin{align*}
        \int_{\mathsf{Y}} \frac{R(x,\rmd y)K(H)(y)}{RK(H)(x)}K^H(y,\rmd z) &= \int_{\mathsf{Y}} \frac{R(x,\rmd y)K(y,\rmd z)H(z)}{RK(H)(x)} \\
        &= \frac{RK(x,\rmd z)H(z)}{RK(H)(x)}
    \end{align*}
    by Proposition~\ref{prop:untwist}. As for the second case, $RK(H)(x) = 0$ implies $K(H)=0$ almost surely w.r.t.\ $R$, hence $RK^H = RK$, by definition of $K^H$. Thus, $R^{K(H)}K^H = (RK)^H$ $\lambda$-a.e.   
\end{proof}

\begin{proposition}[Simplification of two $t$-knots]\label{prop:knotsimp}
    Let $\kn_1 = (t,R_1,K_1)$ and $\kn_2= (t,R_2,K_2)$ be knots for $t \in \bseq{0}{n-1}$. If $\kn_1$ is compatible with $\fk \in \fkclass_n$ and $\kn_2$ is compatible with $\kn_1 \ast \fk$ then $\kn_2\ast \kn_1 \ast \fk = \kn_3 \ast \fk$ where $\kn_3 =(t,R_2,K_2K_1)$.
\end{proposition}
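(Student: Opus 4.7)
The plan is to verify the equality $\kn_2\ast \kn_1 \ast \fk = \kn_3 \ast \fk$ by applying Definition~\ref{def:knotop} on each side and checking that the resulting Feynman--Kac models agree component-by-component. Since both knots act at time $t$, only $M_t$, $G_t$, and $M_{t+1}$ can change, so the verification reduces to three identities.

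First, I would record compatibility: by hypothesis $M_t = R_1 K_1$, and compatibility of $\kn_2$ with $\kn_1\ast\fk$ means $R_1 = R_2 K_2$. Consequently $M_t = R_2(K_2 K_1)$, so $\kn_3=(t,R_2,K_2K_1)$ is indeed compatible with $\fk$. Next, applying Definition~\ref{def:knotop} twice gives the left-hand model with
\begin{equation*}
    M_t'' = R_2, \quad G_t'' = K_2(K_1(G_t)) = K_2 K_1(G_t), \quad M_{t+1}'' = K_2^{K_1(G_t)}\bigl(K_1^{G_t} M_{t+1}\bigr),
\end{equation*}
while the right-hand model $\kn_3 \ast \fk$ has
\begin{equation*}
    M_t^\ast = R_2, \quad G_t^\ast = (K_2 K_1)(G_t), \quad M_{t+1}^\ast = (K_2 K_1)^{G_t} M_{t+1}.
\end{equation*}

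The $M_t$ and $G_t$ components agree immediately by associativity of kernel composition. The remaining identity $K_2^{K_1(G_t)}K_1^{G_t} = (K_2 K_1)^{G_t}$ (almost everywhere) is precisely Proposition~\ref{prop:twisteq} with $R = K_2$, $K = K_1$, and $H = G_t$. I would then compose both sides with $M_{t+1}$ to conclude $M_{t+1}'' = M_{t+1}^\ast$ almost everywhere with respect to any reference measure, which is sufficient for Feynman--Kac model equality in the sense used throughout the paper. All other kernels and potentials (at times $p \notin \{t,t+1\}$) are untouched by both sides, so they coincide trivially.

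The only subtlety is handling the twisting: the composition $K_2^{K_1(G_t)}K_1^{G_t}$ looks superficially different from $(K_2K_1)^{G_t}$, and the zero-sets $S_0(K_1(G_t))$ and $S_0(K_2K_1(G_t))$ must be handled carefully. This is exactly the content of Proposition~\ref{prop:twisteq}, so no new work is needed: the ``hard part'' has already been isolated in that supporting lemma. With that in hand the proposition follows directly from unpacking definitions.
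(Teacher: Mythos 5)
Your proof is correct and follows essentially the same route as the paper's: unpack Definition~\ref{def:knotop} twice to get $M_{t+1}''=K_2^{K_1(G_t)}K_1^{G_t}M_{t+1}$ and $G_t''=K_2K_1(G_t)$, then invoke Proposition~\ref{prop:twisteq} to collapse the composed twists into $(K_2K_1)^{G_t}$. Your additional check that $\kn_3$ is compatible with $\fk$ (via $M_t=R_1K_1=R_2K_2K_1$) is a small but worthwhile addition that the paper leaves implicit.
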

\begin{proof}
    Let $\fk = (M_{0:n},G_{0:n})$. By repeated use of Definition~\ref{def:knotop}, the model $\kn_2\ast \kn_1 \ast \fk = (M_{0:n}^{(2)},G_{0:n}^{(2)})$ has
    \begin{align*}
        M_t^{(2)}=R_2,\quad M_{t+1}^{(2)} = K_2^{K_1(G_t)} K_1^{G_t} M_{t+1}, \quad G_t^{(2)}=K_2K_1(G_t),
    \end{align*}
    and the remaining kernels and potentials are the same as those in $\fk$. By Proposition~\ref{prop:twisteq} we can state $M_{t+1}^{(2)} = (K_2 K_1)^{G_t}M_{t+1}$.
     Therefore the kernels and potential of $\kn_2\ast \kn_1 \ast \fk$ are equal to that of $\kn \ast \fk$ where $\kn =(t,R_2,K_2 K_1)$.
\end{proof}

\begin{proposition}[Completion of a knotset]\label{prop:knotsetcomplete}
    Suppose $\kn$ is a knotset compatible with $\fk$ and $\kn^\diamond$ is the adapted knotset for $\fk$. Then there exists a knotset $\mathcal{R}$ such that $\mathcal{R} \ast \kn \ast \fk = \kn^\diamond \ast \fk$.
\end{proposition}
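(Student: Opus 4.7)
The plan is to construct the knotset $\mathcal{R}$ explicitly and then verify, using Proposition~\ref{prop:knotsetop} and Proposition~\ref{prop:twisteq}, that $\mathcal{R}$ is compatible with $\kn\ast\fk$ and that $\mathcal{R}\ast(\kn\ast\fk)=\kn^\diamond\ast\fk$. Write $\kn=(R_{0:n-1},K_{0:n-1})$ and $\fk=(M_{0:n},G_{0:n})$, so that $R_p K_p = M_p$ by compatibility. The guiding intuition is that $\mathcal{R}$ should ``undo'' the $R_p$ factor at each time so that the composition of $\kn$ and $\mathcal{R}$ yields the adapted knotset; this motivates the choice $\tilde R_0 = \delta_0$ with $\tilde K_0$ defined by $\tilde K_0(0,\cdot)=R_0$, and, for $p\in\bseq{1}{n-1}$, $\tilde R_p = K_{p-1}^{G_{p-1}}$ and $\tilde K_p = R_p$.

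From Proposition~\ref{prop:knotsetop}, the kernels of $\kn\ast\fk$ are $M_0^\ast=R_0$ and $M_p^\ast = K_{p-1}^{G_{p-1}} R_p$ for $p\in\bseq{1}{n-1}$, with potentials $G_p^\ast=K_p(G_p)$. Compatibility of $\mathcal{R}$ with $\kn\ast\fk$ is then immediate, since $\tilde R_0\tilde K_0 = R_0 = M_0^\ast$ and $\tilde R_p \tilde K_p = K_{p-1}^{G_{p-1}} R_p = M_p^\ast$ for $p\in\bseq{1}{n-1}$. A second application of Proposition~\ref{prop:knotsetop} to $\mathcal{R}\ast(\kn\ast\fk)$ yields kernels of the form $\tilde K_{p-1}^{G_{p-1}^\ast}\tilde R_p = \tilde K_{p-1}^{K_{p-1}(G_{p-1})} K_{p-1}^{G_{p-1}}$, which by Proposition~\ref{prop:twisteq} collapses to $(\tilde K_{p-1} K_{p-1})^{G_{p-1}}$. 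For $p\in\bseq{2}{n-1}$ this equals $(R_{p-1}K_{p-1})^{G_{p-1}}=M_{p-1}^{G_{p-1}}$; for $p=1$ the same identity and $\tilde K_0 K_0 = M_0$ gives $M_1^\star(0,\cdot)=M_0^{G_0}$; and the analogous computation at the terminal time yields $M_n^\star = M_{n-1}^{G_{n-1}} M_n$. The potentials simplify as $G_p^\star = \tilde K_p(G_p^\ast) = R_p K_p(G_p) = M_p(G_p)$ for $p\in\bseq{0}{n-1}$ and $G_n^\star = G_n$. Comparing with Example~\ref{ex:adaptedknotset}, every component matches $\kn^\diamond\ast\fk$.

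The proof is therefore a guess-and-verify argument in which the only substantive step is the repeated application of the twisting identity $R^{K(H)}K^H = (RK)^H$ from Proposition~\ref{prop:twisteq}; everything else is bookkeeping. The main thing to be careful about is the degenerate time-$0$ case, where $R_0$ is a probability measure rather than a Markov kernel and $\tilde K_0$ must be read as a kernel out of the singleton $\{0\}$ with $\tilde K_0(0,\cdot)=R_0$, so that the twisting identity applies at the atom $0$. I do not anticipate any further obstacle beyond keeping track of state spaces at the initial and terminal times.
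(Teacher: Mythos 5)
Your proposal is correct and follows essentially the same route as the paper's own proof: the same choice of $\mathcal{R}$ (with $\tilde R_0=\delta_0$, $\tilde K_0(0,\cdot)=R_0$, $\tilde R_p=K_{p-1}^{G_{p-1}}$, $\tilde K_p=R_p$), the same two applications of Proposition~\ref{prop:knotsetop}, and the same collapse via the twisting identity $R^{K(H)}K^H=(RK)^H$ of Proposition~\ref{prop:twisteq}. The explicit compatibility check and the remark about the degenerate time-$0$ case are welcome additions that the paper leaves implicit.
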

\begin{proof}
    Consider $\kn = (R_{0:n-1},K_{0:n-1})$ and $\fk = (M_{0:n},G_{0:n})$ and recall $\fk^\ast = \kn \ast \fk$ has Markov kernels $M_0^\ast = R_0$, $M_{p}^\ast = K_{p-1}^{G_{p-1}} R_p$ for $p \in \bseq{1}{n-1}$, and $M_{n}^\ast = K_{n-1}^{G_{n-1}}M_n$, whilst the potentials are $G_p^\ast = K_p(G_p)$ for $p\in \bseq{0}{n-1}$ and $G_n^\ast = G_n$. Let $\mathcal{R} = (U_{0:n-1},V_{0:n-1})$ where $U_p = K_{p-1}^{G_{p-1}}$ and $V_p = R_p$ for $p \in\bseq{1}{n-1}$, whilst $U_0 = \delta_0$ and some $V_0$ satisfying $V_0(0,\cdot) = R_0$. The model $\fk^\prime = \mathcal{R} \ast \fk^\ast = (M_{0:n}^\prime,G_{0:n}^\prime)$ satisfies
    \begin{align*}
        M_0^\prime &= \delta_0,\quad M_{1}^\prime(0,\cdot) = V_{0}^{K_{0}(G_{0})}K_{0}^{G_{0}}, \quad M_{n}^\prime = R_{n-1}^{K_{n-1}(G_{n-1})}K_{n-1}^{G_{n-1}}M_n, \\
        M_{p}^\prime &= R_{p-1}^{K_{p-1}(G_{p-1})}K_{p-1}^{G_{p-1}}~\text{for}~ p \in \bseq{2}{n-1}.
    \end{align*}
    By Proposition~\ref{prop:twisteq} we can state $V_{0}^{K_{0}(G_{0})}K_{0}^{G_{0}} = (V_0K_0)^{G_0}$ and $R_{p}^{K_{p}(G_{p})}K_{p}^{G_{p}} = (R_p K_p)^{G_p}$ for $p \in \bseq{1}{n-1}$. Hence, by compatibility
    \begin{align*}
        M_0^\prime &= \delta_0,\quad M_{1}^\prime(0,\cdot) = M_{0}^{G_{0}}, \quad M_{n}^\prime = M_{n-1}^{G_{n-1}}M_n, \\
        M_{p}^\prime &= M_{p-1}^{G_{p-1}}~\text{for}~ p \in \bseq{2}{n-1}.
    \end{align*}
Whilst the potential functions satisfy  $G_p^\prime = R_p K_p(G_p) = M_p(G_p)$ for $p \in \bseq{0}{n-1}$ and $G_n^\prime = G_n$. Hence, $\fk^\prime = \kn^\diamond \ast \fk$ which completes the proof.
\end{proof}

\begin{proposition}[Adapted knotset equivalence]\label{prop:adapteq}
    Let $\kn$ be a knotset and suppose $\kn^{\diamond}$ is the adapted knotset for $\fk$ and $\kn^{\diamond}_\ast$ is the adapted knotset for $\fk^\ast = \kn \ast \fk$. Then there exists a knotset $\kn^\prime$ such that $\kn^\diamond_\ast \ast \kn \ast \fk = \kn^\prime \ast \kn^\diamond \ast \fk$.
\end{proposition}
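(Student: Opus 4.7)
The plan is to construct the required $\kn^\prime$ explicitly in terms of $\kn = (R_{0:n-1}, K_{0:n-1})$ and verify the claimed identity by matching kernels and potentials term-by-term. I would first use Proposition~\ref{prop:knotsetop} and Example~\ref{ex:adaptedknotset} to write out $\fk^\ast = \kn \ast \fk$, $\fk^\diamond = \kn^\diamond \ast \fk$, and $\fk^{\diamond\diamond} = \kn^\diamond_\ast \ast \fk^\ast$. The key observation is that compatibility $M_p = R_p K_p$ combined with Proposition~\ref{prop:twisteq} gives the factorisation
\begin{equation*}
    M_p^{G_p} = (R_p K_p)^{G_p} = R_p^{K_p(G_p)} K_p^{G_p},
\end{equation*}
and, noting $R_p(K_p(G_p)) = M_p(G_p)$ by composition, an analogous splitting for the doubly-twisted kernels appearing in $\fk^{\diamond\diamond}$. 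This suggests the candidate
\begin{equation*}
    R'_0 = \delta_0, \quad K'_0(0,\cdot) = \delta_0, \quad R'_p = R_{p-1}^{K_{p-1}(G_{p-1})}, \quad K'_p = K_{p-1}^{G_{p-1}} \quad \text{for } p \in \bseq{1}{n-1}.
\end{equation*}

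Next I would verify compatibility of $\kn^\prime$ with $\fk^\diamond$: the identity $R'_p K'_p = M_p^\diamond$ at $p \geq 1$ is immediate from the factorisation above applied to $M_{p-1}^{G_{p-1}}$, and trivial at $p = 0$ since $M_0^\diamond = \delta_0$. I would then apply Proposition~\ref{prop:knotsetop} to compute $\kn^\prime \ast \fk^\diamond$ and compare against $\fk^{\diamond\diamond}$ term-by-term. For internal kernels at $p \in \bseq{2}{n-1}$, matching reduces to
\begin{equation*}
    (K_{p-2}^{G_{p-2}})^{M_{p-1}(G_{p-1})} R_{p-1}^{K_{p-1}(G_{p-1})} = (K_{p-2}^{G_{p-2}} R_{p-1})^{K_{p-1}(G_{p-1})},
\end{equation*}
which is another application of Proposition~\ref{prop:twisteq} combined with $R_{p-1}(K_{p-1}(G_{p-1})) = M_{p-1}(G_{p-1})$; matching the potentials follows from $R_p K_p(G_p) = M_p(G_p)$. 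The boundary times $p = 0, 1$ are handled by the fact that $G_0^\diamond = M_0(G_0)$ is a positive constant, so twisting $K'_0$ by $G_0^\diamond$ is a no-op.

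The main obstacle will be the terminal kernel at $p = n$, where $\kn^\prime \ast \fk^\diamond$ produces $M_n^{\prime\prime} = (K_{n-2}^{G_{n-2}})^{M_{n-1}(G_{n-1})} M_{n-1}^{G_{n-1}} M_n$ while $M_n^{\diamond\diamond} = (K_{n-2}^{G_{n-2}} R_{n-1})^{K_{n-1}(G_{n-1})} K_{n-1}^{G_{n-1}} M_n$. Reconciling these requires applying Proposition~\ref{prop:twisteq} twice: once to factor $M_{n-1}^{G_{n-1}} = R_{n-1}^{K_{n-1}(G_{n-1})} K_{n-1}^{G_{n-1}}$ on the $\kn^\prime$ side, and then to collapse each three-factor composition into the common form $(K_{n-2}^{G_{n-2}} M_{n-1})^{G_{n-1}} M_n$. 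Since $G_n^{\prime\prime} = G_n^\diamond = G_n = G_n^{\diamond\diamond}$ holds trivially, this completes the verification and yields $\kn^\diamond_\ast \ast \kn \ast \fk = \kn^\prime \ast \kn^\diamond \ast \fk$.
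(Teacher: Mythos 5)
Your proposal is correct and takes essentially the same route as the paper: the same candidate $\kn^\prime$ with $R^\prime_p = R_{p-1}^{K_{p-1}(G_{p-1})}$ and $K^\prime_p = K_{p-1}^{G_{p-1}}$, the same compatibility check via $M_p^{G_p} = R_p^{K_p(G_p)}K_p^{G_p}$, and the same term-by-term matching through repeated use of Proposition~\ref{prop:twisteq}. Your handling of the terminal kernel is in fact slightly more careful than the paper's displayed formulas, which omit the $K_{n-1}^{G_{n-1}}$ factor in $M_{1,n}$ that your reconciliation correctly retains.
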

\begin{proof}
    Consider $\kn = (R_{0:n-1},K_{0:n-1})$, $\fk = (M_{0:n},G_{0:n})$, and let $\fk_1 = \kn^\diamond_\ast \ast \fk^\ast = \kn^\diamond_\ast \ast \kn \ast \fk $. The model $\fk^\ast = (M_{0:n}^\ast,G_{0:n}^\ast)$ follows Proposition~\ref{prop:knotsetop} and since $\kn^\diamond_\ast$ is the adapted knotset for $\fk^\ast$, we can state $\fk_1 = (M_{1,0:n},G_{1,0:n})$ where the kernels are $M_{1,0} = \delta_0$, $ M_{1,p} = (M_{p-1}^\ast)^{G_{p-1}^\ast}$ for $p \in \bseq{1}{n-1}$, and $ M_{1,n} = (M_{n-1}^\ast)^{G_{n-1}^\ast} M_{n}$, whilst the potentials are $G_{1,p} = M_p^\ast(G_p^\ast)$ for $p \in \bseq{0}{n-1}$ and $G_{1,n} =G_n$. Noting that $R_p K_p = M_p$, $(M_0^\ast)^{G_0^\ast} = R_0^{K_0(G_0)}$,  and $(M_p^\ast)^{G_p^\ast} = (K_{p-1}^{G_{p-1}}R_p)^{K_p(G_p)} = (K_{p-1}^{G_{p-1}})^{M_p(G_p)}R_p^{K_p(G_p)}$ for $p \in \bseq{1}{n-1}$ by Proposition~\ref{prop:twisteq} we can state
\begin{alignat*}{3}
        M_{1,0} &= \delta_0, &\quad G_{1,0} &= M_0(G_0), & & \\
        M_{1,1}(0,\cdot) &= R_{0}^{K_{0}(G_{0})} &\quad G_{1,1} &= K_{0}^{G_{0}}M_1(G_1), & & \\
        M_{1,p} &= (K_{p-2}^{G_{p-2}})^{M_{p-1}(G_{p-1})}R_{p-1}^{K_{p-1}(G_{p-1})} &\quad G_{1,p} &= K_{p-1}^{G_{p-1}}M_p(G_p), & & \quad p \in \bseq{2}{n-1} \\
        M_{1,n} &= (K_{n-2}^{G_{n-2}})^{M_{n-1}(G_{n-1})}R_{n-1}^{K_{n-1}(G_{n-1})} M_{n}, &\quad G_{1,n} &= G_n. & 
    \end{alignat*}
Next, consider $\fk^\diamond = \kn^\diamond \ast \fk = (M^\diamond_{0:n},G^\diamond_{0:n})$, as in Example~\ref{ex:adaptedknotset}, and note that it can be rewritten as 
\begin{alignat*}{3}
        M_{0}^\diamond &= \delta_0, &\quad G_{0}^\diamond &= M_0(G_0), & & \\
         M_{1}^\diamond(0,\cdot) &= R_{0}^{K_{0}(G_{0})}K_{0}^{G_{0}}, &\quad G_{p}^\diamond &= M_p(G_p), & & \\
        M_{p}^\diamond &= R_{p-1}^{K_{p-1}(G_{p-1})}K_{p-1}^{G_{p-1}}, &\quad G_{p}^\diamond &= M_p(G_p), & &\quad p \in \bseq{2}{n-1} \\
        M_{n}^\diamond &= R_{n-1}^{K_{n-1}(G_{n-1})}K_{n-1}^{G_{n-1}} M_{n}, &\quad G_{n}^\diamond &= G_n &
    \end{alignat*}
as $M_p^{G_p} = R_p^{K_p(G_p)} K_P^{G_p}$ by Proposition~\ref{prop:twisteq} and since $R_p K_p=M_p$. Finally, let $\kn^\prime = (R_{0:n-1}^\prime,K_{0:n-1}^\prime)$ and $\fk_2 = \kn^\prime \ast \fk^
\diamond$ where $R_0^\prime = \delta_0$ and $K_0^\prime = \mathrm{Id}$, $R_p^\prime = R_{p-1}^{K_{p-1}(G_{p-1})}$ and $K_p^\prime = K_{p-1}^{G_{p-1}}$ for $p \in \bseq{1}{n-1}$. Therefore from Proposition~\ref{prop:knotsetop}, $\fk_2 = (M_{2,0:n},G_{2,0:n})$ satisfies 
$M_{2,p} = M_{1,p}$ and $G_{2,p} = G_{1,p}$ for $p \in \bseq{0}{n}$ and hence $\fk_2 = \fk_1$ completing the proof.
\end{proof}

\begin{proposition}[Repeated adapted knotset equivalence] \label{prop:repadapteq}
 Let $t \in \bseqo{2}{\infty}$ and suppose $\kn^{\diamond}_s$ is the adapted knotset for $\fk_s$ for $s \in \{1,t\}$ and $\kn_s$ for $s \in \bseq{1}{t-1}$ are knotsets such that $\fk_t = \kn_{t-1} \ast \cdots \ast \kn_1 \ast \fk_1$. Then there exists knotsets $\mathcal{J}_s$ for $s \in \bseq{2}{t}$ such that
\begin{equation*}
    \kn_t^\diamond \ast \fk_t = \kn_t^\diamond \ast\kn_{t-1} \ast \cdots \ast \kn_1 \ast \fk_1=\mathcal{J}_{t} \ast \cdots \ast \mathcal{J}_{2} \ast \kn^\diamond_1 \ast \fk_1.
\end{equation*}
\end{proposition}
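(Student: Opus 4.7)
The plan is to prove this by induction on $t$, using Proposition~\ref{prop:adapteq} (adapted knotset equivalence) as the key building block at each step. The role of Proposition~\ref{prop:adapteq} is precisely to interchange an outer adapted knotset with an inner (general) knotset at the cost of replacing the outer with some new knotset $\mathcal{J}$ and replacing the inner with the adapted knotset of the original innermost model. Here we just need to iterate this interchange so that a single adapted knotset at the outside bubbles its way all the way in, reappearing as $\kn^\diamond_1$ acting on $\fk_1$.

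For the base case $t=2$ I would apply Proposition~\ref{prop:adapteq} directly with $\fk = \fk_1$, $\kn = \kn_1$, $\fk^\ast = \fk_2 = \kn_1 \ast \fk_1$, $\kn^\diamond = \kn^\diamond_1$, and $\kn^\diamond_\ast = \kn^\diamond_2$. This yields a knotset $\mathcal{J}_2$ such that $\kn^\diamond_2 \ast \kn_1 \ast \fk_1 = \mathcal{J}_2 \ast \kn^\diamond_1 \ast \fk_1$, which is exactly the claim at $t=2$.

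For the inductive step, assume the result at $t-1$ and prove it at $t$. Introduce the adapted knotset $\kn^\diamond_{t-1}$ for $\fk_{t-1}$ (which is well-defined even though $t-1$ is not in the hypothesis set $\{1,t\}$). Applying Proposition~\ref{prop:adapteq} with $\fk = \fk_{t-1}$, $\kn = \kn_{t-1}$, $\fk^\ast = \fk_t$, $\kn^\diamond = \kn^\diamond_{t-1}$ and $\kn^\diamond_\ast = \kn^\diamond_t$ produces a knotset $\mathcal{J}_t$ satisfying
\begin{equation*}
\kn^\diamond_t \ast \fk_t = \kn^\diamond_t \ast \kn_{t-1} \ast \fk_{t-1} = \mathcal{J}_t \ast \kn^\diamond_{t-1} \ast \fk_{t-1}.
\end{equation*}
Since $\fk_{t-1} = \kn_{t-2} \ast \cdots \ast \kn_1 \ast \fk_1$ and $\kn^\diamond_{t-1}$ is the adapted knotset for $\fk_{t-1}$, the induction hypothesis applied to the sequence terminating at $\fk_{t-1}$ gives knotsets $\mathcal{J}_2, \ldots, \mathcal{J}_{t-1}$ (relabeling if necessary) such that
\begin{equation*}
\kn^\diamond_{t-1} \ast \fk_{t-1} = \mathcal{J}_{t-1} \ast \cdots \ast \mathcal{J}_2 \ast \kn^\diamond_1 \ast \fk_1.
\end{equation*}
Composing with $\mathcal{J}_t$ on the left then yields the desired identity.

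There is no real obstacle once one has Proposition~\ref{prop:adapteq}; the only point requiring care is that the intermediate adapted knotsets $\kn^\diamond_{t-1}$ (and those generated inside the induction) are used as bookkeeping devices to feed Proposition~\ref{prop:adapteq}, and do not need to satisfy any extra hypothesis beyond being adapted for their respective models. The structure of the argument is a straightforward telescoping of the single-step interchange provided by that proposition.
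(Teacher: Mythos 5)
Your proof is correct and takes essentially the same approach as the paper, which simply states that the result ``follows from repeated applications of Proposition~\ref{prop:adapteq}''; your induction on $t$ is exactly the formalisation of that repeated application, with the base case and the substitution of equal models handled correctly.
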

\begin{proof}
   The proof follows from repeated applications of Proposition~\ref{prop:adapteq}.
\end{proof}

\begin{proposition}[Terminal knot kernel equivalence]\label{prop:terminalkerneleq}
    Consider model $\fk = (M_{0:n},G_{0:n})$ where $M_n = P_1 \otimes P_2$ and $G_n = H \otimes \phi^{-1}$, and terminal knot $\kn = (n,R,K)$ and let $\kn \ast \fk = (M_{0:n}^\ast,G_{0:n}^\ast)$. The terminal kernels and potential functions satisfy
    \begin{equation*}
        M_n^\ast(G_n^\ast \cdot [1\otimes \varphi]) = M_n(G_n \cdot [1\otimes \varphi]).
    \end{equation*}
\end{proposition}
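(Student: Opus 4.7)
The plan is to expand both sides via the tensor product definition in the paper and then collapse the twisted kernel on the left-hand side using the untwisting identity of Proposition~\ref{prop:untwist}. First I substitute the forms $M_n^\ast = R\otimes K^H P_2$ and $G_n^\ast = K(H)\otimes \phi^{-1}$ from Definition~\ref{def:termknotop}, so that the left-hand side becomes $(R\otimes K^H P_2)(K(H)\otimes[\phi^{-1}\cdot\varphi])$. By the tensor product convention, this factorises as
\begin{equation*}
    M_n^\ast(G_n^\ast\cdot[1\otimes\varphi]) = R\bigl\{K(H)\cdot K^H P_2(\phi^{-1}\cdot\varphi)\bigr\}.
\end{equation*}

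Next I apply Proposition~\ref{prop:untwist} with weight $H$ and test function $P_2(\phi^{-1}\cdot\varphi)$ to get $K(H)\cdot K^H P_2(\phi^{-1}\cdot\varphi) = K\{H\cdot P_2(\phi^{-1}\cdot\varphi)\}$ (in the almost everywhere sense appropriate for integration against $R$). Substituting and then invoking the knot compatibility condition in Definition~\ref{def:terminalkncompat}(ii) in the guise $P_1 = RK$ (since here $P_1$ plays the role of $U$) yields
\begin{equation*}
    M_n^\ast(G_n^\ast\cdot[1\otimes\varphi]) = RK\bigl\{H\cdot P_2(\phi^{-1}\cdot\varphi)\bigr\} = P_1\bigl\{H\cdot P_2(\phi^{-1}\cdot\varphi)\bigr\}.
\end{equation*}

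Finally, the right-hand side expands by the same tensor-product convention as
\begin{equation*}
    M_n(G_n\cdot[1\otimes\varphi]) = (P_1\otimes P_2)(H\otimes[\phi^{-1}\cdot\varphi]) = P_1\bigl\{H\cdot P_2(\phi^{-1}\cdot\varphi)\bigr\},
\end{equation*}
matching the expression above. The argument is essentially bookkeeping: the twisted kernel $K^H$ gets untwisted against its normaliser $K(H)$, and then the decomposition $RK = P_1$ from knot compatibility identifies the result with the original tensor form. There is no genuine obstacle here; the only mild technical point is ensuring that Proposition~\ref{prop:untwist}'s almost-everywhere equality is strong enough for the stated function identity, which is the case because the exceptional set has $R(x,\cdot)$-measure zero wherever it matters and the identity is ultimately used inside an outer integral.
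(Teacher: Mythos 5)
Your proof is correct and follows essentially the same route as the paper's: expand both sides via the tensor product, untwist $K(H)\cdot K^H(\cdot)$ into $K(H\cdot(\cdot))$ using Proposition~\ref{prop:untwist}, and invoke the compatibility condition $P_1 = RK$ to match the two expressions. Your closing remark about the almost-everywhere caveat is a sensible addition that the paper leaves implicit.
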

\begin{proof}
We have,
    \begin{align*}
        M_n^\ast(G_n^\ast \cdot [1\otimes \varphi]) =& (R\otimes K^{H} P_2)([K(H) \otimes \phi^{-1}] \cdot [1\otimes \varphi])  \\
        =& (R\otimes K^{H} P_2)([K(H) \otimes [\phi^{-1} \cdot\varphi])  \\
        =& R\{K(H)\cdot K^{H} P_2(\phi^{-1} \cdot \varphi) \} \\
        =& R\{ K [H\cdot P_2(\phi^{-1} \cdot \varphi) ] \},
    \end{align*}
where the last equality follows from Proposition~\ref{prop:untwist}. Finally, using Definition~\ref{def:terminalkncompat} compatibility condition~(ii) we have
\begin{align*}
        M_n^\ast(G_n^\ast \cdot [1\otimes \varphi])
        =& RK\{H\cdot P_2(\phi^{-1} \cdot \varphi) \} \\
        &= (P_1 \otimes P_2)(H\otimes[\phi^{-1} \cdot \varphi])  \\ 
        &= (P_1 \otimes P_2)(H\otimes[\phi^{-1} \cdot \varphi])\\
        &= M_n(G_n \cdot [1\otimes \varphi]).
\end{align*}
\end{proof}

\begin{center}
{\large\bf SUPPLEMENTARY MATERIAL}
\end{center}

Code to reproduce the experiments and visualisations in this paper is available online: \url{https://github.com/bonStats/KnotsNonLinearSSM.jl}. We gladly acknowledge the \texttt{tidyverse} \citep{tidyverse}, \texttt{matrixStats} \citep{matrixstats}, and \texttt{patchwork} \citep{patchwork} packages in programming language \texttt{R} \citep{Rstats}. As well as the \texttt{Julia} language \citep{Julia-2017} and packages \texttt{DataFrames.jl} \citep{JSSv107i04} and \texttt{Distributions.jl} \citep{JSSv098i16,Distributions.jl-2019}.

\end{document}